\renewcommand{\paragraph}[1]{\subparagraph{#1.}}
\newcommand*{\N}{\mathbb{N}}
\renewcommand{\leq}{\leqslant}
\renewcommand{\geq}{\geqslant}
\theoremstyle{plain}
\newtheorem{theorem}{Theorem}
\newtheorem{corollary}[theorem]{Corollary}
\newtheorem{lemma}[theorem]{Lemma}
\newtheorem{proposition}[theorem]{Proposition}
\theoremstyle{definition}
\newtheorem{definition}[theorem]{Definition}
\theoremstyle{remark}
\newtheorem{observation}{Observation}
\crefname{claim}{Claim}{Claims}
\Crefname{claim}{Claim}{Claims}
\DeclareMathOperator{\opt}{{\sf OPT}}
\newcommand{\vsep}{{\sf VS}}
\newcommand{\is}{{\sf IS}}
\newcommand{\ir}{{\sf IR}}
\newcommand{\vs}{{\sf VS}}
\newcommand{\problemname}{{\sc MinGMConn}}
\newcommand{\sset}{{\mathcal S}}
\newcommand{\iset}{{\mathcal I}}
\newcommand{\rset}{{\mathcal R}}
\DeclarePairedDelimiter{\abs}{\lvert}{\rvert}
\DeclarePairedDelimiter{\intcc}{[}{]}
\newcommand*\cupdot{\mathbin{\mathaccent\cdot\cup}}
\def\mconnected/{M-connected}
\def\mconnectedlong/{Manhattan-connected}
\def\mpath/{M-path}
\def\mpaths/{M-paths}
\def\mpathlong/{Manhattan-path}
\def\mpathslong/{Manhattan-paths}
\def\BinST/{BST}
\def\BinSTlong/{\textsc{Binary Search Tree}}
\def\iiiSAT/{$3$-SAT}
\def\MGMC/{\problemname}
\def\MGMClong/{\textsc{Minimum Generalized Manhattan Connections}}
\DeclareMathOperator{\BST}{\textsf{BST}}
\newcommand{\customfootnotetext}[2]{{
		\renewcommand{\thefootnote}{#1}
		\footnotetext[0]{#2}}}
\begin{document}

\title{On Minimum Generalized Manhattan Connections}

\author{Antonios Antoniadis\thanks{University of Cologne. Work done while the author was at Saarland University \& Max-Planck Institute for Informatics, Germany, and supported by DFG grant AN 1262/1-1. \texttt{antoniadis@cs.uni-koeln.de}.}, Margarita Capretto\thanks{Universidad Nacional de Rosario, Argentina. \texttt{margyc94@gmail.com}.}, Parinya Chalermsook\textsuperscript{1}\thanks{Aalto University, Finland . \texttt{\{parinya.chalermsook,nidia.obscuraacosta,joachim.spoerhase\}@aalto.fi.}},\\Christoph Damerius\thanks{University of Hamburg, Germany. \texttt{\{christoph.damerius,peter.kling\}@uni-hamburg.de.}}, Peter Kling\footnotemark[4], Lukas Nölke\thanks{University of Bremen, Germany. \texttt{noelke@uni-bremen.de}.},\\ Nidia Obscura Acosta\footnotemark[3], Joachim Spoerhase\textsuperscript{2}\footnotemark[3]}

\date{\today}

\maketitle

\customfootnotetext{1}{Currently supported by European Research Council (ERC) under the European Union's Horizon 2020 research and innovation programme (grant agreement No. 759557) and by Academy of Finland Research Fellowship, under grant number 310415.}
\customfootnotetext{2}{Funded by European Research Council (ERC) under the European Union's Horizon 2020 research and innovation programme (grant agreement No. 759557).}

\begin{abstract}
We consider minimum-cardinality Manhattan connected sets with arbitrary demands: Given a collection of points $P$ in the plane, together with a subset of pairs of points in $P$ (which we call \textit{demands}), find a minimum-cardinality superset of $P$ such that every demand pair is connected by a path whose length is the $\ell_1$-distance of the pair.
This problem is a variant of three well-studied problems that have arisen in computational geometry, data structures, and network design: (i)~It is a node-cost variant of the classical Manhattan network problem, (ii)~it is an extension of the binary search tree problem to arbitrary demands, and (iii)~it is a special case of the directed Steiner forest problem.
Since the problem inherits basic structural properties from the context of binary search trees, an $O(\log n)$-approximation is trivial.
We show that the problem is NP-hard and  present an $O(\sqrt{\log n})$-approximation algorithm. Moreover, we provide an $O(\log\log n)$-approximation algorithm for complete $k$-partite 
demands as well as improved results for unit-disk demands and several generalizations.
Our results crucially rely on a new lower bound on the optimal cost that could potentially be useful in the context of BSTs.
\end{abstract}

\section{Introduction}

Given a collection of points $P \subseteq {\mathbb R}^2$ on the plane, the \textit{Manhattan Graph} $G_P$ of $P$ is an undirected graph with vertex set  $V(G_P)= P$ and arcs $E(G_P)$ that connect any vertically- or horizontally-aligned points.
Point $p$ is said to be \emph{\mconnectedlong/ (\mconnected/)} to point~$q$ if $G_P$ contains a shortest rectilinear path from $p$ to $q$ (i.e.\ a path of length~$||p-q||_1$).
In this paper, we initiate the study of the following problem: Given points $P \subseteq {\mathbb R}^2$ and demands~$D \subseteq P \times P$, we want to find a smallest set $Q \subseteq {\mathbb R}^2$ such that every pair of vertices in $D$ is $M$-connected in $G_{P \cup Q}$.
We call this problem \MGMClong/ (\MGMC/), see Figure~\ref{fig:problem} for an illustration.
Variants of this problem have appeared and received a lot of attention in many areas of theoretical computer science, including data structures, approximation algorithms, and computational geometry.
Below, we briefly discuss them, as well as the implications of our results in those contexts.

\begin{figure}[t]
\begin{subfigure}[t]{0.49\linewidth}
\centering
\includegraphics[width=0.98\linewidth]{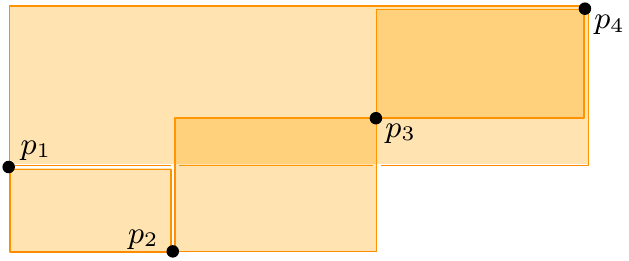}
\end{subfigure}\hfill
\begin{subfigure}[t]{0.49\linewidth}
\centering
\includegraphics[width=0.98\linewidth]{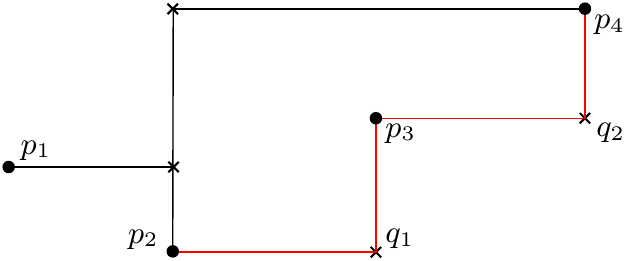}
\end{subfigure}
\caption{
    Left: A Manhattan instance with input points $P=\{p_1,p_2,p_3,p_4\}$, drawn as black disks, and demands $D=\set{(p_1,p_2),(p_1,p_4),(p_2,p_3),(p_3,p_4)}$, drawn as orange rectangles.
    Right: The Manhattan Graph $G_{P \cup Q}$ of a feasible solution $Q$, with points in $Q$ drawn as crosses. 
    Points $p_2$ and $p_4$ are Manhattan-connected via the red path
    $p_2$ - $q_1$ - $p_3$ - $q_2$ - $p_4$. Points $p_1$ and $p_3$ are not Manhattan-connected but also not a demand pair in $D$.
}\label{fig:problem}
\end{figure}

\paragraph{Binary Search Trees (BSTs)} The \textit{Dynamic Optimality Conjecture}~\cite{sleator1985self} is one of the most fundamental open problems in dynamic data structures, postulating the existence of an $O(1)$-competitive binary search tree. Despite continuing efforts and important progress for several decades (see, e.g.,~\cite{demaine2007dynamic,cole2000dynamic,demaine2009geometry,iacono2016weighted,chalermsook2015pattern} and references therein), the conjecture has so far remained elusive, with the best known competitive ratio of $O(\log \log n)$ obtained by Tango trees~\cite{demaine2007dynamic}.
Even in the offline setting, the best known 
algorithm is also a $O(\log \log n)$-approximation; the problem is not even known to be NP-hard.
Demaine, Harmon, Iacono, Kane, and P{\u{a}}tra{\c{s}}cu~\cite{demaine2009geometry} showed that approximating BST is equivalent (up to a constant in the approximation factor) to approximating the node-cost Manhattan problem with ``evolving demand'' (that is, points added to the solution create demands to all existing points).\footnote{In fact, the problem stated in~\cite{demaine2009geometry} is called {\sc MinASS} which appears different from Manhattan problem, but they can be shown to be equivalent. Please see Appendix~\ref{sec: BST and Man} for a detailed discussion on the equivalence.}

The long-standing nature of the $O(\log \log n)$ upper bound could suggest the lower bound answer.
However, the understanding of lower bound techniques for BSTs has been completely lacking: It is not even known whether the problem is NP-hard!
Our work is inspired by the following question.
\begin{quote}
Is it NP-hard to (exactly) compute a minimum cost binary search tree?
\end{quote}
We are, unfortunately, unable to answer this question.
In this paper, we instead present a proof that a natural generalization of the problem  in the geometric view (which is exactly our  \problemname) is NP-hard\footnote{Demaine et al.~\cite{demaine2009geometry} proves an NP-hardness result for \problemname~with uniform demands but allowing the input to contain multiple points on the same row. Their result is incomparable to ours.}.
We believe that our construction and its analysis could be useful in further study of the BST problem from the perspective of lower bounds.

\paragraph{Edge-Cost Manhattan Problem}
Closely related to \problemname~is the edge-cost variant of Manhattan Network~\cite{Gudmundsson}: Given $P \subseteq {\mathbb R}^2$,  our goal is to compute $Q \subseteq {\mathbb R}^2$ such that every pair in $P$ is $M$-connected in $G_{P \cup Q}$, while minimizing the total lengths of the edges used for the connections.
The problem is motivated by various applications in city planning, network layouts, distributed algorithms and VLSI circuit design, and has received attention in the computational geometry community.
Since the edge-cost variant is NP-hard~\cite{chin2011minimum}, the focus has been on approximation algorithms.
Several groups of researchers presented $2$-approximation algorithms~\cite{guo2008fast,chepoi2008rounding}, and this has remained the best known approximation ratio.
Generalizations of the edge-cost variant have been proposed and studied in two directions: In~\cite{das2015approximating}, the authors generalize the Manhattan problem to higher dimension ${\mathbb R}^d$ for $d \geq 2$. The arbitrary-demand case was suggested in~\cite{chepoi2008rounding}. An~$O(\log n)$-approximation algorithm was presented in~\cite{das2018approximating}, which remains the best known ratio.
Our \problemname~problem can be seen as an analogue of~\cite{das2018approximating} in the node-cost setting.
We present an improved approximation ratio of $O(\sqrt{\log n})$, therefore, raising the possibility of similar improvements in the edge-cost variants.

\paragraph{Directed Steiner Forests (DSF)}
\problemname~is a special case of \textit{node-cost directed Steiner forest (DSF)}: Given a directed graph $G=(V,E)$ and pairs of terminals ${\mathcal D} \subseteq V \times V$, find a minimum cardinality subset $S \subseteq V$ such that $G[S]$ contains a path from $s$ to~$t$ for all $(s,t) \in {\mathcal D}$.
DSF is known to be highly intractable, with hardness $2^{\log^{1-\epsilon} |V|}$ unless ${\sf NP} \subseteq {\sf DTIME}(n^{poly \log n})$~\cite{dodis1999design}. The best known approximation ratios are slightly sub-linear~\cite{chekuri2011set,feldman2012improved}. 
Manhattan problems can be thought of as natural, tractable special cases of DSF, with approximability between constant and logarithmic regimes.
For more details, see~\cite{das2015approximating}.

\subsection{Our Contributions}

In this paper, we present both hardness and algorithmic results for \problemname.
\begin{theorem}%
\label{thm:nphard}
The \problemname~problem is NP-hard, even for an input that contains at most one point per row and column. 
\end{theorem}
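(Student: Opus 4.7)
The plan is to prove NP-hardness by a polynomial-time reduction from a well-structured variant of 3-SAT, such as Planar 3-SAT. Given a formula $\varphi$ we will construct a point set $P \subseteq \mathbb{R}^2$ together with a demand set $D \subseteq P \times P$ and a threshold $k$ such that $\varphi$ is satisfiable iff \problemname{} on $(P,D)$ admits a feasible $Q$ with $|Q| \le k$. All input points will be placed on a sufficiently coarse grid, with row and column indices chosen pairwise distinctly; this makes the ``at most one point per row and column'' condition essentially automatic once we specify a layout, and lets us treat the position of each gadget point as a free parameter subject only to the combinatorial constraints of its gadget.

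The instance will be built from three families of gadgets. \emph{Variable gadgets} encode each $x_i$ by a demand pair $(a_i,b_i)$ whose set of shortest rectilinear paths in $G_{P\cup Q}$ essentially decomposes into two monotone staircases across a narrow rectangle; surrounding input points and auxiliary demands will make any cheapest realization of this demand use one of exactly two canonical Steiner-point configurations, which we label $x_i=\mathtt{true}$ and $x_i=\mathtt{false}$. \emph{Wire gadgets}, modeled on chains of short demand pairs whose shortest-path options are pinned down by local input points, propagate the chosen staircase to the locations of every occurrence of $x_i$ or $\bar{x}_i$. Finally, a \emph{clause gadget} for $C_j=\ell_1\vee\ell_2\vee\ell_3$ consists of a demand pair placed so that its shortest rectilinear paths intersect precisely the wire-endpoints corresponding to the true-staircases of $\ell_1,\ell_2,\ell_3$. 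The cost $k$ is set so that its feasibility is possible only when at least one of those wire-endpoints has already contributed its Steiner points, i.e.\ when at least one literal is true.

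Correctness of the reduction then splits into the usual two directions. For the ``easy'' direction, a satisfying assignment canonically produces a solution $Q$ with $|Q|\le k$ by activating the corresponding staircases and reusing their Steiner points at each clause. For the converse, we would argue via an exchange/normalization lemma: any feasible $Q$ with $|Q|\le k$ can be transformed, without increasing its size, into a canonical solution in which each variable gadget uses one of its two designated staircases; reading off those choices yields an assignment that satisfies all clauses, since any unsatisfied clause would force at least one extra Steiner point beyond the budget.

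The main obstacle is the rigidity of the variable and wire gadgets under the one-point-per-row-and-column restriction. Without that restriction one can stack coincident input points to make gadgets behave almost like discrete switches, but here every point must sit in its own row and column, which leaves room for ``diagonal'' Steiner points that might satisfy several demands cheaply in an unintended way. Overcoming this will require a careful stretching of the grid together with extra guard demands whose shortest rectilinear paths exclude such shortcut configurations, so that the only cost-optimal solutions really are the Boolean ones. I expect the bulk of the technical work, as well as the careful case analysis in the exchange lemma, to concentrate at this point.
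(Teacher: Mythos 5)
Your high-level plan (gadget reduction from 3-SAT with variable, wire, and clause gadgets) matches the paper's strategy, but the proposal has a genuine gap exactly where you yourself locate the difficulty: the soundness direction. You state that you ``would argue via an exchange/normalization lemma'' that any solution of size at most $k$ can be canonicalized, but you give no mechanism for proving such a lemma, and without one the reduction does not go through. In particular, you never say how the threshold $k$ is certified as a \emph{lower bound} on every feasible solution; if $k$ is merely the cost of the intended canonical solutions, nothing prevents a cheaper non-Boolean solution, and your own worry about ``diagonal'' Steiner points serving several demands at once is precisely the failure mode that must be excluded.

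The paper's proof supplies the missing mechanism: it sets the budget $\alpha$ equal to the number of non-$SC$ demands and shows that these demands are monotone and \emph{vertically separable}, so by \cref{lem:opt_vs_ir} every feasible solution has size at least $\alpha$. Since the budget is then exactly tight, \cref{cor:coveringsubsolutions} yields a bijection between solution points and demand rectangles, which rigidifies the solution globally: the instance and solution decompose into disjoint subinstances ($SX$, clause, connection, and variable demands), each of which must be solved with exactly as many points as it has demands. This is what makes the local structural arguments (\cref{prop:triangular_demands} forcing solutions of triangular instances onto a grid, and \cref{prop:connection_demands} forcing connection demands to be satisfied at corners) possible, and these in turn force every $S$-to-$c_j$ path through exactly one variable gadget and a single polarity of variable-clause path. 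Your proposal would need to either reproduce this tight-lower-bound argument (or some equivalent counting/charging scheme) or find another way to prove the exchange lemma; as written, the ``careful stretching of the grid together with extra guard demands'' is an aspiration rather than a proof.
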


This result can be thought of as a first step towards developing structural understanding of Manhattan connectivity with respect to lower bounds. 
We believe that such an understanding would come in handy in future study of binary search trees in the geometric view.

Next, we present algorithmic results. Due to the BST structures, an $O(\log n)$-approximation is trivial. The main ingredient in obtaining a sub-logarithmic approximation is an approximation algorithm for the case of ``few'' $x$-coordinates. 
More formally, we say that an input instance is \textit{$s$-thin} if the points in $P$ lie on at most $s$ different $x$-coordinates. 

\begin{theorem}%
\label{thm:xcoord_approx}
There exists an efficient $O(\log s)$-approximation algorithm for an $s$-thin instance~$(P,D)$ of \problemname.
\end{theorem}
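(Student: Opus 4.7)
My plan is to use divide-and-conquer on the set of $x$-coordinates of $P$. Let $x_1 < x_2 < \cdots < x_s$ be the distinct $x$-coordinates of $P$ and fix a median $m = x_{\lceil s/2\rceil}$. Partition $P$ into $P_L = \{p \in P : x(p) \leq m\}$ and $P_R = P \setminus P_L$, so each side has at most $\lceil s/2 \rceil$ distinct $x$-coordinates. Split the demand set $D$ into $D_L$ (both endpoints in $P_L$), $D_R$ (both endpoints in $P_R$), and the crossing demands $D_C$. I would recursively produce Steiner sets $Q_L$ and $Q_R$ for $(P_L,D_L)$ and $(P_R,D_R)$, separately compute a Steiner set $Q_C$ for the crossing-only instance $(P,D_C)$, and output $Q = Q_L \cup Q_R \cup Q_C$.

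The analysis rests on two structural facts. First, any Manhattan path between two points of $P_L$ is $x$-monotone and so stays in the half-plane $\{x \leq m\}$; hence restricting any feasible solution for $(P,D)$ to that half-plane yields a feasible solution for $(P_L,D_L)$, and symmetrically for $P_R$. This gives $\opt_L + \opt_R \leq \opt(P,D)$. Second, since $D_C \subseteq D$, any solution for the full instance is also feasible for $(P,D_C)$, so $\opt_C \leq \opt(P,D)$. If $\alpha(s)$ denotes the approximation ratio achieved on an $s$-thin instance and $c_0$ the factor obtained on the crossing subproblem, the recursion $\alpha(s) \leq \alpha(\lceil s/2\rceil) + c_0$ unfolds to $\alpha(s) = O(c_0 \log s)$. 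It thus suffices to obtain $c_0 = O(1)$ for the crossing-only subproblem.

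The main obstacle is designing this $O(1)$-approximation for the crossing-only instance. My plan is to restrict attention to one-bend ``corner routings'': each crossing demand $(p,q)$ with $p=(x_p,y_p)$ and $q=(x_q,y_q)$ is routed by a one-bend Manhattan path that bends either at $(x_p,y_q)$ or at $(x_q,y_p)$, the two off-diagonal corners of its bounding rectangle. Choosing a minimum set of such corner points covering every demand is precisely minimum vertex cover in a bipartite graph whose left (resp.\ right) vertices are corners with $x$-coordinate $\leq m$ (resp.\ $> m$), and each demand contributes an edge between its two corners. This is polynomial-time solvable via K\"onig's theorem (maximum bipartite matching). To bound this cost against $\opt_C$, I would then prove a charging lemma asserting that every feasible (unrestricted) solution can be transformed into a corner-routing solution at only an $O(1)$-factor loss in Steiner cost.

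Proving the charging lemma cleanly is the hardest step, because OPT's bends can be shared across many demands in intricate staircase patterns, so a naive ``snap each demand's bend to one of its two corners'' argument tends to lose too much. If the direct charging turns out to be too weak, my back-up plan is to write an LP relaxation restricted to canonical grid Steiner positions (intersections of input rows and columns) together with a region-growing / ball-peeling rounding; the monotone-staircase structure of Manhattan paths, combined with the fact that all demands cross the separator, should allow bounding the integrality gap by $O(1)$ on the crossing subproblem.
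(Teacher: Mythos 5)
Your divide-and-conquer skeleton is sound: Manhattan paths are $x$-monotone, so the restrictions of an optimal solution to the two half-planes are feasible for $(P_L,D_L)$ and $(P_R,D_R)$, giving $\opt_L+\opt_R\leq\opt$, and the recursion $\alpha(s)\leq\alpha(\lceil s/2\rceil)+c_0$ would indeed yield $O(\log s)$ if the crossing subproblem admitted a $c_0=O(1)$ approximation. (This is essentially the ``natural vertical divide-and-conquer'' mentioned at the start of Section~\ref{sec:log_xcoords_approx}.) The genuine gap is in your solver for the crossing instance: the charging lemma you need is \emph{false}. Restricting to one-bend corner routings can cost $\Omega(n)$ times the optimum. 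Take $p_i=(-i,-i)$ for $i\in[n]$ on the left and $q_j=(j,j)$ for $j\in[n]$ on the right, with all $n^2$ demands $(p_i,q_j)$ crossing the separator. The two admissible corners of $R(p_i,q_j)$ are $(-i,j)$ and $(j,-i)$, and these $2n^2$ corner points are pairwise distinct, so your bipartite graph is a perfect matching on $n^2$ edges and every vertex cover has size $n^2$. Yet placing the $2n$ projections $(0,-i)$ and $(0,j)$ on a single column between the groups satisfies every demand via $p_i\to(0,-i)\to(0,j)\to q_j$, so $\opt_C\leq 2n$. Hence no $O(1)$-factor conversion to corner routings exists, and $c_0$ cannot be bounded this way; the LP back-up is too unspecified to assess. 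A workable repair is to project every active point onto the separator (cost $O(|P|)=O(\opt)$ per level, using that each connected demand component forces $|P_i|-1$ points as in Lemma~\ref{lemma:components}) and then solve the resulting two-column instance---but that two-column problem is exactly where the real work lies, and it is not a constant number of bipartite matchings.

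For comparison, the paper proves the theorem by an orthogonal, \emph{horizontal} divide-and-conquer: it first computes a minimum hitting set $\rset$ of rows for the intervals $[y(p),y(q)]$, then recursively splits on the median row, placing the two points $(x(p),m)$ and $(x(q),m)$ for each demand straddling that row. The cost is charged to witnesses (left/right sides of demand rectangles) whose intersection graph has clique number $O(\log s)$, yielding a solution of size $O(\log s)\cdot\is(P,D)$ where $\is$ is the boundary-independent-set lower bound. That stronger guarantee---against $\is\leq\vs\leq 2\opt$ rather than against $|P|$ or $\opt$ directly---is what makes the result usable in the subadditivity argument for Theorem~\ref{thm:general_approx}; a repaired version of your vertical scheme would prove Theorem~\ref{thm:xcoord_approx} as literally stated but would not serve that later purpose.
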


In fact, our algorithm produces solutions with $O(\log s \cdot \is(P,D))$ many points, where $\is(P,D) \leq \opt(P,D)$ is the cardinality of a \emph{boundary independent set} (a notion introduced below).
This theorem is tight up to a constant factor, since there exists an input instance $(P,D)$ on $s$ different columns such that the $\opt(P,D) = \Omega(\is(P,D) \log s)$; see Appendix~\ref{sec:tightness of is}.

This theorem, along with the boundary independent set analysis, turns out to be an important building block for our approximation  result, which achieves an approximation ratio that is sublogarithmic in $n$.
\begin{theorem}%
\label{thm:general_approx}
There is an efficient $O(\sqrt{\log n})$-approximation algorithm for \problemname.
\end{theorem}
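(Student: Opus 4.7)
The plan is to reduce the general problem to the thin-instance machinery of \Cref{thm:xcoord_approx} via a vertical decomposition whose slab width is balanced against the recursion depth. Set $s=2^{\sqrt{\log n}}$, so that $\log s=\sqrt{\log n}$. Partition the set of distinct input $x$-coordinates into $t=n/s$ consecutive groups of $s$, and call the resulting vertical strips \emph{slabs} $S_1,\dots,S_t$.

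First, I would handle \emph{intra-slab} demands. For each slab $S_j$, run the $s$-thin algorithm of \Cref{thm:xcoord_approx} on the sub-instance $(P_j,D_j)$ consisting of demand pairs whose endpoints both lie in $S_j$. Each such sub-instance is $s$-thin, so the returned solution has size at most $O(\log s)\cdot\is(P_j,D_j)=O(\sqrt{\log n})\cdot\is(P_j,D_j)$. Because boundary independent sets living in vertically disjoint slabs can be concatenated into a single boundary independent set of the global instance $(P,D)$, summing over all slabs gives total intra-slab cost $O(\sqrt{\log n})\cdot\is(P,D)\le O(\sqrt{\log n})\cdot\opt(P,D)$.

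Second, I would handle \emph{inter-slab} demands. Compress each slab $S_j$ to a single \emph{trunk} column $c_j$ by introducing, for every cross-slab endpoint $p\in S_j$, a Steiner point at the intersection of the row of $p$ with $c_j$; this routes $p$ along its own row to the trunk without lengthening any $\ell_1$-shortest path. The number of Steiner points added inside $S_j$ can be bounded locally against the boundary independent set contained in $S_j$, so the projection cost is again within the $O(\sqrt{\log n})\cdot\is(P,D)$ budget. After projection, the cross-slab demands live on a reduced instance with only $t=n/s$ distinct $x$-coordinates, to which the same decomposition is applied recursively.

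Third, I would analyse the balance. The recursion has depth $\log n/\log s=\sqrt{\log n}$, and at every level the per-level cost is $O(\sqrt{\log n})$ times an independent-set quantity at that level. The main obstacle is to prove that the independent-set lower bounds invoked across the $\sqrt{\log n}$ levels are essentially \emph{disjoint} as subsets of a single global boundary independent set of $(P,D)$; otherwise a naive summation yields only the trivial $O(\log n)$ bound. The decisive step is therefore a structural charging argument on boundary independent sets at different scales, leveraging the novel $\is$-based lower bound introduced for \Cref{thm:xcoord_approx}, to conclude that the total cost of all levels combined is $O(\sqrt{\log n})\cdot\is(P,D)\le O(\sqrt{\log n})\cdot\opt(P,D)$.
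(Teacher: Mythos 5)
Your high-level blueprint coincides with the paper's: vertical strips with $\log s=\sqrt{\log n}$, the thin-instance algorithm of \cref{thm:xcoord_approx} at every level, and a recursion of depth $\log n/\log s=\sqrt{\log n}$. You invert the orientation, though: you solve the \emph{intra}-slab instances directly with the thin algorithm and recurse on the compressed \emph{inter}-slab instance, whereas the paper recurses on the intra-strip instances and applies the thin algorithm to the projected inter-strip instance (which has only $s$ columns by construction). Either orientation could in principle work, but the step you explicitly defer --- the ``structural charging argument on boundary independent sets at different scales'' --- is the heart of the proof, and the route you sketch for it does not go through. The boundary independent sets certifying the cost at different levels cannot in general be packed into one global boundary independent set of $(P,D)$: after compression the level-$\ell$ segments live on trunk columns shared by many projected points, and even after lifting them back to the original columns, a demand handled at level $1$ and a demand handled at level $2$ that share a left endpoint $p$ produce two overlapping segments on the column $x(p)$. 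The paper never sums $\is$ across levels. Instead it combines $\is(\cdot)\le\vsep(\cdot)$ (\cref{lem:bound-indep-sets}) with the superadditivity of $\vsep$ under the strip decomposition (\cref{lem:strip-separability}, iterated in \cref{cor:strip-separability}), which shows that the $\vsep$-values of \emph{all} instances fed to the thin algorithm, over all recursion levels, sum to at most $\vsep(P,D)\le 2\opt(P,D)$; hence those calls cost $O(\log s)\cdot\opt(P,D)$ in total. This is exactly why \cref{thm:xcoord_approx} is stated with the guarantee $O(\log s)\cdot\is(P,D)$ rather than $O(\log s)\cdot\opt(P,D)$ --- a black-box $O(\log s)$-approximation would not compose across levels. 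Your proposal is missing this lower-bound bookkeeping.

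Two further flaws in the projection step. First, a single trunk column per slab does not preserve shortest paths: if the trunk lies to the left of $p$ while $p$'s demand partner lies to the right, routing $p$ through the trunk lengthens the $\ell_1$-path; you must project each cross-slab endpoint to \emph{both} slab boundaries, as the paper does. Second, the number of projected points cannot be charged ``locally against the boundary independent set contained in $S_j$'' --- a slab may contain many cross-slab endpoints and no intra-slab demands at all. The paper charges the $O(|P|)$ projection points per level globally against $\opt(P,D)$, accumulating an $O(\log n/\log s)\cdot\opt(P,D)$ term over the recursion; balancing this term against the $O(\log s)\cdot\opt(P,D)$ term from the thin-instance calls is what forces the choice $s=2^{\sqrt{\log n}}$ and yields the final $O(\sqrt{\log n})$ ratio.
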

This gives an improvement over the trivial $O(\log n)$-approximation and may grant some new hope with regards to improving the $O(\log n)$-approximation for the edge-cost variants. 

We provide improved approximation ratios for settings when the graph formed by the demands has a special structure.
\begin{theorem}\label{thm:complete-bipartite}
There is an $O(\log\log n)$-approximation algorithm for \MGMC/ when the demands form a complete $k$-partite graph. 
\end{theorem}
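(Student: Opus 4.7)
The plan is to reduce the complete $k$-partite case to a thin subinstance and then invoke \Cref{thm:xcoord_approx}. First, sort $P$ by $x$-coordinate and partition the columns into $\Theta(n/\log n)$ consecutive blocks, each spanning at most $\log n$ distinct columns. Each block induces a $(\log n)$-thin subinstance of \MGMC/ (retaining only demands between its own points), and applying \Cref{thm:xcoord_approx} block by block produces an $O(\log \log n)$-approximation for the within-block demands, since the sum of the block optima is a lower bound on $\opt$.

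For the cross-block demands, the plan is to install a ``trunk'' structure that funnels inter-block traffic: for each block $B$, add Steiner points at the leftmost and rightmost columns of $B$ in every row used inside $B$, and vertically connect these boundaries; between consecutive blocks, stitch the two adjacent trunks horizontally. Any cross-block demand pair $(p,q)$ can then be routed by first entering the trunk of its own block, travelling along the trunks of intermediate blocks, and finally entering the block of $q$. The complete $k$-partite structure is what makes such a trunk worthwhile, since it guarantees that whenever two blocks share a class they share many demand pairs.

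The main obstacle is to bound the total cost of the trunks and inter-block stitching by $O(\log \log n) \cdot \opt$. The plan is to extend the boundary independent set bound $\is(P,D) \le \opt(P,D)$ underlying \Cref{thm:xcoord_approx} to the complete multipartite setting: whenever two distant blocks $B, B'$ share many classes, the resulting inter-block demand pairs are dense enough that any feasible solution must already expose $\Omega(|B| + |B'|)$ Steiner points near the respective block boundaries, charging the cost of our trunks against $\opt$ up to constants. Formalizing this charging---making the boundary independent set sensitive both to the multipartite coloring and to the block partition of columns---is the technically delicate part, and the complete $k$-partite structure is essential, since it is precisely what prevents two blocks from sharing many classes without also creating many forced demands across them.
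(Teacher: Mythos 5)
There is a genuine gap, and it is exactly where you flagged ``the technically delicate part.'' Your trunk construction adds $\Theta(n)$ Steiner points (roughly two per input point), so the whole argument hinges on showing $n = O(\opt(P,D))$, or at worst $n = O(\opt \cdot \log\log n)$. That is \emph{false} for complete $k$-partite instances in general: if a class $S_j$ has many points stacked in a single column, all but the nearest one to a given $S_i$-point are redundant, and $\opt$ can be arbitrarily smaller than $n$. So you cannot simply charge the trunk against $\opt$; you first have to \emph{sparsify} the instance. That sparsification is precisely the content of the paper's \cref{lem:sparsify-complete-bipartite}: it defines \emph{essential demands}, iteratively deletes redundant points, and then shows (via a clique-size-$4$ bound on the witness interval graph, which crucially uses the $k$-partite coloring) that the surviving point set $\tilde P$ satisfies $|\tilde P| = O(\is(\tilde P, \tilde D)) = O(\opt)$. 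Your sketch (``dense inter-block demand pairs force $\Omega(|B|+|B'|)$ Steiner points near boundaries'') gestures at something like this but does not substitute for it, and the intuition attached to it is also off: in a complete $k$-partite graph, two blocks sharing a class do \emph{not} share demands between those shared-class points, since demands only run between distinct classes.

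There are also secondary issues with the trunk itself. Consecutive blocks occupy disjoint sets of rows (the model forbids shared rows), so the trunk points on $B_i$'s right boundary and $B_{i+1}$'s left boundary are never horizontally aligned; ``stitching the two adjacent trunks horizontally'' would require yet more Steiner points, and the cost of that cascade is not controlled. Moreover, with $n/\log n$ blocks, treating the collapsed inter-block instance as a thin instance and invoking \cref{thm:xcoord_approx} would only give an $O(\log(n/\log n)) = O(\log n)$ factor, so the trunk is doing essential work that is never justified. By contrast, the paper routes around all of this by combining the sparsification in \cref{lem:sparsify-complete-bipartite} with the generic framework of \cref{thm: log log n} (subadditivity of $\vsep$ plus a sparsification oracle imply $O(\log\log n)$), which already handles the inter-strip recursion and its cost accounting. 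If you want to salvage your approach, the first step would be to prove a sparsification lemma equivalent in strength to the paper's; once you have $|\tilde P| = O(\opt)$, the $O(n)$ trunk cost becomes affordable, though you would still need to make the trunk/stitching construction precise and verify it yields genuine shortest rectilinear paths.
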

Another set of such results concerns settings where the demand graph is derived from geometry.
Here, points~$p,q$ form a demand if they are within a certain distance $r$ (possibly dependent on $p$ and $q$) of each other.
\begin{theorem}%
\label{thm:disk_approx}
For unit-disk demands, \problemname~admits an $O(1)$-approximation. For two-disk demands, we can achieve an $O(\log\log n)$-approximation.
For (general) disk demands with maximal radius ratio $\Delta$, there is an $O(\log \Delta)$-approximation.
\end{theorem}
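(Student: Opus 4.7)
The plan is to handle the three sub-claims of \Cref{thm:disk_approx} via shifted-grid decompositions, using the results of \Cref{thm:xcoord_approx} and \Cref{thm:complete-bipartite} as building blocks.

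\textbf{Unit-disk case.} I would apply a standard shifted-grid argument: tile the plane with axis-aligned cells of side equal to a constant multiple of the common disk radius $r$, and consider $O(1)$ translates of the grid. Every demand pair lies entirely inside a single cell of some shift, so it suffices to solve each cell independently and union the resulting solutions, losing only a constant factor from the shifts. Within one cell the instance has bounded aspect ratio, since every demand pair has both coordinate differences at most $O(r)$. The core step is to establish that, in this bounded-aspect-ratio regime, the boundary independent set lower bound $\is$ of \Cref{thm:xcoord_approx} is tight up to a constant. I would prove this by exhibiting a feasible construction that adds at most one Steiner corner per demand and then charging these corners, via a packing argument exploiting the uniform scale, to members of a boundary independent set of size $\Omega(\opt)$. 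Combined with $\is \le \opt$, this yields the desired $O(1)$-approximation.

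\textbf{General disk case.} For radii with maximal ratio $\Delta$, I would partition the demands by radius into $O(\log \Delta)$ geometric classes $[2^i,2^{i+1})$. The radii within each class agree up to a factor of two, so each class reduces (after a constant-factor rescaling) to a unit-disk sub-instance; the unit-disk algorithm then yields an $O(1)$-approximation per class. The union of the resulting solutions is feasible for the full instance, and each class is a sub-instance with optimum at most $\opt$, giving total cost $O(\log \Delta) \cdot \opt$.

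\textbf{Two-disk case.} The generic argument above only gives $O(\log \Delta)$, which may be as large as $\Theta(\log n)$. To sharpen this to $O(\log \log n)$, I would combine the shifted-grid decomposition with \Cref{thm:complete-bipartite}. After tiling at the larger radius scale, small-radius demands inside a cell are handled directly by the unit-disk algorithm from the first part. Large-radius demands span constantly many neighbouring cells and, after grouping the points in each cell into a single cluster, induce a demand graph that is essentially complete $k$-partite between these clusters; \Cref{thm:complete-bipartite} then yields an $O(\log \log n)$-approximation on this derived instance, which dominates the cost.

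The main obstacle is the unit-disk step. The generic $O(\log s)$ guarantee of \Cref{thm:xcoord_approx} is too weak here, and nothing a priori forces $\is$ to control $\opt$ up to a constant factor. The crux is a packing lemma establishing $\is = \Omega(\opt)$ inside a cell of unit diameter; getting this right is essential, since any looseness propagates directly to the two-disk and the general-disk bounds.
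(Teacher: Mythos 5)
There is a genuine gap, and it sits exactly where you flag it: the ``packing lemma'' asserting $\is = \Omega(\opt)$ inside a cell of unit diameter is false. The problem is invariant under independent monotone rescalings of the two axes (only the coordinate orders matter), so ``bounded aspect ratio'' buys you nothing: a single cell can contain an arbitrary permutation instance with uniform demands. For such instances the paper's own tightness discussion (Appendix~\ref{sec:tightness of is}) gives $\opt(P,D) = \Omega(\is(P,D)\log s)$ with $\is(P,D)=\Theta(n)$ --- concretely, any left boundary independent set contains at most one segment $\lambda(p,\cdot)$ per left endpoint $p$ (they all share the point $(x(p),y(p))$), so $\is = O(n)$, while $\opt=\Theta(n\log n)$ for, e.g., bit-reversal permutations. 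Your fallback construction does not rescue this: adding one Steiner corner per demand is feasible but costs $|D|$, which is $\Theta(n^2)$ for uniform demands against $\opt=O(n\log n)$, and there is no boundary independent set of that size to charge to. The paper sidesteps all of this by choosing cells of side $r/2$, so that \emph{every} pair inside a cell is a demand; the within-cell sub-instances are then \emph{uniform} and the known greedy $2$-approximation for uniform demands (Theorem~\ref{thm: uniform 2apx}, charging against independent rectangles rather than $\is$) applies. Inter-cell demands are handled by projecting each point onto the $O(1)$ nearby grid lines, paid for by the separate lower bound $\opt\ge |P|-1$ for connected demand graphs (Lemma~\ref{lemma:components}) --- a bound your write-up does not invoke but which is essential to the paper's accounting.

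A secondary problem is the general-disk step. Partitioning demands into radius classes does not produce unit-disk sub-instances: a unit-disk instance requires $(p,q)\in D$ \emph{if and only if} $\|p-q\|_1\le r$, whereas a single radius class contains only a subset of the close pairs (pairs whose combined radius lands in another class are missing). Since the unit-disk algorithm relies on within-cell demands being uniform, it does not apply to such sub-relations --- arbitrary subsets of uniform demands are exactly the hard general case. The paper instead keeps all demands together, lays down $O(\log\Delta)$ nested grids at geometric scales, projects every point onto $O(1)$ lines per scale (cost $O(n\log\Delta)=O(\opt\log\Delta)$ by Lemma~\ref{lemma:components}), and uses the greedy uniform algorithm only at the smallest scale, where uniformity genuinely holds. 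Your two-disk sketch is closest in spirit to the paper's (grid at the larger scale, cross demands become complete bipartite per cell, then invoke the complete $k$-partite result), but it inherits the broken unit-disk subroutine for the same-class demands.
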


\subsection{Overview of Techniques}

The NP-hardness proof is based on a reduction to \iiiSAT/.
In contrast to the uniform case of \problemname, the non-uniform case allows us to encode the structure of a \iiiSAT/ formula in a geometrical manner: we can use demand rectangles to form certain \enquote{paths} (see \cref{fig:construction:overview}).
We exploit this observation in the reduction design by translating clauses and variables into \emph{gadgets}, rectangular areas with specific placement of input points and demands (see \cref{fig:gadgets}).
Variable gadgets are placed between clause gadgets and a dedicated \emph{starting point}.
The crux is to design the instance such that a natural solution to the intra- and inter-gadget demands connects the starting point to either the positive or the negative part of each variable gadget. And, the \mpaths/ leaving a variable gadget from that part can all reach \emph{only} clauses with a \emph{positive} appearance or \emph{only} clauses with a \emph{negative} appearance of that variable respectively.
We refer to such solutions as \emph{boolean solutions}, as they naturally correspond to a variable assignment.
Additional demands between the starting point and the clause gadgets are satisfied by a boolean solution if and only if it corresponds to a satisfying variable assignment.
The main part of the proof is to show that any small-enough solution is a boolean solution.

In the study of any optimization (in particular, minimization) problem, one of the main difficulties is to come up with a strong lower bound on the cost of an optimal solution that can be leveraged by algorithms.
For binary search trees, many such bounds were known, and the strongest known lower bound is called an {\em independent rectangle bound (IR)}.
However, IR is provably too weak for the purpose of \problemname, that is, the gap between the optimal and IR can be as large as~$\Omega(n)$.
We propose to use a new bound, which we call {\em vertically separable demands (VS)}.
This bound turns out to be relatively tight and plays an important role in both our hardness and algorithmic results.
In the hardness result, we use our VS bound to argue about the cost of the optimal in the soundness case.

Our $O(\sqrt{\log n})$-approximation follows the high-level idea of~\cite{chalermsook2019pinning}, which presents a geometric $O(\log \log n)$-approximation for BST. 
Roughly speaking, it argues (implicitly) that two combinatorial properties, which we refer to as 
(A) and~(B), are sufficient for the existence of an $O(\log \log n)$-approximation: (A) the lower bound function is ``subadditive'' with respect to a certain instance partitioning, and (B) the instance is ``sparse'' in the sense that for any input $(P,D)$, there exists an equivalent input~$(P',D')$ such that $|P'| = O(\opt(P,D))$.
In the context of BST, 
(A) holds for the Wilber bound and 
(B) is almost trivial to show.

In the \problemname~problem, we prove that Property (A) holds for the new VS bound. 
However, proving Property (B) seems to be very challenging. We instead show a corollary of Property~(B): There is an $O(\log s)$-approximation algorithm for \problemname,~where $s$ is the number of columns containing at least one input point.
The proof of this relaxed property 
is the main new ingredient of our algorithmic result and is stated in Theorem~\ref{thm:xcoord_approx}.
Finally, we argue that this weaker property still suffices for an $O(\sqrt{\log n})$-approximation algorithm.
For completeness, in Sections~\ref{sec:complete-bipartite} and~\ref{sec:disc_demands}, we discuss special cases (see Theorems~\ref{thm:complete-bipartite} and~\ref{thm:disk_approx}) where we prove that Property (B) holds and thus an $O(\log \log n)$-approximation exists.

\subsection{Outlook and Open Problems}

Inspired by the study of structural properties of Manhattan connected sets and potential applications in BSTs, we initiate the study of \problemname~by proving NP-hardness and giving several algorithmic results.

There are multiple interesting open problems. 
First, can we show that the BST problem is NP-hard? We hope that our construction and analysis using the new VS bound would be useful for this purpose.
Another interesting open problem is to obtain a $o(\log n)$-approximation for the edge-cost variant of the generalized Manhattan network problem.

Finally, it can be shown that our VS bound is sandwiched between OPT and IR. It is an interesting question to study the tightness of the VS bound when estimating the value of an optimal solution. 
Can we show that VS is within a constant factor from the optimal cost of BST? Can we approximate the value of VS efficiently within a constant factor?

\section{Model \& Preliminaries}

Let $P \subseteq {\mathbb R}^2$ be a set of points on the plane.
We say that points $p,q \in P$ are \textit{\mconnectedlong/ (\mconnected/)} in $P$ if there is a sequence of points $p= x_0, x_1,\ldots, x_k = q$ such that
\begin{enumerate*}[(i)]
\item the points~$x_i$ and $x_{i+1}$ are horizontally or vertically aligned for $i = 0,\ldots, k-1$, and
\item the total length satisfies~$\sum_{i=0}^{k-1} ||x_i - x_{i+1}||_1 = ||p-q||_1$.
\end{enumerate*}

In the \emph{minimum generalized Manhattan connections} (\problemname) \emph{problem}, we are given a set of \emph{input points} $P$ and their placement in a rectangular grid with integer coordinates such that there are no two points in the same row or in the same column.
Additionally, we are given a set~$D \subseteq \set{(p, q) | p, q \in P}$ of \emph{demands}.
The goal is to find a set of points $Q$ of minimum cardinality such that $p$ and $q$ are \mconnected/ with respect to $P \cup Q$ for all $(p, q) \in D$.
Denote by $\opt(P,D)$ the size of such a point set.
We differentiate between the points of $P$ and $Q$ by calling them \emph{input points} and \emph{auxiliary points}, respectively.
Since being \mconnected/ is a symmetrical relation, we typically assume $x(p) < x(q)$ for all $(p, q) \in D$.
Here, $x(p)$ and $y(p)$ denote the $x$- and $y$-coordinate of a point $p$, respectively.
In our analysis, we sometimes use the notations~$\intcc{n} \coloneqq \set{1, 2, \dots, n}$ and~$\intcc{n}_0 \coloneqq \intcc{n} \cup \set{0}$, where $n \in \N$.

\paragraph{Connection to Binary Search Trees}
In the \emph{uniform} case where all pairs of input points are connected by a demand, i.e.~$D = \set{(p, q) | p, q \in P }$, this problem is intimately connected to the \BinSTlong/ (\BinST/) problem in the geometric model~\cite{demaine2009geometry}.
Here, we are given a point set $P$ and the goal is to compute a minimum set $Q$ such that \emph{every pair} in $P \cup Q$ is \mconnected/ in $P \cup Q$.
Denote by $\BST(P)$ the optimal value of the \BinST/ problem.

\paragraph{Independent Rectangles and Vertically Separable Demands}
Following Demaine et al.~\cite{demaine2009geometry}, we define the independent rectangle number which is 
a lower bound on $\opt(P,D)$.
For a demand $(p,q) \in D$, denote by $R(p,q)$ the (unique) axis-aligned closed rectangle that has~$p$ and~$q$ as two of its corners.
We call it the \emph{demand rectangle} corresponding to $(p,q)$.
Two rectangles~$R(p,q),R(p',q')$ are called \emph{non-conflicting} if none contains a corner of the other in its interior.
We say a subset of demands $D' \subseteq D$ is \emph{independent}, if all pairs of rectangles in $D'$ are non-conflicting.
Denote by $\ir(P,D)$ the maximum integer $k$ such that there is an independent subset $D'$ of size $k$. We refer to $k$ as the {\em independent rectangle number.}

For uniform demands, the problem admits a $2$-approximation. 
Here, the independent rectangle number plays a crucial role.
Specifically, it was argued in Harmon's PhD thesis~\cite{harmon2006new} that a natural greedy algorithm  costs at most the independent rectangle number and thus yields a $2$-approximation.
In our generalized demand case, however, the independent rectangle number turns out to be a bad estimate on the value of an optimal solution.
Instead, we consider the notion of vertically separable demands, used implicitly in~\cite{demaine2009geometry}.

We say that a subset of demands $D' \subseteq D$ is {\em vertically separable} if there exists an ordering $R_1,R_2,\ldots, R_k$ of its demand rectangles
and vertical line segments $\ell_1,\ell_2\ldots,\ell_k$ such that $\ell_i$ connects the respective interiors of top and bottom boundaries of $R_i$ and does not intersect any $R_j$, for $j>i$.
For an input $(P,D)$, denote by $\vs(P,D)$ the maximum cardinality of such a subset. 
We call a set of demands $D$ \emph{monotone}, if 
either $y(p)<y(q)$ for all $(p,q) \in D$ or $y(p)>y(q)$ for all $(p,q) \in D$.
We assume the former case holds as both are symmetrical.
In the following, we argue that $\vs$ is indeed a lower bound on $\opt$ (proof in Appendix~\ref{sec:proof of new lower bound}).

\begin{lemma}[\cite{demaine2009geometry}]
\label{lem:opt_vs_ir}
Let $(P,D)$ be an input for \problemname.
If $D$ is monotone, then we have $\ir(P,D) \leq \vs(P,D) \leq \opt(P,D)$.
Thus, in general, $\frac{1}{2}\ir(P,D)\leq \vs(P,D) \leq 2\cdot\opt(P,D)$.
\end{lemma}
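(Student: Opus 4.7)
The plan is to prove three sub-claims and combine them: $(\text{i})$ for monotone $D$, $\ir(P,D)\le\vs(P,D)$; $(\text{ii})$ for monotone $D$, $\vs(P,D)\le \opt(P,D)$; and $(\text{iii})$ the general bounds follow from $(\text{i})$ and $(\text{ii})$ by partitioning $D$ into its two natural monotone subfamilies $D=D_\uparrow\cupdot D_\downarrow$ (ascending vs.\ descending).

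For $(\text{i})$, given an independent monotone family $R_1,\dots,R_k$, I would construct a VS ordering greedily. At each iteration pick the rectangle $R$ whose lower-left $x$-coordinate is maximum and place its chord $\ell$ just to the right of that left boundary (inside $R$). The key structural observation is that for same-orientation monotone rectangles, independence (no corner in the strict interior of another) strongly restricts overlap: any other rectangle $R'$ with $x_{\min}(R')\le x_{\min}(R)$ that reached into the thin vertical strip immediately right of $x_{\min}(R)$ at $R$'s $y$-range would force a corner of $R$ or $R'$ into the other's interior, contradicting independence. Hence $\ell$ avoids all remaining rectangles, $R$ can be peeled off, and the argument recurses.

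For $(\text{ii})$, fix an optimal $Q^*$ and a VS family $R_1,\dots,R_k$ with chords $\ell_1,\dots,\ell_k$. For each $i$ take a shortest Manhattan path $\pi_i\subseteq G_{P\cup Q^*}$ from $p_i$ to $q_i$; monotonicity forces $\pi_i\subseteq R_i$ and a crossing of the column of $\ell_i$. After perturbing $\ell_i$ within $R_i$ (allowed by the open-interior definition of VS) so that its column is free of input points, the plan is to select $v_i\in Q^*\cap\ell_i$: if $\pi_i$ has a vertex on $\ell_i$ it is necessarily an auxiliary point (its column carries no input point, and the endpoints $p_i,q_i$ lie off $\ell_i$), and otherwise $\pi_i$ crosses $\ell_i$ through a horizontal edge whose two endpoints share a row, so at least one of them is auxiliary and can be relocated onto $\ell_i$'s column without changing $|Q^*|$ or feasibility. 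Distinctness is immediate from the VS property: for $i<j$ we have $\ell_i\cap R_j=\emptyset$, so $v_i\in \ell_i$ cannot coincide with $v_j\in R_j$. This produces an injection $\{R_1,\dots,R_k\}\hookrightarrow Q^*$ and gives $k\le |Q^*|=\opt(P,D)$. I expect the main technical hurdle to be the horizontal-crossing case—guaranteeing that one can indeed realize $v_i$ on $\ell_i$ as an auxiliary point without double-booking across different~$i$—and I anticipate this will require either a sharper column-perturbation argument aligned with auxiliary vertices of $\pi_i$, or a local modification of $Q^*$ that moves one of the straddling auxiliary points onto $\ell_i$ globally.

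For the general case $(\text{iii})$, split $D=D_\uparrow\cupdot D_\downarrow$ and observe that every VS (respectively IR) family $D'\subseteq D$ restricts to a VS (IR) family $D'\cap D_\uparrow$ and $D'\cap D_\downarrow$ in the corresponding monotone subinstances (using the restricted ordering and chords, respectively the restricted rectangle set), while $\opt$ and $\vs$ are monotone under the demand set. Combining these with $(\text{i})$ and $(\text{ii})$ yields $\vs(P,D)\le \vs(P,D_\uparrow)+\vs(P,D_\downarrow)\le \opt(P,D_\uparrow)+\opt(P,D_\downarrow)\le 2\opt(P,D)$, and analogously $\ir(P,D)\le \ir(P,D_\uparrow)+\ir(P,D_\downarrow)\le \vs(P,D_\uparrow)+\vs(P,D_\downarrow)\le 2\vs(P,D)$, completing the proof.
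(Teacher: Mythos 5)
Your reduction in~(iii) to the two monotone subfamilies is correct, but both~(i) and~(ii) have genuine gaps. For~(i), the greedy rule ``pick the rectangle with maximum lower-left $x$-coordinate and place its chord just right of its left edge'' is wrong, and the structural observation backing it is false. Take ascending demands $R$ with corners $(1,0),(2,3)$ and $R'$ with corners $(0,1),(3,2)$. They are independent (no corner of either lies in the other's interior): $R'$ crosses $R$ left-to-right, while $R$ crosses $R'$ top-to-bottom. Your rule selects $R$ (larger lower-left $x$) and places a chord at $x=1+\varepsilon$ spanning $R$'s $y$-range $[0,3]$; that chord passes through $R'$, which occupies the entire column over $y\in[1,2]$. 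In fact \emph{every} vertical chord of $R$ hits $R'$, so $R$ cannot be first in the order at all---the only valid order cuts $R'$ first, with a chord at some $x<1$. This is exactly why the paper's proof orders by \emph{decreasing width}: the widest independent rectangle cannot be crossed left-to-right, which excludes this configuration, and the remaining degeneracy (two rectangles sharing a lower-left corner) is handled separately by placing the chord just right of the narrower rectangle's right edge. Your claim that such an overlap ``would force a corner into the other's interior'' simply does not hold for the left-right crossing pattern.

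For~(ii) you rightly flag the horizontal-crossing case as the obstacle, but the proposed fix---relocating an auxiliary point of $Q^*$ onto $\ell_i$'s column---is not justified: moving a point of $Q^*$ can destroy Manhattan connectivity for other demands routed through it, and since each $v_j$ is read off a path inside $Q^*$, the argument becomes circular once points start moving. The paper does not modify $Q^*$ at all. It associates with each $R_i$ the \emph{pair} $(a_i,b_i)\in P\cup Q^*$ of consecutive path vertices immediately to the left and right of $\ell_i$ (hence on a common row), and uses vertical separability to show that the pairs sharing a row are laminar: $k$ such pairs determine at least $k+1$ distinct points on that row, of which at most one is an input point (input points lie on distinct rows), yielding at least $k$ distinct auxiliary points charged injectively to the $k$ demands. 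No relocation, no feasibility re-argument. To repair your~(ii), replace the relocation by a per-row counting/charging argument of this kind.
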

The charging scheme described in the proof of \cref{lem:opt_vs_ir} injectively maps a demand rectangle~$R$ to a point of the optimal solution that lies in $R$.
This implies the following corollary.
\begin{corollary}%
	\label{cor:coveringsubsolutions}
	Let $D$ be a vertically separable, monotone set of demands and $Q$ a feasible solution. If $\abs{Q} = \abs{D}$, there is a bijection $c\colon Q \to D$ such that $q \in R(c(q))$ for all $q \in Q$.
	In particular, for~$Q' \subseteq Q$ there are at least $\abs{Q'}$ demands from $D$ that each covers some $q \in Q'$.
\end{corollary}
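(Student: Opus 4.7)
The plan is to derive the corollary almost immediately from the charging scheme used in the proof of \cref{lem:opt_vs_ir}, extracting slightly more structure than just the cardinality inequality $\vs(P,D) \leq \opt(P,D)$.

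First, I would recall that the argument behind \cref{lem:opt_vs_ir} actually establishes, for any vertically separable and monotone $D$ and any \emph{feasible} $Q$ (not only an optimal one), an injection $\phi\colon D \to Q$ with the property that $\phi(R) \in R$ for every demand rectangle $R \in D$. Concretely, processing the demand rectangles in the order $R_1, R_2, \ldots, R_k$ guaranteed by vertical separability, one uses the vertical witness segment $\ell_i$ together with the fact that $Q$ realizes a monotone \mpath/ for the demand corresponding to $R_i$ to identify a point of $Q$ that lies in the interior of $R_i$ but outside all $R_j$ with $j > i$; this point is then charged to $R_i$, ensuring injectivity. The precise mechanics of this charging are identical to the ones in the proof of \cref{lem:opt_vs_ir} cited above; I would simply invoke them verbatim.

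Second, given that $|Q| = |D|$ by hypothesis, the injection $\phi$ between two finite sets of equal cardinality is automatically a bijection. I then define $c \coloneqq \phi^{-1}\colon Q \to D$; by construction $q = \phi(c(q)) \in R(c(q))$ for every $q \in Q$, which is the bijection claimed by the corollary.

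Third, for the ``in particular'' part I would take any $Q' \subseteq Q$ and consider $c(Q') \subseteq D$. Since $c$ is a bijection, $|c(Q')| = |Q'|$; and for each $q \in Q'$ the demand $c(q) \in c(Q')$ covers $q$ in the sense that $q \in R(c(q))$. Hence $c(Q')$ is the desired set of at least $|Q'|$ demands, each covering a distinct point of $Q'$.

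The only genuinely non-trivial step is the first one, namely verifying that the charging in \cref{lem:opt_vs_ir} yields an injection into an \emph{arbitrary} feasible $Q$ (rather than just producing a cardinality bound) and that each demand is charged to a point strictly inside its own rectangle. Once that refined reading of the lemma's proof is in hand, the rest is a routine invocation of bijectivity between finite sets of equal size.
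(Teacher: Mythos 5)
Your proposal is correct and follows essentially the same route as the paper: the paper justifies \cref{cor:coveringsubsolutions} in one sentence by observing that the charging scheme in the proof of \cref{lem:opt_vs_ir} injectively maps each demand rectangle $R$ to a solution point lying in $R$, and then lets equal cardinalities upgrade the injection to a bijection. Your three steps (extract the injection $\phi\colon D \to Q$, invert it using $|Q|=|D|$, restrict to $Q'$) are exactly this.

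One small imprecision in your description of the charging: you say the point charged to $R_i$ ``lies in the interior of $R_i$ but outside all $R_j$ with $j > i$.'' That is not how injectivity is actually secured in the proof of \cref{lem:opt_vs_ir}. The witness segment $\ell_i$ avoids all later rectangles, but the charged point $a_i$ or $b_i$ (immediately right or left of where the \mpath/ crosses $\ell_i$) need not lie outside the later rectangles, and need not lie in the open interior of $R_i$. The actual injectivity argument is the counting claim about consecutive point-pairs on a common horizontal line: any $k$ such pairs involve at least $k+1$ distinct points, of which at most one is an input point. Since you explicitly defer to the paper's mechanics verbatim, this does not constitute a gap, but your informal paraphrase of why the charging is injective should be corrected if you were to write out the argument in full.
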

In general, the independent rectangle number and the maximum size of a vertically separable set are incomparable.
By \cref{lem:opt_vs_ir}, we have $\ir(P,D) \leq 2\cdot\vs(P,D)$. However, $\ir(P,D)$ may 
be smaller than $\vs(P,D)$ up to a factor of $n$. To see this, consider $n$ diagonally shifted copies of a demand, e.g. $R_i = R\bigl((i,i),(i+n,i+n)\bigr)$, for $i=1,\dots,n$. Here, $\ir(P,D)=1$ and $\vs(P,D)=n$. Thus, the concept of vertical separability is more useful as a lower bound.

\section{NP-hardness}%
\label{sec:np_hardness}

In this \lcnamecref{sec:np_hardness}, we show \cref{thm:nphard} by reducing the \MGMC/ problem to \iiiSAT/.
In~\iiiSAT/, we are given a formula $\phi$ consisting of $m$ clauses $C_1, C_2, \dots, C_m$ over $n$ variables $X_1, X_2, \dots, X_n$, each clause consisting of three literals.
The goal is to decide whether $\phi$ is satisfiable.
For our reduction, we construct a \MGMC/ instance $(P_{\phi}, D_{\phi})$ and a positive integer $\alpha = \alpha(\phi)$ such that $(P_{\phi}, D_{\phi})$ has an optimal solution of size $\alpha$ if and only if $\phi$ is satisfiable (\cref{lem:phisat,lem:phinotsat}).
This immediately implies \cref{thm:nphard}.
In the following, we identify a demand $d \in D_{\phi}$ with its demand rectangle $R(d)$.
This allows us to speak, for example, of intersections of demands, corners of demands, or points covered by demands.

Our construction of the \MGMC/ instance $(P_{\phi}, D_{\phi})$ is based on different \emph{gadgets} and their connections among each other.
A gadget can be thought of as a rectangle in the Euclidean plane that contains a specific set of input points and demands between these.
We start with a coarse overview of our construction by describing how gadgets are placed and how they interact (\cref{fig:construction:overview}).
We then give the intuition behind our reduction and describe the detailed inner structure of gadgets afterward (\cref{fig:gadgets}).
Because of space constraints, the actual proof of the NP-hardness is given in \cref{sec:app:np-hardness}.

\paragraph{Overview of the Construction}
For each clause $C_j$, we create a \emph{clause gadget} $GC_j$ and for each variable $X_i$, a \emph{variable gadget} $GX_i$.
Clause gadgets are arranged along a descending diagonal line, so all of $GC_j$ is to the bottom-right of $GC_{j-1}$.
Variable gadgets are arranged in the same manner.
This avoids unwanted interference among different clause and variable gadgets, respectively.
The variable gadgets are placed to the bottom-left of all clause gadgets.

For each positive occurrence of a variable $X_i$ in a clause $C_j$, we place a dedicated \emph{connection point}~$p_{ij}^+ \in P_{\phi}$ as well as suitable \emph{connection demands} from $p_{ij}^+$ to a dedicated inner point of $GX_i$ and to a dedicated inner point of $GC_j$.
Their purpose is to force optimal \MGMC/ solutions to create specific \mpaths/ (going first up and then right in a narrow corridor) connecting a variable to the clauses in which it appears positively.
We call the area covered by these two demands a \emph{(positive) variable-clause path}.
Similarly, there are connection points $p_{ij}^- \in P_{\phi}$ with suitable demands for negative appearances of $X_i$ in $C_j$, creating a \emph{(negative) variable-clause path} (going first right and then up in a narrow corridor).

Finally, there is a \emph{starting point} $S \in P_{\phi}$ to the bottom-left of all other points.
It has a demand to a \emph{clause point} $c_j$ in the top-right of each clause gadget $GC_j$ (an \emph{$SC$ demand}) and to a \emph{variable point} $x_i$ in the bottom-left of each variable gadget $GX_i$ (an \emph{$SX$ demand}).
The inside of clause gadgets simply provides different entrance points for the variable-clause paths, while the inside of variable gadgets forces an optimal solution to choose between using either only positive or only negative variable-clause paths.
We will use these choices inside variable gadgets to identify an optimal solution for $(P_{\phi}, D_{\phi})$ with a variable assignment for $\phi$.
Details about clause and variable gadgets are given in \cref{fig:gadgets} and at the end of this \lcnamecref{sec:np_hardness}.

\begin{figure}
\centering
\includegraphics[width=0.7\linewidth]{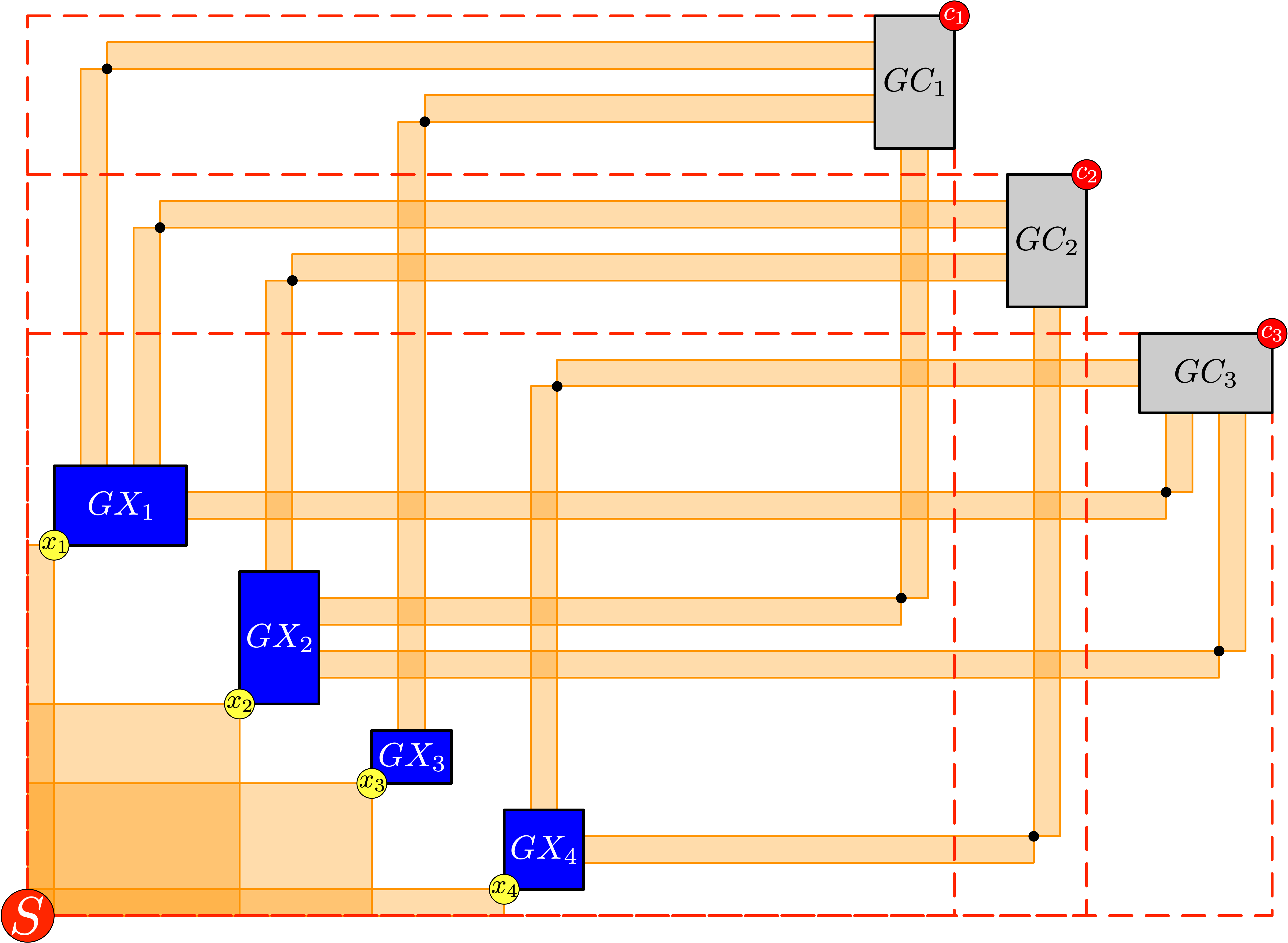}
\caption{%
    \MGMC/ instance $(P_{\phi}, D_{\phi})$ for $\phi = (X_1 \lor \neg X_2 \lor X_3) \land (X_1 \lor X_2 \lor \neg X_4) \land (\neg X_1 \lor \neg X_2 \lor X_4)$.
    Input points are shown as (red, yellow, or black) disks.
    For clause and variable gadgets, we show only the clause points $c_j$ and the variable points~$x_i$; their remaining inner points and demands are illustrated in \cref{fig:gadgets}.
    The small black disks represent the connection points~$p_{ij}^+,p_{ij}^-$.
    Non-$SC$ demands are shown as shaded, orange rectangles, while $SC$ demands are shown as dashed, red rectangles.
}%
\label{fig:construction:overview}
\end{figure}

\paragraph{Intuition of the Reduction}
Our construction is such that the non-$SC$ demands (including the ones within gadgets) form a monotone, vertically separable demand set.
Thus, for
\begin{equation}
\begin{aligned}
           D_{\overline{SC}}
&\coloneqq \set{d \in D_{\phi} | \text{$d$ is not an $SC$ demand}}
&&\quad\text{and}\quad&
   \alpha
&= \alpha(\phi)
\coloneqq \abs{D_{\overline{SC}}}
,
\end{aligned}
\end{equation}
\cref{lem:opt_vs_ir} implies that any solution $Q_{\phi}$ for $(P_{\phi}, D_{\phi})$ has size at least $\alpha$.

The first part of the reduction (\cref{lem:phisat}) shows that if $\phi$ is satisfiable, then there is an (optimal) solution $Q_{\phi}$ of size $\alpha$.
This is proven by constructing a family of \emph{boolean solutions}.
These are (partial) solutions $Q_{\phi}$ that can be identified with a variable assignment for $\phi$ and that have the following properties:
$Q_{\phi}$ has size $\alpha$ and satisfies all non-$SC$ demands.
Additionally, it can satisfy an $SC$ demand $(S, c_j)$ only by going through some variable $x_i$, where such a path exists if and only if~$C_j$ is satisfied by the value assigned to $X_i$ by (the variable assignment) $Q_{\phi}$.
In particular, if $\phi$ is satisfiable, there is a boolean solution $Q_{\phi}$ satisfying all $SC$ demands.
This implies that $Q_{\phi}$ is a solution to $(P_{\phi}, D_{\phi})$ of (optimal) size~$\alpha$.

\Cref{lem:phinotsat} provides the other direction of the reduction, stating that if there is a solution~$Q_{\phi}$ for $(P_{\phi},D_{\phi})$ of size $\alpha$, then $\phi$ is satisfiable.
Its proof is more involved and is made possible by careful placement of gadgets, connection points, and demands.
In a first step, we show that the small size of $Q_{\phi}$ implies that different parts of our construction each must be satisfied by only a few, dedicated points from $Q_{\phi}$.
For example, $Q_{\phi}$ has to use exactly $n$ points to satisfy the $n$ $SX$ demands $(S, x_i)$.
Another result (\cref{prop:triangular_demands}) about \enquote{triangular} instances (e.g., the triangular grid formed by the $n$ $SX$ demands, see \cref{fig:construction:overview}) states that, here, optimal solutions must lie on grid lines inside the \enquote{triangle}.
We conclude that any \mpath/ from $S$ to a clause point $c_j$ must go through exactly one variable point $x_i$.
Similarly, we show that the $6m$ connection demands (forming the $3m$ variable-clause paths) are satisfied by $6m$ points from $Q_{\phi}$ and, since they are so few, each of these points lies in the corner of a connection demand.
This ensures that \mpaths/ cannot cheat by, e.g., \enquote{jumping} between different variable-clause paths.
More precisely, such a path can be entered only at the variable gadget where it starts and be left only at the clause gadget where it ends.

All that remains to show is that there cannot be two \mpaths/ entering a variable gadget~$GX_i$ (which they must do via $x_i$) such that one leaves through a positive and the other through a negative variable-clause path.
We can then interpret $Q_{\phi}$ as a boolean solution (the variable assignment for~$X_i$ being determined by whether \mpaths/ leave $GX_i$ through positive or through negative variable-clause paths).
Since $Q_{\phi}$ satisfies all demands, in particular all $SC$ demands, the corresponding variable assignment satisfies all clauses.

\paragraph{Details of Clause \& Variable Gadgets}
We recommend to keep \cref{fig:gadgets:clause,fig:gadgets:variable} close at hand when reading the following gadget descriptions.
\begin{figure}
\begin{subfigure}[t]{0.44\linewidth}
\includegraphics[width=0.9\linewidth]{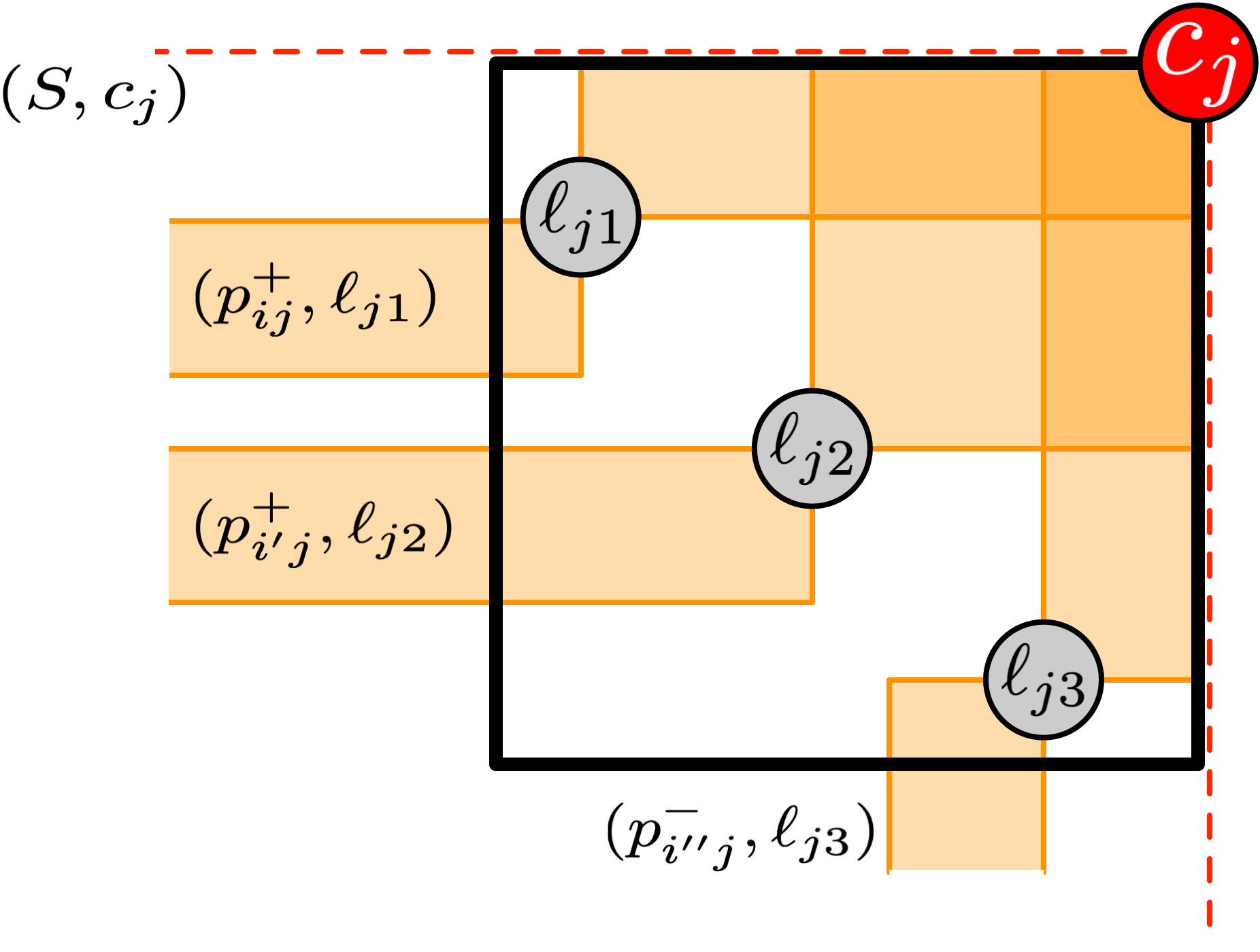}
\subcaption{%
    Clause gadget $GC_j$ for $C_j = X_i \lor X_{i'} \lor \neg X_{i''}$.
}\label{fig:gadgets:clause}
\end{subfigure}
\hfill
\begin{subfigure}[t]{0.52\linewidth}
\includegraphics[width=\linewidth]{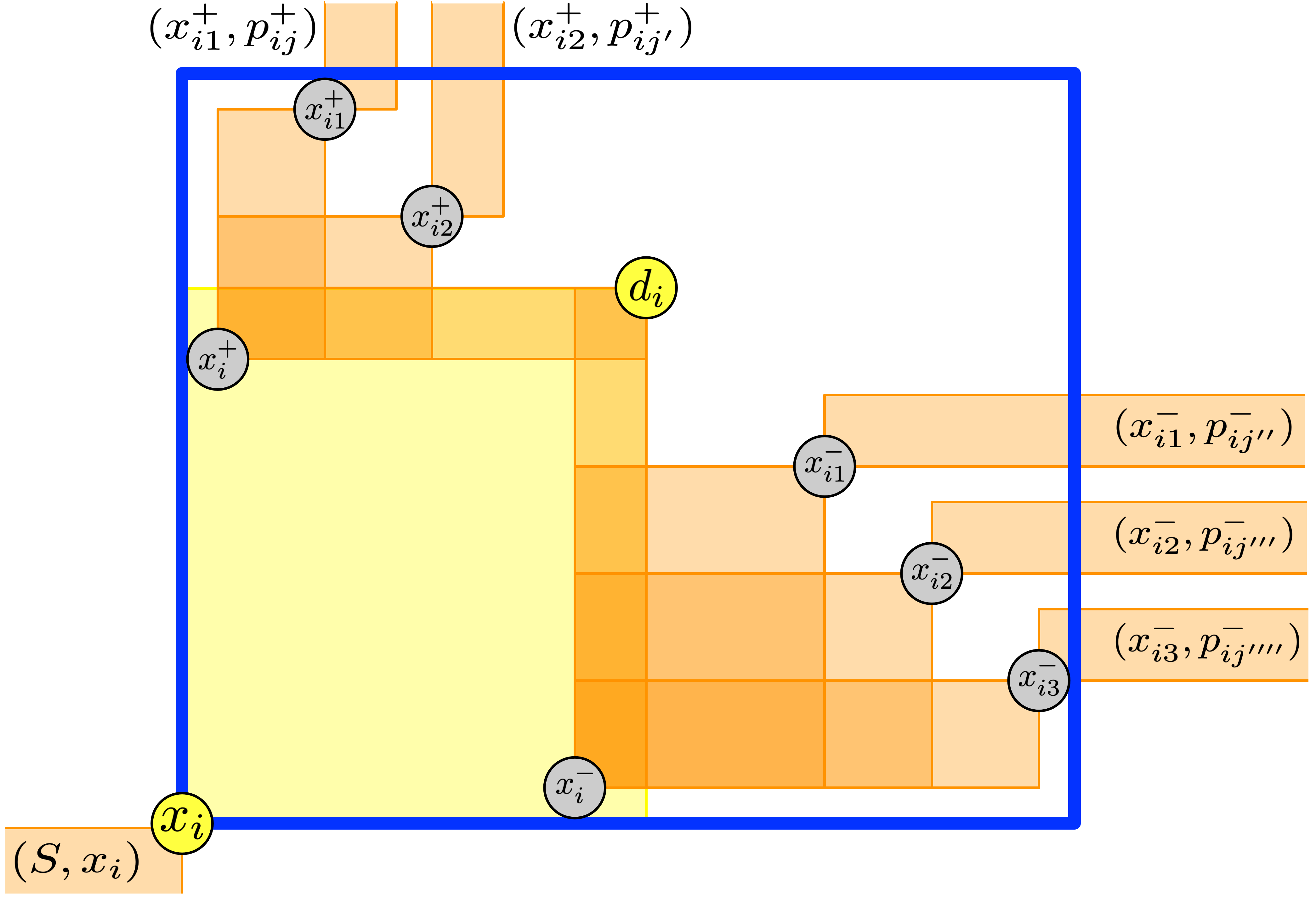}
\subcaption{%
    Variable gadget $GX_i$ for $X_i$ occurring twice positively and thrice negatively.
}\label{fig:gadgets:variable}
\end{subfigure}
\caption{%
    Examples for a clause and a variable gadgets.
    As in \cref{fig:construction:overview}, input points are shown as circles and $SC$~demands are shown as dashed, red rectangles.
    The $XD$ demand $(x_i, d_i)$ is shown as a shaded, yellow rectangle.
    All remaining (non-$SC$ and non-$XD$) demands are again shown as shaded, orange rectangles.
}\label{fig:gadgets}
\end{figure}
The \emph{clause gadget} $GC_j$ for clause $C_j$ contains the \emph{clause point} $c_j$ and three \emph{(clause) literal points} $\ell_{j1}, \ell_{j2}, \ell_{j3}$.
The clause point is in the top-right.
The literal points represent the literals of~$C_j$ and form a descending diagonal within the gadget such that positive are above negative literals.
For each literal point $\ell_{jk}$, there is a demand $(\ell_{jk}, c_j)$.
Moreover, if $\ell_{jk}$ is positive and corresponds to the variable $X_i$, then there is a \emph{(positive) connection demand}~$(p_{ij}^+, \ell_{jk})$.
Similarly, if $\ell_{jk}$ is negative, there is a \emph{(negative) connection demand} $(p_{ij}^-, \ell_{jk})$.
Finally, there is the \emph{$SC$ demand} $(S, c_j)$.

The \emph{variable gadget} $GX_i$ for variable $X_i$ contains
\begin{itemize*}[label=, afterlabel=]
\item the \emph{variable point} $x_i$,
\item two \emph{(variable) literal points} $x_i^+, x_i^-$,
\item one \emph{demand point} $d_i$, as well as
\item $n_i^+$ positive and $n_i^-$ negative \emph{literal connectors}~$x_{ik}^+$ and $x_{ik}^-$, respectively.
\end{itemize*}
Here, $n_i^+$ and $n_i^-$ are from $[m]_0$ and denote the number of positive and negative occurrences of $X_i$ in $\phi$, respectively.
The variable point is in the bottom-left.
The literal connectors and the demand point form a descending diagonal in the top-right, with the positive literal connectors above and the negative literal connectors below the demand point.
The literal points $x_i^+, x_i^-$ lie in the interior of the rectangle spanned by $x_i$ and $d_i$, close to the top-left and bottom-right corner respectively.
They are moved slightly inward to avoid identical $x$- or $y$-coordinates.
Inside the gadgets, we have demands of the form
\begin{itemize*}[label=, afterlabel=]
\item $(x_i^+, x_{ik}^+)$ and $(x_i^-, x_{ik}^-)$ between literal points and literal connectors,
\item $(x_i^+, d_i)$ and $(x_i^-, d_i)$ between literal points and the demand point, as well as
\item $(x_i, d_i)$ between the variable point and the demand point (an \emph{$XD$ demand}).
\end{itemize*}
Towards the outside, we have the \emph{positive/negative connection demands}
\begin{itemize*}[label=, afterlabel=]
\item \smash{$(x_{ik}^+, p_{ij}^+)$} if the $k$-th positive literal of $X_i$ occurs in $C_j$
\item and~$(x_{ik}^-, p_{ij}^-)$ if the $k$-th negative literal of $X_i$ occurs in $C_j$
\end{itemize*}
as well as the \emph{$SX$ demand} $(S, x_i)$.

\section{An Approximation Algorithm for $s$-Thin Instances}%
\label{sec:log_xcoords_approx}

In this section, we present and analyze an aproximation algorithm for $s$-thin instances (where points in $P$ lie on at most $s$ distinct $x$-coordinates). In particular, we allow more than one point to share an $x$-coordinate. However, we still require any two distinct points to have distinct $y$-coordinates. 
We show an approximation ratio of $O(\log s)$, proving Theorem~\ref{thm:xcoord_approx}.

An \emph{$x$-group} is a maximal subset of $P$ having the same $x$-coordinate.
Note that an $O(\log s)$-approximation for $s$-thin instances can be obtained via a natural ``vertical'' divide-and-conquer algorithm that recursively divides the $s$ many~$x$-groups in two subinstances with roughly $s/2$ many~$x$-groups each. (Section~\ref{sec:sublogarithmic_general_approx:algorithm} contains a more general version of this algorithm subdividing into an arbitrary number of subinstances.) The analysis of this algorithm uses the number of input points as a lower bound on $\opt$.
However, such a bound is not sufficient for our purpose of deriving an~$O(\sqrt{\log n})$-approximation.

In this section, we present a different algorithm, based on ``horizontal'' divide-and-conquer (after a pre-processing step to sparsify the set of $y$-coordinates in the input via minimum hitting sets). Using horizontal rather than vertical divide-and-conquer may seem counter-intuitive at first glance as the number of $y$-coordinates in the input is generally unbounded in~$s$. Interestingly enough, we can give a \emph{stronger} guarantee for this algorithm by bounding 
the cost of the approximate solution against what we call a \emph{boundary independent set}. Additionally, we show that the size of a such a set is always upper bounded by the maximum number of vertically separable demands. 
This directly implies Theorem~\ref{thm:xcoord_approx}, since $2\opt$ is an upper bound on the number of vertically separable demands (c.f. Lemma~\ref{lem:opt_vs_ir}).  Even more importantly, our stronger bound allow us to prove Theorem~\ref{thm:general_approx} in the next section since vertically separable demands fulfill the subadditivity property mentioned in the introduction. In the proof of Theorem~\ref{thm:general_approx}, an arbitrary $O(\log s)$-approximation algorithm would not suffice.

By losing a factor~$2$ in the approximation ratio, we may assume that the demands are monotone, since we can handle pairs with $x(p)<x(q)$ and $y(p)>y(q)$ symmetrically.

\begin{definition}[Left \& right demand segments]
Let $(P,D)$ be an input instance. For~$R(p,q)\in Q$, denote by $\lambda(p,q)$ the vertical segment that connects $(x(p),y(p))$ and  $(x(p),y(q))$. Similarly, denote by $\rho(p,q)$ the vertical segment that connects $(x(q),y(p))$ and $(x(q), y(q))$. That is, $\lambda(p,q)$ and $\rho(p,q)$ are simply the left and right boundaries of rectangle $R(p,q)$. 
\end{definition}

\paragraph{Boundary Independent Sets}
A \textit{left boundary independent set} consists of pairwise non-overlapping segments~$\lambda(p,q)$, a \emph{right boundary independent set} of pairwise non-overlapping segments~$\rho(p,q)$. A \emph{boundary independent set} refers to either a left or a right boundary independent set. 
Denote by $\is(P,D)$ the size of a maximum boundary independent set.

The following lemma implies that it suffices to work with boundary independent sets instead of vertical separability. The main advantage of doing so, is that (i) for \is, we do not have to identify any ordering of the demand subset, (ii) one can compute $\is(P,D)$ efficiently, and (iii) we can exploit geometric properties of interval graphs, as we will do below.  

\begin{lemma}\label{lem:bound-indep-sets}
  For any instance $(P,D)$ 
  we have that $\is(P,D)\leq\vsep(P,D)$. Moreover, one can compute a maximum boundary independent set in polynomial time.
\end{lemma}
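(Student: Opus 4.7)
My plan is to derive $\is(P,D) \leq \vsep(P,D)$ by turning any boundary independent set into a witness of vertical separability of the same size. By symmetry it suffices to treat left boundary independent sets; the right case follows from the mirror construction in which the roles of $\lambda$ and $\rho$ are swapped and the orientation of the $x$-axis is reversed.

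Concretely, suppose $\{\lambda(p_i,q_i)\}_{i=1}^{k}$ is a left boundary independent set. I will order the corresponding demand rectangles by nondecreasing $x(p_i)$, breaking ties arbitrarily, to obtain $R_1, R_2, \ldots, R_k$. For each $i$, I take $\ell_i$ to be the segment obtained from $\lambda(p_i, q_i)$ by shifting it rightward by a tiny amount $\varepsilon > 0$ and trimming its endpoints by $\varepsilon$ as well. The trimming guarantees that $\ell_i$ joins the interiors of the top and bottom boundaries of $R_i$, as required by the definition of vertical separability, and choosing $\varepsilon$ smaller than any positive gap between distinct $x$-coordinates of input points ensures that $\ell_i$ sits strictly inside the $x$-range of $R_i$.

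The main step is to verify $\ell_i \cap R_j = \emptyset$ for all $j > i$. If $x(p_j) > x(p_i)$, the choice of $\varepsilon$ places $\ell_i$ strictly to the left of $R_j$, and we are done. The only delicate case is $x(p_j) = x(p_i)$: here $\ell_i$ lies inside the $x$-range of $R_j$, so disjointness must come from the $y$-axis. This is precisely where boundary independence pays off: on their common vertical line the segments $\lambda(p_i, q_i)$ and $\lambda(p_j, q_j)$ have disjoint interiors, and since these segments span the entire $y$-ranges of $R_i$ and $R_j$, the $y$-ranges of $R_i$ and $R_j$ are disjoint, so $\ell_i$ cannot meet $R_j$. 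I expect this common-column subcase to be the main obstacle, since all other ingredients of the proof are routine bookkeeping around the choice of $\varepsilon$.

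For the algorithmic claim, I would exploit the fact that segments $\lambda(p,q)$ at different $x$-coordinates never overlap. Hence a maximum left boundary independent set decomposes over the $x$-coordinates of left endpoints, and within each coordinate the problem becomes the classical maximum set of pairwise non-overlapping intervals on a line, solvable by the greedy earliest-finish-time algorithm in $O(m \log m)$ time. Running the same procedure for the right boundaries and returning the larger of the two outputs yields a maximum boundary independent set in polynomial time.
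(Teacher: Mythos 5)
Your argument is correct and matches the paper's proof in essence: both process the left boundaries in order of increasing $x$-coordinate, cut each rectangle along (a perturbation of) its left side, use boundary independence to handle same-column rectangles and the ordering to handle all others, and reduce the algorithmic claim to maximum independent set of intervals per column. The only cosmetic quibble is that trimming the $y$-endpoints of $\ell_i$ by $\varepsilon$ works against, rather than for, the requirement that $\ell_i$ reach the interiors of the top and bottom boundaries of $R_i$ (the rightward shift alone already achieves that), but this does not affect the substance.
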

\begin{proof}
  Let $I \subseteq D$ be a maximum boundary independent set, and let $b$ be the vertical line corresponding to the smallest $x$-coordinate of the input points. Assume $I$ consists of left sides of demand rectangles. Consider all the demands $(p,q)\in I$ where $\lambda(p,q)$ lies on the vertical line $b$. Since the left sides of the corresponding demand rectangles form an independent set we can cut them in any order along $b$ without intersecting any other demand in $I$. We can then remove these demands and proceed recursively to the next smallest $x$-coordinate until all demands have been cut. This completes the proof of the inequality.

Now, we discuss efficient computability. Note that a left (and thus a right or general) maximum boundary independent set can be computed efficiently by determining a maximum independent set of the intervals 
$\lambda(p,q)$ for $(p,q)\in D$ along all $x$-coordinates separately. 
The maximum independent set of intervals can be computed in polynomial time~\cite{Snoeyink07}.
\end{proof}

\paragraph{Algorithm Description}
In Algorithm~\ref{alg:inter-strip-general}, we present algorithm {\sc HorizontalManhattan}, which produces a Manhattan solution of cost $O(\log s)\cdot\is(P,D)$, where $s$ is the number of $x$-groups in $P$. 
The algorithm initially computes  a set of ``crucial rows'' $\rset \subseteq {\mathbb R}$ by computing a minimum hitting set in the interval set~$\iset = \{[y(p), y(q)] \mid (p,q) \in D\}$. 
In particular, the set $\rset$ has the following property. 
For each~$j \in \rset$, let $\ell_j$ be a horizontal line drawn at $y$-coordinate $j$. 
Then the lines $\{\ell_j\}_{j \in \rset}$ stab every rectangle in $\{R(p,q)\}_{(p,q) \in D}$. The following observation follows from the fact that the interval hitting set is equal to the maximum interval independent set.

\begin{observation}
$|\rset| \leq \is(P,D)$ 
\end{observation}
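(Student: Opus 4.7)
The plan is to use the classical combinatorial fact (a consequence of the interval-graph structure / Helly property on the line) that for any finite collection of intervals on $\mathbb R$, the minimum cardinality of a hitting set equals the maximum cardinality of a pairwise disjoint subfamily. Since $\rset$ is defined as a minimum hitting set for $\iset = \{[y(p),y(q)] \mid (p,q)\in D\}$, this equality gives us $|\rset| = \mu$, where $\mu$ is the maximum number of pairwise disjoint $y$-intervals in $\iset$.

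The next step is to exhibit a boundary independent set of size $\mu$ inside $D$. Fix a maximum pairwise-disjoint subfamily of $\iset$ and let $D' \subseteq D$ be the corresponding set of demands, with $|D'|=\mu$. I claim the left segments $\{\lambda(p,q) : (p,q)\in D'\}$ form a left boundary independent set. Two such segments $\lambda(p,q)$ and $\lambda(p',q')$ are vertical; if they lie on different $x$-coordinates they are trivially non-overlapping, and if they share an $x$-coordinate, their $y$-projections are exactly the intervals $[y(p),y(q)]$ and $[y(p'),y(q')]$ (after the monotonicity normalization made earlier), which are disjoint by the choice of $D'$. Hence the segments are pairwise non-overlapping, giving $\is(P,D) \geq |D'| = \mu = |\rset|$.

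The only step that needs a brief justification is the classical interval hitting-set / independent-set equality. It can be derived, for example, by the standard greedy argument: sort the intervals by right endpoint, repeatedly pick the interval with the smallest right endpoint, add that right endpoint to a hitting set and discard all intervals containing it; the selected intervals form a pairwise disjoint family and their right endpoints form a hitting set of the same cardinality, sandwiching both optima. I would simply cite this fact (it is precisely the hint already given in the excerpt) rather than reproving it. No obstacle beyond this; the observation is essentially a direct rewrite of LP duality on the line combined with the trivial observation that vertically disjoint $y$-intervals yield non-overlapping left segments.
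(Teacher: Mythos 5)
Your proposal is correct and matches the paper's intended argument. The paper justifies this observation with the one-line remark that minimum interval hitting set equals maximum interval independent set, and the expanded reasoning you supply — taking the corresponding left demand segments $\lambda(p,q)$ of a maximum pairwise-disjoint subfamily of $\iset$ and noting they form a left boundary independent set because disjointness of $y$-intervals implies non-overlap of the vertical segments — is exactly what the paper itself does explicitly in the proof of Lemma~\ref{lem:inter-strip}, where the set $L$ of left sides corresponding to a maximum independent subfamily $I \subseteq \iset$ is constructed.
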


After computing 
$\rset$, the algorithm calls a subroutine  {\sc HorizontalDC} (see Algorithm~\ref{alg:horizontal-manhattan}), which recursively adds points to each such row in a way that guarantees a feasible solution.

\vspace{0.2cm}
\begin{algorithm}[t]
\SetKwInOut{Input}{input}
\SetKwInOut{Output}{output}
\Input{ Instance $(P,D)$ with $s$ distinct $x$-coordinates}
\Output{ Feasible solution to $(P,D)$ of size at most $O(\log s)\cdot\is(P,D)$}
$\mathcal{I}\gets\{\,\textnormal{interval }[y(p),y(q)]\mid (p,q)\in D\,\}$\;
$\rset\gets$minimum hitting set for $\mathcal{I}$\;
\Return{\textnormal{HorizontalDC}$(P,D,\rset)$\;}
\caption{{\sc HorizontalManhattan}$(P,D)$\label{alg:inter-strip-general}}
\end{algorithm}

\begin{algorithm}[t]
\SetKwInOut{Input}{input}
\SetKwInOut{Output}{output}
\Input{ Instance $(P,D)$ with rows $\rset$}
\Output{ Feasible solution to $(P,D)$ computed via horizontal divide-and-conquer}
$Q\gets\emptyset$\;
$m\gets$ median of $\rset$\;
$D_m\gets\{\,(p,q)\in D\mid y(p)\leq m\leq y(q)\,\}$\;
\ForEach{$(p,q)\in D_m$}{
   $Q\gets Q\cup\{(x(p),m), (x(q),m)\}$\label{line:solve-median}
}
$D_t\gets\{\,(p,q)\in D\mid y(p)> m\,\}$\;
$P_t\gets\{\,p,q\mid (p,q)\in D_t\,\}$\;
$\rset_t\gets\{\, r\in \rset\mid r>m\,\}$\;
$Q\gets Q\cup\textnormal{HorizontalDC}(P_t,D_t,\rset_t)$\;
$D_b\gets\{\,(p,q)\in D\mid y(q)<m\,\}$\;
$P_b\gets\{\,p,q\mid (p,q)\in D_b\,\}$\;
$\rset_b\gets\{\, r\in \rset\mid r<m\,\}$\;
$Q\gets Q\cup\textnormal{HorizontalDC}(P_b,D_b,\rset_b)$\;
\Return{$Q$\;}
\caption{{\sc HorizontalDC}$(P,D,\rset)$\label{alg:horizontal-manhattan}}
\end{algorithm}

\paragraph{Analysis} 

\begin{lemma}[Feasibility]
The algorithm {\sc HorizontalManhattan} produces a feasible solution in polynomial time. 
\end{lemma}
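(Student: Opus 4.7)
The plan is to split the claim into two parts: polynomial running time and feasibility. The hitting-set step is a classical interval problem, solvable in polynomial time, and the divide-and-conquer recursion reduces $|\rset|$ by at least one per level (the median $m$ is excluded from both $\rset_t$ and $\rset_b$), yielding recursion depth $O(\log|\rset|)$ and polynomial total work. The substantive part is feasibility, which I would establish via a loop invariant. By the factor-of-two loss already noted at the start of the section, I would assume the instance is monotone with $y(p)<y(q)$ whenever $(p,q)\in D$.

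The invariant I would maintain along the recursion is $(\star)$: at every call \textsc{HorizontalDC}$(P,D,\rset)$, the set $\rset$ hits every interval $[y(p),y(q)]$ with $(p,q)\in D$. This holds initially by construction of $\rset$. To see that it is preserved when recursing on $(P_t,D_t,\rset_t)$, take any $(p,q)\in D_t$. We have $y(p)>m$, so any $r\in\rset$ with $r\in[y(p),y(q)]$ satisfies $r\ge y(p)>m$ and therefore $r\in\rset_t$. The argument for $(P_b,D_b,\rset_b)$ is symmetric.

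With $(\star)$ in hand, I would trace an arbitrary demand $(p,q)\in D$ down the recursion tree. As long as the current median $m$ lies outside $[y(p),y(q)]$, the demand is routed into $D_t$ or $D_b$; the invariant $(\star)$ guarantees that the new $\rset$ still meets $[y(p),y(q)]$. Since $|\rset|$ strictly decreases, some median $m$ eventually falls inside $[y(p),y(q)]$, at which point $(p,q)\in D_m$ and Line~\ref{line:solve-median} inserts $(x(p),m)$ and $(x(q),m)$ into $Q$. The sequence $p,(x(p),m),(x(q),m),q$ then has consecutive entries vertically or horizontally aligned with total $\ell_1$-length $(m-y(p))+(x(q)-x(p))+(y(q)-m)=\|p-q\|_1$, certifying $M$-connectivity in $G_{P\cup Q}$. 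The main conceptual hurdle is formulating and preserving $(\star)$; once it is available, termination, polynomial running time, and feasibility all follow routinely.
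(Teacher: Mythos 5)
Your proposal is correct and takes essentially the same approach as the paper, which also argues feasibility by induction on the rows using the hitting-set property of $\rset$; you have simply made explicit the invariant $(\star)$ and the path construction that the paper leaves to the reader.
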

\begin{proof}
The feasibility of the computed solution follows easily by induction over the number of rows, using the fact that $\rset$ is a hitting set of $\iset$ and that in the for loop in Algorithm~\ref{alg:horizontal-manhattan} (Line~\ref{line:solve-median}), a feasible solution for $D_m$ is computed. 
It is also easy to see that the running time is polynomial in the input size. The set $\rset$ has size at most~$n$, and thus {\sc HorizontalDC} is called at most $O(\log n)$ times. In each call, at most $s$ points are added, and $s$ is also upper bounded by $n$. Finally, assignments can be done in polynomial time.
\end{proof}

\begin{lemma}[Cost]\label{lem:inter-strip}
  For any $s$-thin instance $(P,D)$, algorithm 
  {\sc HorizontalManhattan} 
  outputs a solution of cost $O(\log s)\cdot\is(P,D)$.
\end{lemma}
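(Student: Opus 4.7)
My plan is to model the recursive calls of \textsc{HorizontalDC} as a binary tree $\mathcal{T}$ in which each node $v$ corresponds to a subproblem with row set $\rset_v$ and $y$-stripe $Y_v$, and at $v$ the algorithm splits $\rset_v$ at its median $m_v$. Since each split roughly halves $\rset_v$ and $|\rset|\le\is(P,D)$ is already observed, $\mathcal{T}$ has depth at most $\log_2 \is(P,D)+O(1)$. Let $\phi_v$ denote the number of distinct $x$-coordinates among the endpoints of the demands $D_{m_v}$ that are resolved at~$v$. Since each call adds $\phi_v$ distinct new points on row $m_v$ and different $v$ use distinct medians, the output size equals $\sum_v \phi_v$. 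The goal is to bound $S_\ell \coloneqq \sum_{v\text{ at depth }\ell} \phi_v$ in two complementary ways and then sum across levels in two regimes.

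The first bound is trivial: each subproblem uses at most $s$ columns and depth $\ell$ has at most $2^\ell$ subproblems, so $S_\ell \le s\cdot 2^\ell$. The second, more delicate bound uses the boundary independent set. Because every $y$-interval in~$D_{m_v}$ contains $m_v$, on each column the corresponding left-segments pairwise overlap at $m_v$, so any LBIS of $D_{m_v}$ picks at most one interval per column; equivalently, $\is^{\mathrm{left}}(D_{m_v})$ equals the number of distinct left-endpoint columns in $D_{m_v}$, and analogously for the right. Therefore $\phi_v \le \is^{\mathrm{left}}(D_{m_v})+\is^{\mathrm{right}}(D_{m_v})$. The key geometric observation is that at a fixed depth $\ell$ the stripes $Y_v$ are pairwise disjoint and every demand in $D_{m_v}$ has its $y$-interval entirely inside $Y_v$ (otherwise it would have been resolved at an ancestor). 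Hence the union, over all $v$ at depth $\ell$, of an optimal LBIS for each $D_{m_v}$ is still a collection of pairwise non-overlapping left-segments, i.e., a valid LBIS of the original instance $D$. This yields $\sum_v \is^{\mathrm{left}}(D_{m_v}) \le \is^{\mathrm{left}}(D) \le \is(P,D)$ and symmetrically for the right, so $S_\ell \le 2\is(P,D)$.

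Combining the two bounds gives $S_\ell \le \min\{s\cdot 2^\ell,\ 2\is(P,D)\}$. I split the sum at $\ell^\star \coloneqq \lceil\log_2(2\is(P,D)/s)\rceil$: the early regime $\ell<\ell^\star$ telescopes as a geometric series to $O(\is(P,D))$, while the late regime contains only $\log_2\is(P,D) - \ell^\star + O(1) = O(\log s)$ levels each contributing at most $2\is(P,D)$, for a total of $O(\log s)\cdot\is(P,D)$. The degenerate case $\is(P,D) < s$ is handled by just using the per-level bound $2\is(P,D)$ at all $O(\log \is(P,D)) \le O(\log s)$ levels. In every case $\sum_\ell S_\ell = O(\log s)\cdot\is(P,D)$, which is exactly the claimed cost bound.

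The main obstacle is the second per-level estimate: it requires both the ``at most one per column'' inequality $\phi_v \le \is^{\mathrm{left}}(D_{m_v})+\is^{\mathrm{right}}(D_{m_v})$ (which rests on the fact that all demands in $D_{m_v}$ cross $m_v$) and the geometric aggregation across sibling subproblems at the same depth (which rests on the fact that demands resolved at depth~$\ell$ are confined to pairwise disjoint stripes $Y_v$). Everything else is a standard two-regime geometric-series calculation, and the WLOG reduction to monotone demands at the cost of a factor $2$ is already provided.
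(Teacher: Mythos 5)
Your proof is correct, and it reaches the bound by a route that differs from the paper's in its key counting step, so a comparison is worthwhile. Both arguments share the same skeleton: split the recursion into the ``early'' levels (up to roughly $\log(r/s)$, where $r=\abs{\rset}\le\is(P,D)$) and the ``late'' $O(\log s)$ levels, charge the early levels to $O(r)=O(\is(P,D))$ via the trivial per-call bound of $s$ points, and charge the late levels to left/right demand segments. Where you diverge is in how the late levels are accounted for. The paper assigns each added point a witness segment, observes that witnesses from calls at the same depth cannot intersect (they live in disjoint $y$-stripes) so the witness interval graph has clique number $1+\lceil\log s\rceil$, and then invokes perfection of interval graphs to extract one large boundary independent set from the whole witness family at once. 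You instead prove a clean per-level bound $S_\ell\le 2\is(P,D)$ directly: within one call all segments of $D_{m_v}$ on a given column pairwise overlap at $m_v$ (so at most one per column survives in an independent set, giving $\phi_v\le\is^{\mathrm{left}}(D_{m_v})+\is^{\mathrm{right}}(D_{m_v})$), and across sibling calls the segments lie in disjoint stripes, so the per-call independent sets union into a single boundary independent set of $D$. Summing $\min\{s\cdot 2^\ell,\,2\is(P,D)\}$ over levels then finishes the proof. Your version is more elementary (no appeal to perfect-graph coloring), makes the constants explicit, and isolates exactly the two geometric facts the paper's clique bound secretly relies on; the paper's version is slightly slicker in that it never needs to name the per-level quantities $S_\ell$ or perform the two-regime summation explicitly. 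The minor imprecisions in your write-up (output size ``equals'' rather than ``is at most'' $\sum_v\phi_v$ when an added point coincides with an input point, and the $O(1)$ slack in the depth bound) only help the inequality and do not affect correctness.
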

\begin{proof}
Let $r=|\rset|$ be the number of rows computed in Algorithm~\ref{alg:inter-strip-general}. There exists a subset $I \subseteq \iset$, also of size $r$, that is an independent set~\cite{Golumbic2004perfect}. 
Define $L = \{\lambda(p,q) \mid [y(p), y(q)] \in I\}$ to be the set of corresponding left sides of the demands in $I$. In particular, the segments in~$L$ are disjoint and~$|L| = r$. 
We upper bound the cost of our solution as follows. For each added point, define a {\em witness interval}, witnessing its cost. 
The total number of points is then roughly bounded by the number of witness intervals, which we show to be $O(\log s) \is(P,D)$.  

We enumerate the recursion levels of Algorithm~\ref{alg:horizontal-manhattan} from $1$ to $\lceil \log r \rceil$ in a top-down fashion in the recursion tree. 
In each recursive call, at most $s$ many points are added to $Q$ in line~\ref{line:solve-median}---one for each distinct $x$-coordinate. 
Hence, during the first $\lceil \log r \rceil - \lceil \log s \rceil$ recursion levels at most~$s\cdot 2^{\lceil \log r \rceil - \lceil \log s \rceil}= O(s\cdot\frac{r}{s})=O(r)$ many points are added to $Q$ in total. We associate each of these points with one unique left side in $L$ in an arbitrary manner. For each of these points, we call its associated left side the  witness of this point.

For any point added to $Q$ in line~\ref{line:solve-median} in one of the last $\lceil \log s \rceil$ recursion levels, pick the first (left or right) side of a demand rectangle that led to including this point.  More precisely, if, in line~\ref{line:solve-median}, we add point $(x(p),m)$ to $Q$ for the first time (which means that this point has not yet been added to~$Q$ via a different demand) then associate $\lambda(p,q)$ as a witness. Analogously, if we add $(x(q),m)$ for the first time then associate $\rho(p,q)$ as a witness.

Overall, we have associated to each point in the final solution a uniquely determined witness, which is a left or a right side of some demand rectangle. Note that any (left or right) side of a rectangle may be assigned as a witness to two solution points (once in the top recursion levels and once in the bottom levels). In such a case we create a duplicate of the respective side and consider them to be distinct witnesses.

Two witnesses added in the last $\lceil \log s \rceil$ recursion levels can intersect only if the recursive calls lie on the same root-to-leaf path in the recursion tree. Otherwise, they are separated by the median row of the lowest common ancestor in the recursion tree and cannot intersect.
With this observation and the fact that the witnesses in $L$ form an independent set, we can bound the maximum clique size in the intersection graph of all witnesses by $1+\lceil \log s \rceil$.

This graph is an interval graph. Since interval graphs are perfect~\cite{Golumbic2004perfect}, there exists a $(1+\lceil \log s \rceil)$-coloring in this graph. Hence, there exists an independent set of witnesses of size $1/(\lceil \log s \rceil+1)$ times the size of the Manhattan solution. Taking all left or all right sides of demands in this independent set (whichever is larger) gives a \emph{boundary} independent set of size at least $1/(2(1+\lceil \log s \rceil))$ times the cost of the Manhattan solution.
\end{proof}

We conclude the section by noting that the proof of Theorem~\ref{thm:xcoord_approx} directly follows  by combining Lemmata~\ref{lem:opt_vs_ir},  \ref{lem:bound-indep-sets} and~\ref{lem:inter-strip}. As mentioned in the introduction, the factor $O(\log s)$ in Lemma~\ref{lem:inter-strip} is tight in the strong sense that there is an \MGMC/ instance $(P,D)$ with $s$ distinct $x$-coordinates such that $\opt(P,D)=\Omega(\is(P,D)\log s)$. See Appendix~\ref{sec:tightness of is}.

\section{A Sublogarithmic Approximation Algorithm}%
\label{sec:sublogarithmic_general_approx}

In this section, we give an overview of how to leverage the $O(\log s)$-approximation for $s$-thin instances to design an $O(\sqrt{\log n})$-approximation algorithm for general instances. 

\paragraph{Sub-instances} Let $(P,D)$ be an instance of \problemname~and let $B$ be a bounding box for~$P$, that is, $P\subseteq B$. Let $\sset=\{S_1,\ldots, S_s\}$ be a collection of $s$ vertical strips, ordered from left to right, that are obtained by drawing $s-1$ vertical lines that partition $B$. We naturally create $s+1$ sub-instances as follows. (See also Figure~\ref{fig:sub-instances}.)
First, we have $s$ \emph{intra-strip instances}~$\{(P_i,D_i)\}_{i \in [s]}$ such that $P_i = P \cap S_i$ and $D_i = D \cap (S_i \times S_i)$. 
Next, we have the \emph{inter-strip instance}~$\pi_{\sset}(D) = (P',D')$ where $P'$ is obtained by collapsing each strip in $\sset$ into a single column and $D'$ is obtained from collapsing demands accordingly. For each point $p \in P$, denote by $\pi_{\sset}(p)$ a copy of $p$ in $P'$ after collapsing. Note that this is a simplified description of the instances that avoids some technicalities. For a precise definition, see Appendix~\ref{sec:sublogarithmic_general_approx:analysis}.

\begin{figure}[tb]
    \centering
    \includegraphics[width=0.62\textwidth]{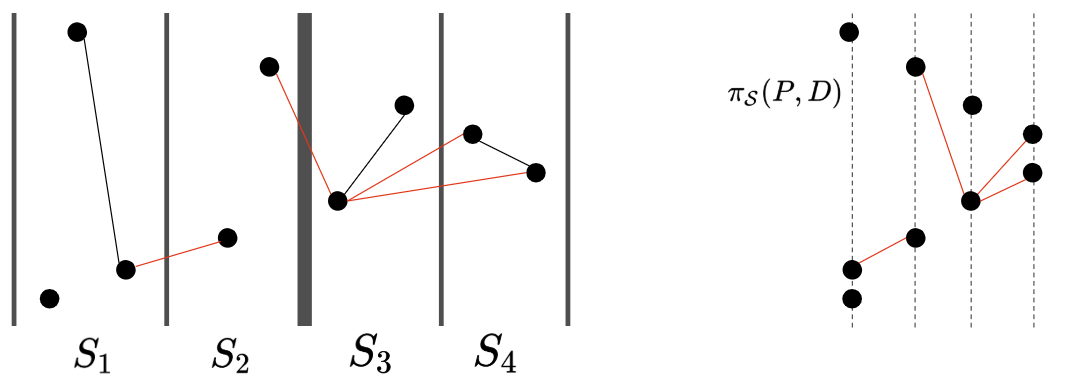}
    \caption{An illustration of the inter-strip instance. Each strip $S_i$ is collapsed into one column. The red demands are demands between pairs of points lying inside different strips. The black demands are demands that are handled by intra-strip instances.}
    \label{fig:sub-instances}
\end{figure}

\paragraph{Sub-additivity of VS}
The following is our sub-additivity property that we use crucially in our divide-and-conquer algorithm.

\begin{restatable}{lemma}{stripSeparability}
\label{lem:strip-separability}
  If $(P,D)$ is an instance of \MGMC/ with strip subdivision~$\mathcal{S}$, then
  \begin{displaymath}
    \vsep(P,D)\geq\vsep(\pi_{\mathcal{S}}(P,D))+\sum_{S\in\mathcal{S}}\vsep(P\cap S, D\cap (S\times S))\,.
  \end{displaymath}
\end{restatable}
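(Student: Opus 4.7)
The plan is to build a vertically separable subset of $D$ in $(P,D)$ whose cardinality equals the right-hand side. Take maximum vertically separable sets $V^\ast=\{R'_1,\ldots,R'_t\}$ in $\pi_{\mathcal{S}}(P,D)$ (with witnesses $\ell'_1,\ldots,\ell'_t$) and, for each strip $S\in\mathcal{S}$, $V_S=\{R^S_1,\ldots,R^S_{k_S}\}$ in the intra-strip instance $(P\cap S, D\cap (S\times S))$ (with witnesses $\ell^S_1,\ldots,\ell^S_{k_S}$). Lift each $R'_i$ to an arbitrary preimage $R_i\in D$, necessarily an inter-strip demand. Then $V\coloneqq\{R_1,\ldots,R_t\}\cup\bigcup_{S\in\mathcal{S}} V_S$ has cardinality $\vsep(\pi_{\mathcal{S}}(P,D))+\sum_{S}\vsep(P\cap S, D\cap(S\times S))$; the two parts are disjoint since one collects inter-strip and the other intra-strip demands.

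I would order $V$ by listing the inter-strip lifts $R_1,\ldots,R_t$ first (in the order inherited from $V^\ast$), followed by the intra-strip demands grouped by strip, each group in the order given by $V_S$. Intra-strip demands keep their original witnesses. For each inter-strip $R_i$ (say spanning strips $S_{a_i}$ to $S_{b_i}$), let $c_i$ denote the $x$-coordinate of $\ell'_i$ in the collapsed instance; by a harmless perturbation I may assume $c_i$ lies strictly between two adjacent column positions $k_i$ and $k_i+1$ (not on any collapsed column). I then choose the witness $\ell_i$ to be the vertical segment at $x=\beta_i$ spanning the $y$-range of $R_i$, where $\beta_i$ is the vertical line in the original instance separating strips $S_{k_i}$ and $S_{k_i+1}$. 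Since $c_i\in(a_i,b_i)$ strictly, $\beta_i$ lies strictly between $x(p_i)$ and $x(q_i)$, so $\ell_i$ is a legal witness.

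The main obstacle is to verify that each $\ell_i$ avoids every rectangle appearing later in the combined order---namely, every intra-strip demand and every later $R_j$. The intra-strip part is handled cleanly by the strip-boundary trick: every input point lies strictly inside its strip, so no intra-strip rectangle's $x$-range meets a strip boundary, and hence $\beta_i$ avoids them all. For a later inter-strip $R_j$ whose collapsed $R'_j$ spans $x$-range $[a_j,b_j]$, the collapsed VS property yields a dichotomy: if $c_i\notin[a_j,b_j]$, then $c_i\in(k_i,k_i+1)$ forces $k_i+1\le a_j$ or $k_i\ge b_j$, placing $\beta_i$ outside $[x(p_j),x(q_j)]$ in the original; otherwise $c_i\in[a_j,b_j]$, and the collapsed VS forces the $y$-ranges of $R'_i$ and $R'_j$ (equivalently of $R_i$ and $R_j$) to be disjoint, so $\ell_i$ and $R_j$ cannot meet regardless of horizontal position. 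Verification of the intra-strip witnesses $\ell^S_j$ is routine: they need only avoid later intra-strip demands, which follows from the VS structure of $V_S$ (same strip) or from $x$-range disjointness (different strip). The ordering choice is essential here---placing intra-strip demands first would leave their witnesses exposed to later inter-strip rectangles, for which no analogous strip-boundary trick exists.
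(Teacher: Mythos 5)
Your proposal is correct and follows the same route as the paper's proof: lift a maximum vertically separable set of the collapsed inter-strip instance to inter-strip demands in $D$, take the union with the intra-strip VS sets, order the inter-strip demands first (in their collapsed VS order) followed by the intra-strip demands grouped strip by strip, and use witness segments on strip boundaries for the inter-strip demands so that they cannot touch any intra-strip rectangle. The paper states these ingredients more tersely (``W.l.o.g.\ the separating segments lie on strip boundaries; they cut $R'$ iff they cut its preimage $R$; then process each strip in arbitrary order''), whereas you spell out the perturbation to place $c_i$ off a column and the case split showing that the boundary witness $\beta_i$ still avoids later inter-strip rectangles, but these are explicit verifications of the same observations rather than a different argument.
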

\begin{proof}
  Let $(P',D')=\pi_{\mathcal{S}}(P,D)$ be the inter-strip instance.
  Let $D''\subseteq D'$ be a vertically separable demand subset of size $\vsep(P',D')$. Let $D_{\textnormal{I}}=\{(p,q)\in D\mid (\pi_{\mathcal{S}}(p),\pi_{\mathcal{S}}(q))\in D''\,\}$ be the demand subset of $D$ corresponding to $D''$.  For any $S\in\mathcal{S}$, let $D_S$ be a vertically separable demand subset of $D\cap(S\times S)$ of cardinality $\vsep(P\cap S, D\cap(S\times S))$. We claim that $D_{\textnormal{I}}\cup\bigcup_{S\in\mathcal{S}}D_S$ is a vertically separable demand subset of $D$. Obviously, its size is equal to the right side of the inequality that we want to prove.

Since $D''$ is vertically separable, we can sequentially cut it via vertical line segments. W.l.o.g.\, we may assume that these segments lie on strip boundaries. Observe also that any vertical line segment~$z$ on some strip boundary cuts (intersects) some rectangle $R(\pi_{\mathcal{S}}(p),\pi_{\mathcal{S}}(q))$ with~$(\pi_{\mathcal{S}}(p),\pi_{\mathcal{S}}(q))\in D''$ if and only if cuts (intersects) $R(p,q)$ where $(p,q)\in D_{\textnormal{I}}$.  Hence, the same sequence of vertical line segments on the strip boundaries that cuts $D''$ also cuts $D_{\textnormal{I}}$.

After cutting $D_{\textnormal{I}}$, process the strips $S\in\mathcal{S}$ in an arbitrary order and cut the demand set $D_S$ with segments contained in the interior of $S$. Obviously, there is no interference between cutting within different strips and none of the segments we used for cutting~$D_{\textnormal{I}}$ intersects any rectangle in~$D_S$ because they lie on the strip boundaries.
\end{proof}

\paragraph{Divide-and-conquer} Choose the strips $\sset$ so that $s=|\sset| = 2^{\sqrt{\log n}}$. Thus the inter-strip instance admits an approximation of ratio $O(\log s) = O(\sqrt{\log n})$; in fact, we obtain a solution of cost $O(\sqrt{\log n}) \vs(\pi_{\sset}(P,D))$. We recursively solve each intra-strip instance $(P_i, D_i)$, and combine the solutions from these $s+1$ sub-instances. (Details on how the solution can be combined are deferred to Appendix~\ref{sec:sublogarithmic_general_approx:analysis}.) 

We show by induction on the number of points that for any instance $(P,D)$ the cost of the computed solution is $O(\sqrt{\log n}) \vs(P,D)$. (Here, we do not take into account the cost incurred by combining the solutions to the sub-instances.) By induction hypothesis, we have for each $(P_i,D_i)$ a solution of cost $O(\sqrt{\log n}) \vs(P_i,D_i)$ since $|P_i|<|P|$. Note that we cannot use the induction hypothesis for the inter-strip instance since $|P'|=|P|$, which is why we need the $O(\log s)$-approximation algorithm. Using sub-additivity we obtain: 
  \begin{displaymath}
   O(\sqrt{\log n})\left(  \vsep(\pi_{\mathcal{S}}(P,D))+\sum_{S\in\mathcal{S}}\vsep(P\cap S, D\cap (S\times S)) \right) = O(\sqrt{\log n}) \vs(P,D) \,.
  \end{displaymath}
There is an additional cost incurred by combining the solutions of the sub-instances to a feasible solution of the current instance. In  Appendix~\ref{sec:sublogarithmic_general_approx:analysis} we argue that this can be done at a cost of $O(\opt)$ for each of the $\log n/\log s=\sqrt{\log n}$ many levels of the recursion. (This prevents us from further improving the approximation factor by picking $\smash{s=2^{o(\sqrt{\log n})}}$.)

\newpage

\bibliographystyle{plainurl}
\bibliography{refs}
\appendix

\section{BST and Manhattan Problems}%
\label{sec: BST and Man}

\paragraph{MinASS problem} Let $p,q \in {\mathbb R}^2$ be points that are not horizontally or vertically aligned. Denote by $\Box_{p,q}$ the closed rectangle with two of its corners being $p$ and $q$.
We say that this rectangle is $P$-empty if $P \cap \Box_{p,q} = \{p,q\}$; otherwise, we say that it is $P$-satisfied.
We say that a collection of points $P \subseteq {\mathbb R}^2$ is \textit{arboreally satisfied} if, for any pair $p,q \in P$, the rectangle $\Box_{p,q}$ is $P$-satisfied.
In the MinASS problem, we are given a collection of points $P \subseteq {\mathbb R}^2$ such that no two points are horizontally or vertically aligned, and our goal is to compute a set $Q \supseteq P$ such that $Q$ is arboreally satisfied.

\begin{theorem}
Any set $Q \subseteq {\mathbb R}^2$ is arboreally satisfied if and only if there is a Manhattan path connecting every pair $p,q \in Q$.
\end{theorem}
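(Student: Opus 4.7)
My plan is to prove both directions separately, first observing that pairs $p, q \in Q$ which are horizontally or vertically aligned are trivial on both sides of the equivalence: arboreal satisfaction is vacuous for such pairs (the rectangle $\Box_{p,q}$ is only defined when $p$ and $q$ are non-aligned), and a Manhattan path is the direct edge $p\text{-}q$ in $G_Q$. So throughout, I restrict attention to non-aligned pairs.

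For the ``Manhattan $\Rightarrow$ arboreally satisfied'' direction, I would fix a non-aligned pair $p, q$ and take any Manhattan path $p = x_0, x_1, \dots, x_k = q$ in $G_Q$. Consecutive vertices are axis-aligned, while $p$ and $q$ are not, so $k \geq 2$. The identity $\sum_{i=0}^{k-1}\|x_i - x_{i+1}\|_1 = \|p-q\|_1$ forces the total horizontal and vertical displacements to telescope exactly, so the path is monotone in each coordinate and therefore stays inside $\Box_{p,q}$. In particular $x_1 \in Q \cap \Box_{p,q}$; it differs from $p$ since the two are distinct endpoints of an edge, and from $q$ because $x_1$ is axis-aligned with $p$ while $q$ is not. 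Hence $\Box_{p,q}$ contains a point of $Q$ other than $p$ and $q$.

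For the converse, I would induct on $n \coloneqq \abs{Q \cap \Box_{p,q}}$, showing that every non-aligned pair admits a Manhattan path; aligned pairs are handled by direct edges. Arboreal satisfaction rules out $n = 2$, so the base case is vacuous. For $n \geq 3$, pick any $r \in (Q \cap \Box_{p,q}) \setminus \set{p, q}$. Because $r \in \Box_{p,q}$, both coordinates of $r$ lie weakly between those of $p$ and $q$, yielding the key additivity $\|p-r\|_1 + \|r-q\|_1 = \|p-q\|_1$. I then split into subcases based on where $r$ sits in $\Box_{p,q}$: if $r$ coincides with a non-$\set{p,q}$ corner of $\Box_{p,q}$, then $r$ is aligned with both $p$ and $q$ and $p\text{-}r\text{-}q$ is already a Manhattan path; if $r$ lies strictly inside an edge, then $r$ is aligned with exactly one of $p$ or $q$, giving one direct edge and one smaller non-aligned subproblem; and if $r$ lies in the open interior of $\Box_{p,q}$, then neither alignment holds, producing two smaller non-aligned subproblems. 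In each non-trivial subcase I would verify strict decrease of the measure, i.e., $\abs{Q \cap \Box_{p,r}}, \abs{Q \cap \Box_{r,q}} < n$, which follows because $r$ differs from the appropriate corner of $\Box_{p,q}$ in at least one coordinate, preventing $q$ from lying in $\Box_{p,r}$ and $p$ from lying in $\Box_{r,q}$. The induction hypothesis then furnishes Manhattan paths on the subproblems, and concatenating them yields a sequence of axis-aligned hops of total $\ell_1$-length $\|p-r\|_1 + \|r-q\|_1 = \|p-q\|_1$.

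The main obstacle will be the bookkeeping in the induction step: carefully enumerating the possible positions of $r$ inside $\Box_{p,q}$ and, in each case, identifying a coordinate in which $r$ is strictly separated from the ``opposite'' endpoint, so as to secure the strict decrease of the induction measure. Once this is set up cleanly, the remaining ingredients---monotonicity of shortest rectilinear paths, and additivity of $\ell_1$-distance along intermediate points of an axis-aligned bounding rectangle---are essentially routine.
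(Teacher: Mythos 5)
Your proposal is correct and follows essentially the same route as the paper's proof: the forward direction argues that an interior vertex of a Manhattan path lies in $\Box_{p,q}$, and the reverse direction picks an intermediate point $r\in Q\cap\Box_{p,q}$, uses the $\ell_1$-additivity $\|p-r\|_1+\|r-q\|_1=\|p-q\|_1$, and reduces to smaller subproblems. The only cosmetic difference is the well-ordering used: you induct on $|Q\cap\Box_{p,q}|$ (with an explicit case split on where $r$ sits in the box), while the paper takes a minimal-$\ell_1$-distance counterexample and observes that the witnessing point cannot be a corner; both yield the same descent.
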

\begin{proof}
The ``if'' direction is quite obvious: Suppose there is a Manhattan path connecting every pair. Then, consider any $p,q \in Q$ that is not aligned and $\Box_{p,q}$. Consider any point $v \not \in \{p,q\}$ on the Manhattan path $Z$ connecting $p$ to $q$. Clearly, $v \in \Box_{p,q}$.

For the ``only if'' direction, assume that the set $Q$ is arboreally satisfied but (for contradiction) not Manhattan-connected.
Consider points $p,q \in Q$ that are not $M$-connected (if there are many such pairs, choose one that minimizes the $\ell_1$-distance $||p-q||_1$.)
$\Box_{p,q}$ contains some point $q' \not\in \{p,q\}$ and point $q'$ is not at a corner of $\Box_{p,q}$ (otherwise, there would be a Manhattan path $p \rightarrow q' \rightarrow q$.) Therefore, $p$ is not aligned with $q'$ or $q$ is not aligned with $q'$. Assume it is the former (the other case is similar). Then $p$ and $q'$ is not connected, thus contradicting the choice of $(p,q)$ since~$||p-q'||_1 < ||p-q||_1$.
\end{proof}

\section{Proof of Lemma~\ref{lem:opt_vs_ir} }
\label{sec:proof of new lower bound}
	We repeat the proof from \cite{demaine2009geometry} and first show that $\ir(P,D) \leq \vs(P,D)$.
	Consider a set of independent monotone demands $R_j=R(p_j,q_j)$, for~$j=1,\ldots,k$, ordered decreasingly by width.
	Consider the rectangle $R_j$, $j>1$, which, among all rectangles intersecting the interior of $R_1$, has minimal~$x(p_j)$. Without loss of generality, we may assume that such a rectangle exists (otherwise $R_1$ can be separated trivially), and if there are several such rectangles, choose $j$ as small as possible.

	We use the observation that if two non-conflicting rectangles intersect, then one of them crosses the other on two of its sides, either top and bottom, or left and right.
	If $p_1 \neq p_j$, then $x(p_j) > x(p_1)$. Otherwise, $R_j$ must cross the left and right side of $R_1$, a contradiction to $R_1$ being the widest rectangle.
	Hence, we can place a line $\ell_1$ connecting top and bottom boundaries of $R_1$ just right of~$p_1$.
	If $p_1 = p_j$ on the other hand, then, by monotonicity and the previous observation, $x(q_j)<x(q_1)$ and~$y(q_j)>y(q_1)$.
	We show that there cannot be another rectangle whose interior intersects both~$R_1$ and the right border of $R_j$.
	Suppose there was such a rectangle $R_k$.
	If $R_k$ crosses the left and right sides of $R_j$ this contradicts the minimality of $x(p_j)$ for $R_j$.
	Therefore, the interior of $R_k$ could not have intersected the right border of $R_j$.
	Hence, we may place $\ell_1$ just right of $q_j$.
	Repeating the argument for the remaining rectangles, we obtain the statement.

	To show the second inequality, $\vs(P,D) \leq \opt(P,D)$, consider a vertically separable set of monotone demands, the corresponding demand rectangles $R_1,R_2,\ldots,R_k$ and their respective separating lines $\ell_1,\ell_2,\ldots,\ell_k$.
	Consider an optimal solution $Q$ that satisfies these demands as well as the Manhattan path it uses to satisfy $R_i$, that is the Manhattan-path connecting $p_i$ and $q_i$.
	This path must cross $\ell_i$ at some point horizontally. We associate with $R_i$ the pair $a_i,b_i \in P \cup Q$ of points directly right and left of $\ell_1$ on that path respectively.
	By the definition of separable demands, for two such point pairs $(a_i,b_i)$, $(a_j,b_j)$ that lie on the same horizontal line, either $x(a_i) < x(b_i) \leq x(a_j) < x(b_j)$, or $x(a_j) < x(b_j) \leq x(a_i) < x(b_i)$.
	This requires the monotonicity of the demands since it is otherwise possible that the top of some rectangle touches the bottom of another and the same point-pair is considered twice.
	We conclude that $k$ point-pairs on the same horizontal line consist of at least~$k+1$ distinct points.
	Moreover, since input-points lie on distinct rows, at most one of these is an input point.
	We can thus charge each demand $R_i$ to a distinct auxiliary point in the optimal solution.
	In fact, we can choose this to be either $a_i$ or $b_i$.

\section{Tightness of Theorem~\ref{thm:xcoord_approx}}%
\label{sec:tightness of is}

Let $s \in {\mathbb N}$ be any integer.
We show that there exists an $n$-point instance $(P,D)$ on $s$ columns such that $\is(P,D) = \Theta(n)$ while $\opt(P,D) = \Omega(n \log s)$.
We remark that this tightness holds even when $D$ is uniform.

There are many ways to prove this result, for instance, we can use the analysis of Wilber~\cite{wilber1989lower} adapted to the case with only $s$ columns. We sketch the proof here.

\newcommand{\wilber}{{\mathcal W}}

Let $\wilber(P)$ denote Wilber's first bound on $P$. It was proved in~\cite{demaine2009geometry} that $\wilber(P) \leq O(\opt(P,D))$, where $D = P \times P$ represents complete demands.

\begin{theorem}
Let $P$ be a random sequence of size $n$ on $s$ columns. Then $\wilber(P) =\Omega(n \log s)$ with high probability.
\end{theorem}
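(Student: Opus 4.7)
The plan is to fix a balanced reference tree $T$ over the $s$ columns (depth $h = \lceil \log_2 s \rceil$) and then bound $\wilber(P)$ level by level. Recall that, relative to $T$, Wilber's first bound associates to each internal node $v$ the time-ordered subsequence $S_v$ of points whose column lies in $v$'s subtree and counts \emph{alternations}: pairs consecutive in $S_v$ whose columns lie in different children of $v$. Writing $\alpha^{(d)}$ for the total alternation count summed over all nodes at depth $d$, one has $\wilber(P) = \sum_{d=0}^{h-1} \alpha^{(d)}$, and it suffices to show that this sum is $\Omega(n \log s)$ with high probability.

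First I would bound $E[\alpha^{(d)}]$ from below. Since the columns $c_1,\dots,c_n$ are i.i.d.\ uniform on $[s]$, conditioned on $c_i$ lying in the subtree of a level-$d$ node $v$, the next bit of $c_i$ (left vs.\ right child of $v$) is uniform on $\{0,1\}$ and independent across different time indices. Hence, conditional on $|S_v| = m_v$, the left/right labels inside $S_v$ form an i.i.d.\ $\operatorname{Ber}(1/2)$ sequence and produce $(m_v-1)/2$ alternations in expectation. Summing over the $2^d$ level-$d$ nodes yields $E[\alpha^{(d)}] \ge (n - 2^d)/2 = \Omega(n)$ for every $d \le \log_2(n/2)$.

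To upgrade this to a high-probability statement, I would invoke the bounded-differences (McDiarmid) inequality. Since the $c_i$'s are independent and replacing any single $c_i$ perturbs $\alpha^{(d)}$ by only $O(1)$ (it affects only alternations involving $p_i$ and its immediate $S_v$-neighbors at the $h$ levels of $T$), one obtains $|\alpha^{(d)} - E[\alpha^{(d)}]| = O(\sqrt{n \log n})$ with high probability. A union bound over the $h \le \log_2 s + 1$ levels, followed by summing the $\Omega(\log s)$ many levels with $d \le \log_2(n/2)$ (which covers all levels in the meaningful regime $s \le n$), gives $\wilber(P) \ge \Omega(n \log s) - O(\log s \cdot \sqrt{n \log n}) = \Omega(n \log s)$ with high probability, provided $n$ is sufficiently large compared to $\log s$.

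The main obstacle is justifying the $O(1)$ per-level Lipschitz bound used in McDiarmid. Replacing $c_i$ by a freshly sampled column can simultaneously remove $p_i$ from some $S_v$ (potentially merging a previously separated pair of $S_v$-neighbors into a new consecutive pair and thereby creating or deleting an alternation) and insert it into others (potentially splitting a previously consecutive pair and creating or deleting up to two alternations). A careful case distinction confirms that at each level $d$ the net change to $\alpha^{(d)}$ is bounded by a small constant, after which the McDiarmid estimate applies and the remaining assembly reduces to routine arithmetic.
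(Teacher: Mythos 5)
Your argument is correct and is exactly the route the paper intends: the paper only gestures at ``Wilber's analysis adapted to $s$ columns'' without writing it out, and your level-by-level alternation count over a balanced reference tree (expected $\Omega(n)$ alternations per level from i.i.d.\ $\mathrm{Ber}(1/2)$ child labels, plus McDiarmid concentration with an $O(1)$ per-level Lipschitz constant) is the standard instantiation of that analysis. No substantive differences to report.
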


In particular, there exists an input $P$ with $\wilber(P) = \Omega(n \log s)$.
Since $\is(P,D) \leq n$, we have our desired result.

\section{NP-Hardness Proof for \problemname}%
\label{sec:app:np-hardness}
\label{sec:hardness_proof}

We first observe that $\alpha$ – the number of non-$SC$ demands $D_{\overline{SC}}$ – has polynomial size, namely
\begin{equation}
  \alpha
= 3m \cdot 4 + n \cdot 4
= 12m + 4n
.
\end{equation}
To see this, note that each of the $m$ clause gadgets $GC_j$ contains $3$ literal points $\ell_{jk}$.
Such an~$\ell_{jk}$ \enquote{connects} its corresponding positive or negative variable literal point ($x_i^+$ or $x_i^-$) to the clause point~$c_j$ via a dedicated chain of $4$ demands (going through a dedicated literal connector and a dedicated connection point).
Moreover, for each of the $n$ variable gadgets $GX_i$, we have a demand~$(S, x_i)$, a demand $(x_i, d_i)$, and two demands $(x_i^+, d_i)$ and $(x_i^-, d_i)$.
Together, these account for all non-$SC$ demands.

We start with the easier direction of the reduction, showing that a satisfiable assignment of $\phi$ gives rise to a \MGMC/ solution of size $\alpha$.
\begin{lemma}%
\label{lem:phisat}
If $\phi$ is satisfiable, then there exist solutions to $(P_{\phi}, D_{\phi})$ of size $\alpha$.
\end{lemma}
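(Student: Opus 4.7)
The plan is to exhibit an explicit feasible solution $Q_\phi$ with $\abs{Q_\phi}=\alpha$; since Lemma~\ref{lem:opt_vs_ir} applied to the monotone, vertically separable set $D_{\overline{SC}}$ already yields the matching lower bound $\opt(P_\phi,D_\phi)\ge\vs(P_\phi,D_\phi)\ge\abs{D_{\overline{SC}}}=\alpha$, such a solution is automatically optimal, proving the lemma.

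To build $Q_\phi$ from a satisfying assignment $b$, I would associate exactly one auxiliary point with each of the $\alpha$ non-$SC$ demands, chosen at a corner of that demand's rectangle, so that the demand is satisfied by the resulting L-shaped \mpath/ with a single bend. The corner is selected based on $b$: inside each variable gadget $GX_i$, orient the bends on the \emph{active} side (positive if $b(X_i)=\mathrm{true}$, negative otherwise) so that the chosen corners, together with the input points $x_i,x_i^\pm,x_{ik}^\pm$, form an $M$-connected chain from $x_i$ upward through $x_i^+$ out to every $x_{ik}^+$ (respectively down through $x_i^-$ to every $x_{ik}^-$). The inactive-side demands such as $(x_i^-,d_i)$ or $(x_i^+,d_i)$ and the demand $(x_i,d_i)$ are instead assigned to corners that satisfy them individually but do not open an \mpath/ out of $x_i$ in that direction. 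For each variable-clause path, the two connection demands $(x_{ik}^\pm,p_{ij}^\pm)$ and $(p_{ij}^\pm,\ell_{jk})$ each receive one corner point, placed so that their two L-shapes concatenate into a single \mpath/ $x_{ik}^\pm\leadsto\ell_{jk}$; each intra-clause demand $(\ell_{jk},c_j)$ receives another corner point. Finally, the $n$ $SX$ demands $(S,x_i)$ each receive one point, placed on the triangular grid so that the union of the $n$ chosen corners together with $S$ and all $x_i$ forms an $M$-connected staircase from $S$ to each variable point.

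The bulk of the verification is feasibility. By construction, every non-$SC$ demand is satisfied by the direct L-path through its corner point. For each $SC$ demand $(S,c_j)$, I would fix a literal $\ell_{jk}$ of $C_j$ that is true under $b$ and exhibit an \mpath/ $S\leadsto c_j$ as the concatenation of (i) the staircase $S\leadsto x_i$, where $X_i$ is the variable of $\ell_{jk}$, (ii) the active-side chain $x_i\leadsto x_i^\pm\leadsto x_{ik}^\pm$ inside $GX_i$, (iii) the variable-clause path $x_{ik}^\pm\leadsto p_{ij}^\pm\leadsto\ell_{jk}$, and (iv) the intra-clause segment $\ell_{jk}\leadsto c_j$. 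Because each of these four pieces is monotone in both coordinates and the four pieces meet only at the indicated input points (each an endpoint shared by adjacent pieces that travel in perpendicular directions), their concatenation is again monotone and has $\ell_1$-length equal to $\norm{S-c_j}_1$.

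The main obstacle is purely geometric bookkeeping: one must verify that the corners assigned to neighbouring demands are compatible, so that the L-shapes truly chain into a single \mpath/ and not merely touch at endpoints. This is where the precise placement dictated by \cref{fig:gadgets} enters — in particular, that $x_i^+$ lies strictly inside $R(x_i,d_i)$ near its top-left corner so that an up-then-right chain from $x_i$ can pass through $x_i^+$, that the descending-diagonal arrangement of $x_{ik}^+$ (respectively $x_{ik}^-$) lines them up on shared rows/columns with $x_i^+$ (respectively $x_i^-$), and analogously that clause literal points $\ell_{jk}$ lie on the same row as the matching $p_{ij}^\pm$ and on the same column as the variable-clause path's bend toward $c_j$. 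Once these coordinate alignments are checked against the gadget construction, the solution described above contains exactly $\alpha$ auxiliary points and satisfies all demands in $D_\phi$, completing the proof.
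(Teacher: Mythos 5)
Your high-level plan matches the paper's: construct a \enquote{boolean solution} of size exactly $\alpha$, with one auxiliary point per non-$SC$ demand, and argue the lower bound $\alpha$ via Lemma~\ref{lem:opt_vs_ir} applied to the monotone, vertically separable set $D_{\overline{SC}}$. However, there is a genuine gap in the specific placement rule you state, and it is precisely the delicate point of the construction.

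You write that each of the $\alpha$ auxiliary points is \enquote{chosen at a corner of that demand's rectangle.} For the $XD$ demand $(x_i,d_i)$, the only available corners are $(x(x_i),y(d_i))$ and $(x(d_i),y(x_i))$. Neither of these creates an \mpath/ from $x_i$ to $x_i^+$ (nor to $x_i^-$). Indeed, $x_i^+$ lies strictly in the interior of $R(x_i,d_i)$, so the path $x_i \to (x(x_i),y(d_i)) \to (x(x_i^+),y(d_i)) \to x_i^+$ overshoots to $y(d_i)$ and returns, making it non-monotone and longer than $\norm{x_i-x_i^+}_1$; and no demand of the form $(x_i,x_i^{\pm})$ exists whose corner could supply the missing intermediate point. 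This directly contradicts the requirement you yourself impose a sentence later, namely that \enquote{the chosen corners, together with the input points $x_i, x_i^{\pm}, x_{ik}^{\pm}$, form an $M$-connected chain from $x_i$ upward through $x_i^+$.} Without that chain, step~(ii) of your concatenation for the $SC$ demands breaks, and no $SC$ demand is satisfied. The paper's fix is to place the auxiliary point for the $XD$ demand not at a corner of $R(x_i,d_i)$, but at the top-left corner of the \emph{non-demand} rectangle $R(x_i, x_i^+)$ (or $R(x_i, x_i^-)$), i.e.\ at $(x(x_i), y(x_i^+))$, which lies in the interior of $R(x_i,d_i)$. This single point simultaneously (a) satisfies $(x_i, d_i)$ via the monotone path $x_i \to (x(x_i),y(x_i^+)) \to x_i^+ \to (\text{corner of } R(x_i^+,d_i)) \to d_i$, and (b) opens the \mpath/ $x_i \to x_i^+$ that lets $SC$ demands escape through the active side of the gadget. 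Your rule \enquote{corner of the demand's rectangle} should therefore be weakened to \enquote{some point inside the demand's rectangle}; the $XD$ demand is exactly where this distinction matters, and it is what turns the variable gadget into a true boolean \enquote{choice} gadget.
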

\begin{proof}
Consider the family of (partial) solutions $Q_{\phi}$ constructed as follows:
For each of the $\alpha - n$ non-$SC$ and non-$XD$ demands $d \in D_{\overline{SC}} \setminus \set{(x_i, d_i) | i \in \intcc{n}}$, add the top-left corner of demand $d$ to $Q_{\phi}$, ensuring that $d$ is trivially satisfied.
For the $n$ $XD$ demands $(x_i, d_i)$ however, we add \emph{either} the top-left corner of the demand $(x_i, x_i^+)$ \emph{or} of the demand $(x_i, x_i^-)$ to $Q_{\phi}$.
Note that in both cases, since $x_i^+$ and $x_i^-$ each have already established a trivial connection to $d_i$, we get a \mpath/ from $x_i$ to $d_i$, satisfying the $XD$ demand $(x_i, d_i)$.

Any (partial) solution $Q_{\phi}$ constructed in this way has size $\alpha$ (it contains one new point for each non-$SC$ demand), satisfies all non-$SC$ demands, and may or may not satisfy some $SC$ demands.
Moreover, there is a natural one-to-one mapping between such partial solutions and variable assignments for $\phi$ (set $X_i = 1$ if and only if $(x_i, d_i)$ is satisfied via $x_i^+$).
Now consider which $SC$ demands are satisfied by such a solution/variable assignment $Q_{\phi}$.
Note that $Q_{\phi}$ yields a \mpath/ from $x_i^+$ to~$c_j$ \emph{if and only if} $C_j$ contains the positive literal $X_i$ (the \mpath/ can then use the corresponding positive variable-clause path).
Similarly, $Q_{\phi}$ produces a \mpath/ from $x_i^-$ to $c_j$ \emph{if and only if} $C_j$ contains the negative literal $\neg X_i$.
On the other hand, $Q_{\phi}$ yields a \mpath/ from $S$ to $x_i^+$ \emph{if and only if} the variable assignment $Q_{\phi}$ sets $X_i = 1$ and a \mpath/ from $S$ to $x_i^-$ otherwise.
Together, we get a \mpath/ from $S$ to $c_j$ if and only if $C_j$ is satisfied via $X_i$ in the variable assignment $Q_{\phi}$.
As a consequence, the solution $Q_{\phi}$ satisfies exactly those $SC$ demands $(S, c_j)$ for which $C_j$ is satisfied by the variable assignment $Q_{\phi}$.
In summary, $Q_{\phi}$ is a boolean solution (as defined in \cref{sec:np_hardness}).

Now, if $\phi$ is satisfiable, fix a solution $Q_{\phi}$ that corresponds to a satisfying variable assignment.
From the above it follows that $Q_{\phi}$ has size $\alpha$ and satisfies all demands in $D_{\phi}$.
\end{proof}

The remainder of this section proves the remaining direction, namely that any solution of size~$\alpha$ implies that $\phi$ is satisfiable (\cref{lem:phinotsat}).
As this turns out to be more involved, we require some preparation.
First, we provide a simple result about the structure of \enquote{triangular} instances.
Afterwards, we describe a partitioning of $(P_{\phi}, D_{\phi})$ and $Q_{\phi}$ into suitable subinstances/-solutions (some of which are triangular).
This allows us to argue separately about the structure of these subinstances (triangular instances and the variable-clause paths), which will help us to get the desired result.

\paragraph{Triangular Instances}
A \emph{triangular instance} $(P_{\Delta}, D_{\Delta})$ of size $n \in \N$ consists of $n + 1$ input points $P_{\Delta} = \set{p_0, p_1, \dots, p_n}$ and of $n$ demands $D_{\Delta} = \set{(p_0, p_i) | i \in \intcc{n}}$.
The points~$p_1, p_2, \dots, p_n$ form a descending diagonal and lie to the top-right of $p_0$.
We refer to the point set~$G_{\Delta} \coloneqq \set{(x, y) | \exists i, j \in \intcc{n}_0\colon x = x(p_i), y = y(p_j)} \cap R(D_{\Delta})$ as the \emph{(triangular) grid} of $(P_{\Delta}, D_{\Delta})$.

Optimal solutions for triangular instances have size $n$, since there are solutions of that size (e.g., the $n$ grid points above $p_0$) and the $n$ demands from $D_{\Delta}$ are monotone and vertically separable (implying a lower bound of $n$ by \cref{lem:opt_vs_ir}).
The next \lcnamecref{prop:triangular_demands} states that any optimal solution to a triangular instance lies on the instance's triangular grid.
\begin{proposition}%
\label{prop:triangular_demands}
Consider a triangular instance $(P_{\Delta}, D_{\Delta})$ of size $n \in \N$.
If $Q_{\Delta}$ has size $n$ and satisfies $D_{\Delta}$, then $Q_{\Delta}$ is a subset of the instance's grid $G_{\Delta}$.
\end{proposition}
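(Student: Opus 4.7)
The plan is to prove this by strong induction on the instance size $n$. The base case $n = 1$ is immediate: the lone auxiliary point $q$ must form the $L$-shape $p_0 \to q \to p_1$, forcing $q$ to be aligned with both endpoints; this leaves only the two grid corners $(x(p_0), y(p_1))$ and $(x(p_1), y(p_0))$.

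For the inductive step, I would first observe that $D_\Delta$ is monotone and vertically separable: ordering the demand rectangles from widest to narrowest, $R_n, R_{n-1}, \ldots, R_1$, and placing a vertical separator inside $R_i$ at any $x$-coordinate in $(x(p_{i-1}), x(p_i))$ (with $x(p_0)$ used for $R_1$) verifies the required property. Since $|D_\Delta| = |Q_\Delta| = n$, \cref{cor:coveringsubsolutions} provides a bijection $c\colon Q_\Delta \to D_\Delta$ with $q \in R(c(q))$. Moreover, since $|Q_\Delta| = \opt(P_\Delta, D_\Delta)$, every $q \in Q_\Delta$ must appear as a turn point on at least one Manhattan path $\pi_i$; otherwise $Q_\Delta \setminus \{q\}$ would still be feasible, contradicting minimality.

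The key structural claim is that no auxiliary point lies in an off-grid column or row. If $q \in Q_\Delta$ has $x(q) \notin \{x(p_0), \ldots, x(p_n)\}$, then on any path where $q$ is a turn, its vertical neighbor must share this off-grid $x$-coordinate; since no input point has an off-grid $x$-coordinate, this neighbor must itself lie in $Q_\Delta$. Hence every off-grid column contains at least two auxiliary points, and by symmetry the same bound holds for off-grid rows.

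To close the induction, I would peel off the widest demand $R_n$. The $R_n$-exclusive region $R_n \setminus \bigcup_{i < n} R_i = R_n \cap \{x > x(p_{n-1})\}$ can host at most one auxiliary point by injectivity of $c$, and such an aux must be $q_n := c^{-1}(R_n)$. A case analysis on the last edge of $\pi_n$ (entering $p_n$ either horizontally from row $y(p_n)$ or vertically from column $x(p_n)$), combined with the column/row structural claim and the fact that the only input points contained in $R_n$ are $p_0$ and $p_n$, pins $q_n$ to a grid intersection (typically $(x(p_n), y(p_0))$ or $(x(p_0), y(p_n))$). Contracting by removing $p_n$ and $q_n$ yields a triangular sub-instance of size $n-1$, to which the induction hypothesis applies for the remaining auxiliary points. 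I expect the main obstacle to be the bookkeeping when $q_n$ is used by several paths $\pi_i$ with $i < n$; the remedy would be a local rerouting argument that replaces $q_n$'s role on such paths by a grid corner of $R_n$ before performing the contraction.
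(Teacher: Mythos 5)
Your setup (monotonicity, vertical separability, the bijection from Corollary~\ref{cor:coveringsubsolutions}, and the fact that every point of a minimum solution must be a turn on some fixed family of \mpaths/) is sound, but the argument has two genuine gaps. First, the structural claim that every off-grid column (or row) hosting a turn point contains at least two auxiliary points is true but inert as stated: two auxiliary points in one off-grid column can be charged to two \emph{different} demands under the bijection $c$, so no counting contradiction follows, and you never actually derive one. Second, the peeling step does not go through as described. The exclusive region $R_n \cap \{x > x(p_{n-1})\}$ only localizes an auxiliary point when $\pi_n$ enters $p_n$ \emph{vertically}; if $\pi_n$ enters $p_n$ horizontally along row $y(p_n)$, the preceding turn point can lie at an off-grid $x$-coordinate anywhere over $\bigcup_{i<n} R_i$, and nothing in your argument pins it to the grid. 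Worse, even when $q_n$ is pinned to a grid corner such as $(x(p_0), y(p_n))$, that point lies in \emph{every} $R_i$, and paths $\pi_i$ with $i<n$ may genuinely turn there (e.g.\ $p_0 \to (x(p_0),y(p_n)) \to (x(p_i),y(p_n)) \to p_i$); deleting $q_n$ then breaks feasibility of the size-$(n-1)$ residual solution, and a rerouting that restores it generally needs to \emph{add} a point you cannot afford. That "bookkeeping" you defer is precisely the content of the proposition, so the induction does not close.

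For comparison, the paper avoids both difficulties by taking a minimal counterexample and applying a \emph{projection} rather than a deletion: assuming a smallest $n$ with a non-grid point $q$, it either (i) projects everything with $x$-coordinate $\le x(p_1)$ onto the line $x = x(p_1)$ and drops $p_1$ (when $x(q)\ge x(p_1)$, or symmetrically in $y$), or (ii) in the remaining case finds an \mpath/ that leaves the axis through $p_0$ at a non-grid point to reach another non-grid point, and the same projection merges those two points. In every case the projection preserves feasibility of all surviving demands, strictly decreases the solution size, and keeps a non-grid point, contradicting minimality of $n$. The merging effect of projection is exactly what substitutes for the rerouting argument missing from your proposal; if you want to salvage your induction, you would need to replace "remove $p_n$ and $q_n$" by such a feasibility-preserving projection.
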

\begin{proof}
The \lcnamecref{prop:triangular_demands} holds trivially for $n = 1$, since then the sole solution point must lie either in the top-left or in the bottom-right corner of the sole demand $(p_0, p_1)$.
For the case of a contradiction, assume the \lcnamecref{prop:triangular_demands} is not true for all $n \in \N \setminus \set{1}$ and fix the smallest~$n$ for which this is the case.
So there is a triangular instance $(P_{\Delta}, D_{\Delta})$ of size $n$ as well as a corresponding solution $Q_{\Delta}$ of size $n$ that contains a non-grid point $q \in Q_{\Delta} \setminus G_{\Delta}$.
Without loss of generality, assume $p_0 = (0, 0)$.

Depending on the position of $q$, transform $(P_{\Delta}, D_{\Delta})$ into a suitable triangular instance~$(P'_{\Delta}, D'_{\Delta})$ of size $n-1$ with a corresponding solution $Q'_{\Delta}$ of size at most $n-1$ containing a non-grid point.
Once this is achieved, we immediately get a contradiction to the minimality of $n$.

Consider first the case $x(q) \geq x(p_1)$.
We transform $(P_{\Delta}, D_{\Delta})$ into $(P'_{\Delta}, D'_{\Delta})$ by removing both the input point $p_1$ and the demand $(p_0, p_1)$ and by projecting $p_0$ to the $x$-coordinate $x(p_1)$ (not changing its $y$-coordinate).
To construct $Q'_{\Delta}$ from $Q_{\Delta}$ we project all points with $x$-coordinate at most~$x(p_1)$ to $x$-coordinate $x(p_1)$ and remove any points with $y$-coordinate $> y(p_2)$ (which cannot help to connect $p_0$ to any of $p_2, p_3, \dots, p_n$).
One can easily check that $Q'_{\Delta}$ is a solution for $(P'_{\Delta}, D'_{\Delta})$:
Any \mpath/ going through one of the projected points remains intact, as the projected point is still reachable from the projection of $p_0$ (by going straight up).
Moreover, $\abs{Q'_{\Delta}} < \abs{Q_{\Delta}}$, since we either removed the top-left grid point if it was in~$Q_{\Delta}$ (since it has $y$-coordinate $y(p_1) > y(p_2)$) or at least two of the projected points had the same $y$-coordinate (or there could not be a \mpath/ from $p_0$ to~$p_1$ in $Q_{\Delta}$), causing them to get merged during the projection.
Note that $q \in Q'_{\Delta}$, since this case assumes $x(q) \geq x(p_1)$, such that $q$ is not affected by the projection.
As stated above, we get a contradiction to the minimality of $n$.

The case $y(q) \geq y(p_n)$ yields the same contradiction via a symmetrical argument.
Thus, it remains to consider the case that $q$ and all other non-grid points have $x$-coordinate
$< x(p_1)$ and~$y$-coordinate $< y(p_n)$.
Then there must we some \mpath/ that leaves either the $x$- or the $y$-axis at some non-grid point $q_1 \in Q_{\Delta}$ to reach another non-grid point $q_2 \in Q_{\Delta}$.
We consider only the case that $q_1$ lies on the $y$-axis; the other case is proven symmetrically.
We transform $(P_{\Delta}, D_{\Delta})$ into~$(P'_{\Delta}, D'_{\Delta})$ using the same construction as above (projecting everything left of $p_1$ onto $x(p_1)$).
This causes~$q_1$ and $q_2$ to merge (giving $\abs{Q'_{\Delta}} < \abs{Q_{\Delta}}$) and ensures the existence of a non-grid point in $Q'_{\Delta}$ (the projection of $q_1$ and $q_2$).
Again, as stated above, this yields a contradiction to the minimality of~$n$.
\end{proof}

\paragraph{Partitioning the Instance \& Solution}
Consider a solution $Q_{\phi}$ to $(P_{\phi}, D_{\phi})$ with $\abs{Q_{\phi}} = \alpha$.
We partition $(P_{\phi}, D_{\phi})$ and $Q_{\phi}$ into suitable subinstances $(P_{\bullet}, D_{\bullet})$ and subsolutions $Q_{\bullet}$ such that the different $(P_{\bullet}, D_{\bullet})$ cover different areas, $Q_{\bullet}$ solves $(P_{\bullet}, D_{\bullet})$, and $\abs{Q_{\bullet}} = \abs{D_{\bullet}}$.
This will allow us to argue about the structure of these different subsolutions in the proof of \cref{lem:phinotsat} (e.g., by realizing that one of the subinstances is a triangular solution, such that we can apply \cref{prop:triangular_demands}).
For our partitioning, define
\begin{itemize}[noitemsep]
\item $D_{SC} = \set{(S, c_j) | j \in \intcc{m}}$ (the $m$ $SC$ demands),
\item $D_{SX} = \set{(S, x_i) | i \in \intcc{n}}$ (the $n$ $SX$ demands),
\item $D_{cl} = \set{(\ell_{jk}, c_j) | j \in \intcc{m}, k \in \intcc{3}}$ (the $3m$ demands in clause gadgets),
\item $D_{con} = \set{d \in D_{\phi} | \exists i, j\colon p_{ij}^+ \in d \lor p_{ij}^- \in d}$ (the $6m$ connection demands), and
\item $D_{var} = \set{d \in D_{\phi} | \exists i \in \intcc{n}\colon \set{x_i^+, x_i^-, d_i} \cap d \neq \emptyset}$ (the $3m + 3n$ demands in variable gadgets).
\end{itemize}
With this, we have $D_{\overline{SC}} = D \setminus D_{SC} = D_{SX} \cupdot D_{cl} \cupdot D_{con} \cupdot D_{var}$.
Remember that $\alpha = \abs{D_{\overline{SC}}}$ (by definition of $\alpha$) and note that $D_{\overline{SC}}$ is monotone and vertically separable (all non-$XD$ demands intersect only at corners, and the $XD$ demands can be separated by a vertical line segment immediately to the right of the variable point $x_i$).
Moreover, $Q_{\phi}$ also has size $\alpha$ and satisfies the demands~$D_{\overline{SC}} \subseteq D_{\phi}$.
Thus, \cref{cor:coveringsubsolutions} implies $Q_{\phi} \subseteq R(D_{\overline{SC}})$.
The pairwise intersections of the four areas $R(D_{\bullet})$ ($\bullet \in \set{SX, cl, var, con}$) contain only input points.
Thus, we get a natural partition $Q_{\phi} = Q_{SX} \cupdot Q_{cl} \cupdot Q_{var} \cupdot Q_{con}$, with $Q_{\bullet} = \set{q \in Q_{\phi} | q \in R(D_{\bullet})}$, such that $Q_{\bullet}$ satisfies~$D_{\bullet}$.
With this, we can apply \cref{lem:opt_vs_ir} to get $\abs{Q_{\bullet}} = \abs{D_{\bullet}}$ (since $\abs{Q_{\bullet}} \geq \abs{D_{\bullet}}$ by \cref{lem:opt_vs_ir}, and, when summing over $\bullet \in \set{SX, cl, var, con}$, each side of this inequality sums up to $\alpha$).
To finalize the partition, let~$P_{\bullet} \coloneqq \set{p \in P_{\phi} | p \in R(D_{\bullet})}$ (which strictly is not a partition of $P_{\phi}$, since, e.g.,~$x_i \in P_{SX} \cap P_{var}$).

\paragraph{Routing of Variable-Clause Paths}
We now use the above partitioning to argue about the structure of $Q_{con}$, showing that $Q_{con}$ consists of exactly one corner from each $d \in D_{con}$.
We will use this to show that $Q_{\phi}$ cannot \enquote{cheat} by jumping between different variable-clause paths, but instead must enter a variable-clause path at its entrance (at some variable gadget) and leave it at its exit (at some clause gadget).
\begin{proposition}%
\label{prop:connection_demands}
$Q_{con}$ satisfies each demand $d \in D_{con}$ via a dedicated point $q_d \in Q_{con}$ that lies in the top-left or bottom-right corner of $d$.
\end{proposition}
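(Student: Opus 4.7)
The plan is to combine the counting supplied by Corollary~\ref{cor:coveringsubsolutions} with the narrowness of the variable--clause corridors to force the auxiliary point that is assigned to each demand into one of the two free corners of the corresponding demand rectangle.

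I begin by invoking Corollary~\ref{cor:coveringsubsolutions} on the sub-instance $(P_{con}, D_{con})$ with solution $Q_{con}$. The partition argument preceding the proposition already gives $\abs{Q_{con}} = \abs{D_{con}}$, and $D_{con}$ is monotone and vertically separable as a subset of the monotone, vertically separable set $D_{\overline{SC}}$. The corollary therefore yields a bijection $c\colon Q_{con} \to D_{con}$ with $q \in R(c(q))$ for every $q$. Setting $q_d := c^{-1}(d)$ immediately gives the \emph{dedicated} point asserted by the proposition; it remains to identify its position.

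Next, I fix a demand $d = (a,b) \in D_{con}$, with $a$ at the bottom-left and $b$ at the top-right corner of $R(d)$ (by monotonicity). By the construction of the variable--clause corridors, the interior of $R(d)$ contains no input points. Since $a$ and $b$ are not axis-aligned, any Manhattan shortest path from $a$ to $b$ stays inside $R(d)$ and must bend at least once, and every such bend must happen at an auxiliary point strictly inside $R(d)$.

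The key step is to show that $q_d$ is in fact the only auxiliary point in the interior of $R(d)$, so the entire bend must take place there. Consider any $q' \in Q_{\phi} \setminus \set{q_d}$ lying in the interior of $R(d)$. By the $Q_{SX}, Q_{cl}, Q_{var}, Q_{con}$ partition and the explicit placement of gadgets and connection points, the rectangles $R(D_{SX})$, $R(D_{cl})$ and $R(D_{var})$ sit inside the $SX$ triangle and the clause and variable gadgets respectively, and are disjoint from the interior of the narrow corridor $R(d)$; hence $q' \in Q_{con}$. But then $c(q')$ is some demand $d' \neq d$, whereas the connection corridors of two distinct connection demands meet only at their shared input-point endpoints (the $p_{ij}^{\pm}$ and the literal points), so they cannot share an interior point---contradiction. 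Therefore the single bend of the shortest path satisfying $d$ must occur at $q_d$, which then coincides with one of the two remaining corners of $R(d)$, namely the top-left or the bottom-right corner.

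The main technical obstacle is the geometric disjointness claim in the previous paragraph: it requires a careful verification, against the detailed placement of clause gadgets, variable gadgets, and connection points described at the end of \cref{sec:np_hardness}, that (i) the $SX$, clause, and variable sub-rectangles indeed avoid the interiors of all variable--clause corridors, and (ii) any two connection corridors overlap only at shared input-point endpoints. Once these geometric facts are pinned down from the construction, the counting-plus-bend argument above closes the proof.
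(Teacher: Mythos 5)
There is a genuine gap, and it sits exactly where you flag ``the main technical obstacle'': your claim (ii), that two distinct connection corridors meet only at shared input-point endpoints, is false for this construction. A positive variable--clause path runs first up and then right, a negative one first right and then up, and in the region between the variable gadgets (bottom-left) and the clause gadgets (top-right) the horizontal arm of one path generically \emph{crosses} the vertical arm of another, producing a two-dimensional overlap of the interiors of the corresponding demand rectangles. This is why the construction only guarantees the weaker facts that no point is covered by more than two connection demands, that the top-left and bottom-right corners of a connection demand lie in no other connection demand, and that intersecting connection demands are one horizontal and one vertical. With crossings present, your key step collapses: $R(d)$ may well contain points of $Q_{con}$ that the bijection $c$ assigns to crossing demands $d'\neq d$, so $q_d$ need not be the unique auxiliary point in $R(d)$, and the ``single bend at $q_d$'' conclusion does not follow. (As a smaller point, even granting uniqueness of $q_d$ in the \emph{open} interior, a path could still bend at non-corner boundary points of $R(d)$, so you would need uniqueness in $R(d)\setminus\{a,b\}$.)

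The paper's proof works precisely around these crossings: assuming neither free corner of a horizontal demand $d$ is in $Q_{con}$, feasibility forces two points $q_1,q_2\in Q_{con}\cap R(d)$ with equal $x$-coordinate; \cref{cor:coveringsubsolutions} forces both to be covered by a crossing vertical demand $d'$; iterating, one either finds a corner of $d'$ plus $q_1,q_2$ (three points in two demands), or three points in $R(d)\cap R(d')$, or four points covered by at most three connection demands --- each contradicting the injective covering map of \cref{cor:coveringsubsolutions}. Your opening move (applying \cref{cor:coveringsubsolutions} to get the dedicated point $q_d\in R(d)$) matches the paper, but the localization of $q_d$ to a corner needs this counting argument rather than interior-disjointness, which the geometry does not provide.
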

\begin{proof}
The connection demands in $D_{con}$ correspond to the horizontal and vertical paths connecting variable and clause gadgets to their connection points, see \cref{fig:construction:overview}.
We call a connection demand~$d \in D_{con}$ either \emph{horizontal} or \emph{vertical}, depending on whether $d$ leaves a variable/clause gadget horizontally or vertically.
Our construction ensures that no point is covered by more than two connection demands and that both the top-left and bottom-right corner of a connection demand are not contained in any other connection demand.
Moreover, if two connection demands intersect, one is horizontal and one is vertical.

Assume the statement is not true, so there is a $d \in D_{con}$ for which neither its top-left nor its bottom-right corner is in $Q_{con}$.
We consider only the case that $d$ is a horizontal connection demand; the vertical case is symmetrical.
Then there must be two points $q_{1}, q_{2} \in Q_{con} \cap R(d)$ that share the same $x$-coordinate.
Both of them must be covered by a vertical connection demand $d'$, since otherwise we found two points in $Q_{con}$ covered by only one demand from $D_{con}$, a contradiction to \cref{cor:coveringsubsolutions}.
If $Q_{con}$ contains the top-left or bottom-right corner of $R(d')$ (which cannot be covered by any other connection demand), we found three points in $Q_{con}$ ($q_{1}$, $q_{2}$, and the corner) covered by only two demands from $D_{con}$ ($d$ and $d'$), which is again a contradiction to \cref{cor:coveringsubsolutions}.
Thus,~$Q_{con}$ cannot contain a corner of $R(d')$, which implies two points $q_{1}', q_{2}' \in Q_{con} \cap R(d')$ that share the same~$y$-coordinate.
If $\set{q_{1}, q_{2}} \cap \set{q_{1}', q_{2}'} \neq \emptyset$, this intersection has size exactly three.
But then we found three points in $R(d) \cap R(d')$, an area that cannot be intersected by a third connection demand, yielding again a contradiction to \cref{cor:coveringsubsolutions}.
So we must have $\set{q_{1}, q_{2}} \cap \set{q_{1}', q_{2}'} = \emptyset$.
This yields the final contradiction to \cref{cor:coveringsubsolutions}, since we found four points in $Q_{con}$ covered by at most three connection demands from $D_{con}$ ($d$, $d'$, and a potential third horizontal connection demand covering~$q_{1}'$ and $q_{2}'$).
\end{proof}

With this, we are ready to prove the second direction of the reduction:
\begin{lemma}%
\label{lem:phinotsat}
If there is a solution to $(P_{\phi}, D_{\phi})$ of size $\alpha$, then $\phi$ is satisfiable.
\end{lemma}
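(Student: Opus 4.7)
The plan is to construct, from a solution $Q_\phi$ of size $\alpha$, a variable assignment $\sigma$ by reading off a natural boolean choice made inside each variable gadget, and then verify that the \mpath/ in $Q_\phi$ witnessing each $SC$ demand $(S, c_j)$ forces $\sigma$ to satisfy $C_j$.

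As preparation, I start from the partition $Q_\phi = Q_{SX} \cupdot Q_{cl} \cupdot Q_{var} \cupdot Q_{con}$ with tight budgets $\abs{Q_\bullet}=\abs{D_\bullet}$ established above. This lets me apply \cref{prop:triangular_demands} to three kinds of triangular substructures: (a) the global $SX$-instance formed by $S$ and $x_1, \dots, x_n$, (b) the clause-interior instance of each $GC_j$ formed by $c_j$ and the three literal points, and (c) the literal-connector staircase on each side of each variable gadget, anchored at $x_i^+$ respectively $x_i^-$. In every case the corresponding subsolution has matching size, so the proposition pins it to the relevant triangular grid. In conjunction with \cref{prop:connection_demands}, which places each connection demand at exactly one of its two corners, this produces a rigid skeleton in which \mpaths/ inside gadgets are forced onto the local triangular grids, and \mpaths/ using a variable-clause path must enter it at the originating variable gadget (through the right literal connector) and leave it at the target clause gadget (through the right literal point), without lateral shortcuts.

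The main obstacle is the interior of the variable gadget, where the above tools alone do not determine the routing. I claim that the tight budget forces $x_i$ to be \mconnected/ to at most one of $x_i^+, x_i^-$. Vertical separability of the three demands $(x_i, d_i), (x_i^+, d_i), (x_i^-, d_i)$, taken in this order with separating segments of strictly increasing $x$-coordinate, combined with \cref{cor:coveringsubsolutions}, allocates three dedicated points of $Q_{var}$ to these three demands. Because $x_i, d_i, x_i^+, x_i^-$ are pairwise misaligned (recall the slight inward shift of $x_i^\pm$), each of the three demands requires a bend, and the tight budget forces sharing across demands. A brief case analysis of which corner of $R(x_i^+, d_i)$ and $R(x_i^-, d_i)$ hosts the respective bend, together with the admissible positions for the $(x_i, d_i)$-bend, shows that the \mpath/ realising $(x_i, d_i)$ either (i)~uses a corner of $R(x_i, d_i)$ directly (so $x_i$ is not \mconnected/ to either literal point), (ii)~routes through $x_i^+$ via an interior bend of the form $(x(x_i), y(x_i^+))$ or $(x(x_i^+), y(x_i))$ (so $x_i$ is \mconnected/ to $x_i^+$ but not to $x_i^-$, since no corresponding bend for $x_i^-$ is available in $Q_\phi$), or (iii)~routes symmetrically through $x_i^-$. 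Set $\sigma(X_i) = 1$ in case (ii) and $\sigma(X_i) = 0$ in case (iii) (arbitrarily in case (i)). Combining with the literal-connector grid (c) extends the \mconnected/ relation from $x_i$ to every $x_{ik}^{\sigma(X_i)}$ but to none of the opposite-signed connectors.

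Finally, I fix any $SC$ demand $(S, c_j)$ and its \mpath/ $\pi_j$ in $Q_\phi$. Triangular rigidity (a) forces $\pi_j$ to leave $R(D_{SX})$ through some variable point $x_i$; the variable-gadget dichotomy then confines the exit from $GX_i$ to a literal connector of sign $\sigma(X_i)$; \cref{prop:connection_demands} ensures $\pi_j$ traverses an actual variable-clause path end-to-end, which in our construction exists only when the literal of $X_i$ with the corresponding sign appears in $C_j$; and triangular rigidity (b) inside $GC_j$ routes $\pi_j$ to $c_j$ via one of its literal points. Hence $C_j$ contains a literal satisfied by $\sigma(X_i)$, and as $j$ was arbitrary, $\sigma$ satisfies $\phi$, completing the reduction.
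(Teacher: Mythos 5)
Your high-level plan matches the paper's: partition $Q_\phi$ into tight sub-budgets, use \cref{prop:triangular_demands} to rigidify the $SX$ triangle, use \cref{prop:connection_demands} to confine paths to the variable--clause corridors, force a positive/negative dichotomy inside each $GX_i$, and read off a satisfying assignment. The invocations of triangular rigidity for the clause interiors and the literal-connector staircases are extra (the paper explicitly remarks it does not need the clause-interior rigidity), but they do no harm.

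The gap is in the variable-gadget case analysis, which is the crux of the proof. You allocate three dedicated points to $(x_i,d_i)$, $(x_i^+,d_i)$, $(x_i^-,d_i)$ and then declare that ``a brief case analysis of which corner\ldots hosts the respective bend, together with the admissible positions for the $(x_i,d_i)$-bend,'' yields exactly three routing cases. But nothing you have established tells us that the bends for $(x_i^\pm,d_i)$ sit at corners of their rectangles (the bijection of \cref{cor:coveringsubsolutions} only places the dedicated point \emph{somewhere} in the rectangle), nor that the admissible bend positions for $(x_i,d_i)$ are restricted to the corner of $R(x_i,d_i)$ or the two special points $(x(x_i),y(x_i^+))$, $(x(x_i^+),y(x_i))$ (and symmetrically for $x_i^-$). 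A priori the $(x_i,d_i)$-path could bend at $(x(x_i),y_0)$ for some $y_0\in\bigl(y(x_i),y(x_i^+)\bigr)$ and continue via further auxiliary points, or wander through the interior of $R(x_i,d_i)$ in ways your three cases do not cover. What is needed — and what the paper supplies — is an emptiness lemma: because $(x_i,d_i)$ is the \emph{only} non-$SC$ demand whose rectangle covers the region between $x_i$ and $\bigl(x(x_i^-),y(x_i^+)\bigr)$, \cref{cor:coveringsubsolutions} forces all of $Q_{var}$ except the single point $q_i$ to avoid that region. Only with that fact in hand does the bottleneck emerge: the next point after $q_i$ on any \mpath/ out of $x_i$ must have $y\geq y(x_i^+)$ or $x\geq x(x_i^-)$, and this dichotomy — not a claimed but unproven enumeration of bend locations — is what forces $x_i$ to reach only positive or only negative variable-clause paths. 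Your claim in case~(ii) that ``no corresponding bend for $x_i^-$ is available in $Q_\phi$'' and your claim in case~(i) that $x_i$ is \mconnected/ to neither literal point are consequences of this same emptiness fact, which your write-up neither states nor proves. Add that lemma (it is a one-line consequence of the bijection plus the disjointness of $R(D_{var}\setminus D_{XD})$ from the interior of the small rectangle) and your argument closes the gap; without it, the ``brief case analysis'' is not shown to be exhaustive.
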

\begin{proof}
Consider a solution $Q_{\phi}$ to $(P_{\phi}, D_{\phi})$ with $\abs{Q_{\phi}} = \alpha$.
We basically show that $Q_{\phi}$ is a boolean solution, such that it corresponds to a satisfying variable assignment for $\phi$ (since it satisfies all $SC$ demands).
To this end, remember the partitioning of $(P_{\phi}, D_{\phi})$ and $Q_{\phi}$ described above.
As detailed above, each $(P_{\bullet}, D_{\bullet})$ with solution $Q_{\bullet}$ satisfies the preconditions of \cref{cor:coveringsubsolutions}.
We use this to argue about the structure of the different $Q_{\bullet}$.

We start with the structure of $Q_{SX}$.
Observe that $(P_{SX}, D_{SX})$ is a triangular instance of size~$n$ that is satisfied by the solution $Q_{SX}$ of size $n$.
By \cref{prop:triangular_demands}, $Q_{SX}$ lies on the triangular grid of~$(P_{SX}, D_{SX})$.
As an immediate consequence, any \mpath/ that satisfies an $SC$ demand $(S, c_j)$ must go through one of the variable points $x_i$ (note that no $q \in Q_{\phi} \setminus Q_{SX}$ can share an~$x$- or~$y$-coordinate with $S$, since none of the demands~$d \in D_{\overline{SC}} \setminus D_{SX}$ would cover such a $q$).

Next, remember that \cref{prop:connection_demands} restricts $Q_{con}$ to the corners of $D_{con}$.
This implies that a \mpath/ entering $GX_i$ through $x_i$ cannot immediately leave~$GX_i$ but must first go through the inside of $GX_i$.
Otherwise, there would be some $q \in Q_{\phi}$ outside of $GX_i$ either with $x(q) = x(x_i)$ and $y(q) > y(x_i)$ or with $y(q) = y(x_i)$ and $x(q) > x(x_i)$.
This $q$ must be covered by some demand~$d \in D_{\overline{SC}}$, and our construction is such that this can only be done for a $d \in D_{con}$ (see \cref{fig:construction:overview}).
That is $q \in Q_{con}$, and \cref{prop:connection_demands} then implies that $q$ lies in the corner of $d$.
But no corner of a connection demand shares an $x$- or $y$-coordinate with $x_i$ (see \cref{fig:construction:overview}), such that $q$ cannot exist.

A simple consequence of this is that any \mpath/ entering variable~$GX_i$ (through $x_i$) can leave~$GX_i$ only by entering one of the connection demands $(x_{ik}^+, p_{ij}^+)$ or $(x_{ik}^-, p_{ij}^-)$ (see \cref{fig:gadgets:variable}) and must then follow the corresponding variable-clause path to $GC_j$ (see \cref{fig:construction:overview}).

So in summary, any \mpath/ from $S$ to a clause point $c_j$ must enter some variable gadget $GX_i$ at $x_i$, which it then must leave either through a positive or negative variable-clause path to reach a connected clause gadget $GC_j$.
Note that $Q_{cl}$ guarantees that this \mpath/ actually reaches $c_j$ (in fact, \cref{prop:triangular_demands} implies a strict structure on $Q_{cl}$, but we do not need this for our proof).
In the next and final step, we argue about the structure of $Q_{var}$, showing that $x_i$ cannot be connected to both a positive and negative variable-clause path.
Once this is shown, we can identify $Q_{\phi}$ with the variable assignment that assigns $X_i = 1$ if and only if $x_i$ is connected to positive variable-clause paths and $X_i = 0$ otherwise.
This implies that $Q_{\phi}$ is a boolean solution and, since $Q_{\phi}$ satisfies all~$SC$ demands, that $\phi$ is satisfiable.

So consider $Q_{var}$.
For any $i \in \intcc{n}$ there must be a dedicated $q_i \in Q_{var}$ on the left or bottom side of the $XD$ demand $(x_i, d_i) \in D_{var}$.
We consider only the case that $q_i$ lies on the bottom side; the other case is symmetrical.
Since $(x_i, d_i)$ is the only non-$SC$ demand that covers $q_i$, \cref{cor:coveringsubsolutions} implies that the remaining $\abs{Q_{var}} - n$ solution points from $Q_{var}$ lie in $R(D_{var} \setminus D_{XD})$, where $D_{XD} \coloneqq \set{(x_i, d_i) | i \in \intcc{n}}$.
Thus, the inside of the rectangle formed by~$x_i$ and the point~$(x(x_i^-), y(x_i^+))$ is empty (see \cref{fig:gadgets:variable}).
But then, the next point $q'_i$ on the \mpath/ coming from~$x_i$ through $q_i$ must have $y(q'_i) \geq y(x_i^+)$ or $x(q'_i) \geq x(x_i^-)$.
In the former case, $q'_i$ (which is used by any \mpath/ coming from $x_i$) can reach only positive variable-clause paths, while in the latter case only negative variable clause-paths are reachable.
As argued above, this finishes the \lcnamecref{lem:phinotsat}'s statement.
\end{proof}

\section{Sublogarithmic Approximation for \problemname}
\label{sec:sublogarithmic_general_approx:analysis}

In this section, we present the algorithm underlying Theorem~\ref{thm:general_approx}, that is, an algorithm with approximation ratio $O(\sqrt{\log n})$.
The idea is to divide the plane into $s\ll n$ many vertical strips, where $s$ can be thought of as a function that is sublinear in n. 
We call demands between points lying within the same strip \emph{intra-strip demands} and between points in different strips \emph{inter-strip demands}.
In order to satisfy the intra-strip demands, we recursively execute the algorithm (with fixed recursion parameter~$s$) and exploit that demand pairs in different strips are independent. To satisfy the inter-strip demands, we project points to the adjacent strip boundaries. This results in an instance with~$s\ll n$ many distinct $x$ coordinates for which we can apply the algorithm {\sc HorizontalManhattan} presented in the previous section.

\subsection{Algorithm Description}%

Before giving an overview of Algorithm~\ref{alg:generic-algorithm}, we discuss its building blocks and some notation.

\paragraph{Input}
The input consists of an instance $(P,D)$ of \MGMC/ and a parameter $s$ specifying the number of sub-problems (strips) in which the algorithm should divide the instance. As in the previous section, we allow that distinct points in $P$ have the same $x$-coordinate but require distinct~$y$-coordinates. Again, 
we may assume monotone demands. 

\paragraph{Subdivision into Strips}
Let $(P,D)$ be an instance of \MGMC/. Let $s\geq 2$ be a positive integer and let $B$ be a set of $s-1$ distinct vertical lines in the plane. The lines in~$B$ partition the plane into a set $\mathcal{S}$ of $s$ many vertical strips. We call the lines in $B$ \emph{strip boundaries} of $\mathcal{S}$, where each strip has one or two strip boundaries (left and right). We call $\mathcal{S}$ a \emph{strip subdivision of $(P,D)$} if each boundary of $\mathcal{S}$ is disjoint from $P$.

Let $g$ be the number of $x$-groups in $P$. In line~\ref{alg:divide-strips}, we compute a strip division $\mathcal{S}$ such that each strip contains at most $\lceil g/s\rceil$ many $x$-groups. We call such a strip subdivision \emph{balanced}.

\paragraph{Intra-Strip Instances}
We first satisfy intra-strip demand pairs (Algorithm~\ref{alg:generic-algorithm}, Line~\ref{alg:intra-strip-recursion}).
To this end, we invoke the algorithm recursively for each strip $S$ and solve the sub-instance of~$(P,D)$, which satisfies all demands in $D$ lying completely in $S$ (intra-strip instance). The result is added to our solution $Q$.

\paragraph{Inter-Strip Instances}
Consider an instance $(P,D)$ of \MGMC/ and a strip subdivision~$\mathcal{S}$ with strips $S_1,\dots,S_s$ ordered left-to-right. For each input point $p\in P$, there is a unique strip $S\in\mathcal{S}$ containing $p$. If $S\neq S_1$, then we denote by $\pi^{\textnormal{l}}_{\mathcal{S}}(p)$ the horizontal projection of $p$ to the left boundary of $S$. Analogously, if $S\neq S_s$, then we denote by $\pi^{\textnormal{r}}_{\mathcal{S}}(p)$ the horizontal projection of $p$ to the right boundary of $S$. For any subset $P'\subseteq P$, we define $\pi_{\mathcal{S}}^{\textnormal{l}}(P')=\{\,\pi^{\textnormal{l}}_{\mathcal{S}}(p)\mid p\in P'\setminus S_1\,\}$, $\pi_{\mathcal{S}}^{\textnormal{r}}(P')=\{\,\pi^{\textnormal{r}}_{\mathcal{S}}(p)\mid p\in P'\setminus S_s\,\}$, and $\pi_{\mathcal{S}}(P')=\pi_{\mathcal{S}}^{\textnormal{l}}(P')\cup \pi_{\mathcal{S}}^{\textnormal{r}}(P')$ as the set of horizontal projections of the points in $P'$ to left, right, and both strip boundaries of their respective strips in $\mathcal{S}$, respectively.

Define a new instance on the projection $\pi_{\mathcal{S}}(P)$ of $P$ to the adjacent strip boundaries by
\begin{displaymath}
\pi_{\mathcal{S}}(D)=\{\,(\pi_{\mathcal{S}}^{\textnormal{r}}(p),\pi_{\mathcal{S}}^{\textnormal{l}}(q))\mid (p,q)\in D\textnormal{ and $p,q$ lie in different and non-adjacent strips}\},
\end{displaymath}
the ``projected'' demands. (Recall that we assumed that $x(p)<x(q)$.) We drop demands within the same strip as they are handled in the respective intra-strip instance and demands from adjacent strips because are projected to the same boundary and thus automatically satisfied. Call $\pi_{\mathcal{S}}(P,D):=(\pi_{\mathcal{S}}(P),\pi_{\mathcal{S}}(D))$ the \emph{inter-strip instance} corresponding to~$(P,D)$. Note that the input points of this resulting instance share $s-1$ many distinct $x$-coordinates.

Intuitively, the following lemma states that a feasible solution to the inter-strip distance along with the projected points satisfies all the demand-pairs for which not both points are in the same strip. This fact is used in algorithm and analysis.

\begin{lemma}\label{lem:boundary-instances}
  Let $(P,D)$ be an instance with strip subdivision $\mathcal{S}$. If $Q$ is a feasible solution to the inter-strip instance $\pi_{\mathcal{S}}(P,D)$, then $Q\cup\pi_{\mathcal{S}}(P)$ satisfies all demands $(p,q)\in D$ for which $p$ and $q$ lie in different strips.
\end{lemma}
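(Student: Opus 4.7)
The approach is a short case analysis on whether the two strips $S_p, S_q \in \mathcal{S}$ containing $p$ and $q$ are adjacent or not. Without loss of generality I would assume $x(p) < x(q)$ and set $p' \coloneqq \pi^{\textnormal{r}}_{\mathcal{S}}(p)$ and $q' \coloneqq \pi^{\textnormal{l}}_{\mathcal{S}}(q)$; both projections lie in $\pi_{\mathcal{S}}(P)$ because $p \notin S_s$ and $q \notin S_1$, and by construction $p$ is horizontally aligned with $p'$ (both at height $y(p)$) and $q'$ is horizontally aligned with $q$ (both at height $y(q)$). In each case, my goal is to exhibit a sequence of consecutively aligned points in $P \cup Q \cup \pi_{\mathcal{S}}(P)$ from $p$ to $q$ whose total $\ell_1$-length equals $\|p - q\|_1$.

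In the adjacent-strip case, the pair $(p, q)$ is deliberately omitted from $\pi_{\mathcal{S}}(D)$, so $Q$ contributes nothing directly. Here I would observe that $p'$ and $q'$ sit on the \emph{same} vertical boundary (the one shared by $S_p$ and $S_q$) and are therefore vertically aligned. The four-point sequence $p, p', q', q$ is then a valid Manhattan sequence, and its length $(x(p')-x(p)) + |y(p)-y(q)| + (x(q)-x(q'))$ telescopes to $x(q)-x(p)+|y(p)-y(q)| = \|p-q\|_1$. This subcase is what justifies dropping adjacent-strip demands when forming $\pi_{\mathcal{S}}(D)$.

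In the non-adjacent case, the pair $(p', q')$ is by definition an element of $\pi_{\mathcal{S}}(D)$, so the hypothesised feasible $Q$ supplies a Manhattan sequence from $p'$ to $q'$ inside $G_{\pi_{\mathcal{S}}(P) \cup Q}$ of length $\|p' - q'\|_1 = (x(q') - x(p')) + |y(p) - y(q)|$. Prepending the horizontal step $p \to p'$ and appending $q' \to q$ gives the desired sequence from $p$ to $q$, and the horizontal pieces $(x(p')-x(p))$ and $(x(q)-x(q'))$ combine with $\|p'-q'\|_1$ to sum exactly to $\|p-q\|_1$.

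The argument is essentially bookkeeping and carries no real obstacle; the one point that merits care is checking that the adjacent-strip subcase, which $Q$ does \emph{not} address, is still handled by $\pi_{\mathcal{S}}(P)$ alone via the shared-boundary collapse. Once this observation is in place, the two cases together establish the lemma.
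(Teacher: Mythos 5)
Your proposal is correct and follows essentially the same route as the paper: project $p$ to its right boundary and $q$ to its left boundary, use $Q$ to bridge the non-adjacent case, and rely on the shared boundary for the adjacent case. The only difference is that you spell out the telescoping-length arithmetic and the case split explicitly, while the paper compresses this into a single sentence plus a remark about adjacent strips.
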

\begin{proof}
  To see the feasibility of $Q\cup\pi_{\mathcal{S}}(P)$ consider a demand $(p,q)\in D$ where $p$ and $q$ lie in different strips. Observe that we can connect $p$ to $\pi_{\mathcal{S}}^{\textnormal{r}}(p)$ directly, then $\pi_{\mathcal{S}}^{\textnormal{r}}(p)$ to $\pi_{\mathcal{S}}^{\textnormal{l}}(q)$ via $Q$, and finally connect $\pi_{\mathcal{S}}^{\textnormal{l}}(q)$ to $q$ to obtain a Manhattan path from $p$ to $q$. Recall that $\pi_{\mathcal{S}}^{\textnormal{r}}(p)$ and $\pi_{\mathcal{S}}^{\textnormal{l}}(q)$ are automatically connected if $p$ and $q$ lie in adjacent strips.
\end{proof}

Algorithm~\ref{alg:generic-algorithm} provides an overview of our algorithm. 

\begin{algorithm}[t]
\SetKwInOut{Input}{input}
\SetKwInOut{Output}{output}
\Input{ \MGMC/ instance $(P,D)$ and a positive integer $s\geq 2$}
\Output{ Feasible solution $Q\subseteq\mathbb{R}^2$ for $(P,D)$}
\lIf{there is only one $x$-group}{\Return $\emptyset$}
divide the plane $\mathbb{R}^2$ into a collection $\mathcal{S}$ of $s$ vertical strips each of which contains roughly the same number of $x$-groups\label{alg:divide-strips}\;
\ForEach{\textnormal{strip} $S\in\mathcal{S}$}{
   $Q\gets Q\cup\textnormal{VerticalManhattan}(P\cap S,D\cap (S\times S),s)$\label{alg:intra-strip-recursion}\;
}
$(P',D')\gets \pi_{\mathcal{S}}(P,D)$\label{alg:inter-strip-create}\;
$Q\gets Q\cup \textnormal{HorizontalManhattan}(P',D')$\label{alg:inter-strip-connect}\;
\Return{$Q\cup\pi_{\mathcal{S}}(P)$\label{alg:projection}\;}
\caption{VerticalManhattan$(P,D,s)$\label{alg:generic-algorithm}}
\end{algorithm}

\subsection{Analysis} 
\label{sec:sublogarithmic_general_approx:algorithm}

We now proceed to analyze the algorithm. 
Let us repeat the subadditivity property of vertically separable demands shown earlier, stating that the number of vertically separable demands is at least as much as that of the sum of corresponding numbers in the inter-strip instance and all intra-strip instances together, and in turn at least as much as the sum of numbers of vertically separable demands in all inter-strip instances created by the algorithm.

\begin{corollary}\label{cor:strip-separability}
Let $\mathcal{M}$ be the collection of inter-strip instances $(P',D')$ created in line~\ref{alg:inter-strip-create} over all recursive calls of the algorithm \textnormal{RecursiveManhattan} applied to an instance $(P,D)$ of Generalized Manhattan Connections. We have that
\begin{displaymath}
  \vsep(P,D)\geq\sum_{(P',D')\in\mathcal{M}}\vsep(P',D')\,.
\end{displaymath}
\end{corollary}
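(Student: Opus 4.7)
The claim is essentially an iterated application of the single-level subadditivity result of \cref{lem:strip-separability}, so the plan is induction on the depth of the recursion tree of \textnormal{VerticalManhattan}$(P,D,s)$.

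\emph{Base case.} When the input $(P,D)$ has only one $x$-group, the algorithm returns immediately in the first line without ever creating an inter-strip instance. Thus $\mathcal{M} = \emptyset$ for this call and the sum on the right-hand side is $0$, which is trivially at most $\vsep(P,D)$.

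\emph{Inductive step.} Suppose the statement holds for every instance whose recursion tree has smaller depth (or, equivalently, fewer recursive calls). Consider a call on $(P,D)$ that is not a base case. It computes a strip subdivision $\mathcal{S}$ in line~\ref{alg:divide-strips}, recurses on each intra-strip instance $(P \cap S, D \cap (S \times S))$ in line~\ref{alg:intra-strip-recursion}, and then creates the single inter-strip instance $(P',D') = \pi_{\mathcal{S}}(P,D)$ in line~\ref{alg:inter-strip-create}. Let $\mathcal{M}_S$ denote the collection of inter-strip instances produced by the recursive call on strip $S$. Then
\begin{displaymath}
\mathcal{M} = \{(P',D')\} \;\cup\; \bigcupdot_{S \in \mathcal{S}} \mathcal{M}_S.
\end{displaymath}
By \cref{lem:strip-separability} applied at the current call,
\begin{displaymath}
\vsep(P,D) \;\geq\; \vsep(P',D') + \sum_{S \in \mathcal{S}} \vsep(P \cap S, D \cap (S \times S)).
\end{displaymath}
By the induction hypothesis applied to each strip $S$ (whose recursion subtree is strictly smaller), $\vsep(P \cap S, D \cap (S \times S)) \geq \sum_{(P'',D'') \in \mathcal{M}_S} \vsep(P'',D'')$. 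Plugging these bounds into the previous inequality yields
\begin{displaymath}
\vsep(P,D) \;\geq\; \vsep(P',D') + \sum_{S \in \mathcal{S}} \sum_{(P'',D'') \in \mathcal{M}_S} \vsep(P'',D'') \;=\; \sum_{(P'',D'') \in \mathcal{M}} \vsep(P'',D''),
\end{displaymath}
which is the desired inequality.

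\emph{Expected obstacle.} There is no real technical obstacle beyond the single-level bound of \cref{lem:strip-separability}; the only care needed is the bookkeeping that the inter-strip instances in $\mathcal{M}$ partition into the root-level one and those produced by the strip recursions, and that the recursion always terminates (each recursive call strictly reduces the number of $x$-groups, guaranteeing that the induction is well-founded).
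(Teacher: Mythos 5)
Your proof is correct and matches the paper's approach exactly: the paper's entire proof of this corollary is ``Apply Lemma~\ref{lem:strip-separability} recursively,'' and your induction on the recursion tree is precisely the formalization of that one-line argument.
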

\begin{proof}
  Apply Lemma~\ref{lem:strip-separability} recursively.
\end{proof}

\paragraph{Bounding the Cost}
We are now ready to prove that \cref{alg:generic-algorithm} yields the \lcnamecref{sec:sublogarithmic_general_approx}'s main result (\cref{thm:general_approx}).

\begin{proof}[Proof of \cref{thm:general_approx}]
    The feasibility of the solution computed by VerticalManhattan follows by inductively applying Lemma~\ref{lem:boundary-instances} in line \ref{alg:projection} and the fact that HorizontalManhattan actually computes a feasible solution to the inter-strip instance $(P',D')$ in line~\ref{alg:inter-strip-connect}.

    To bound the cost of the solution, consider the collection  $\mathcal{M}$ of inter-strip instances $(P',D')$ created in line~\ref{alg:inter-strip-create} over all recursive calls of the algorithm \textnormal{RecursiveManhattan} applied to input $(P,D,s)$.  By \cref{lem:inter-strip} and Corollary~\ref{cor:strip-separability}, we can bound the total cost of points added to our solution in line~\ref{alg:inter-strip-connect} for solving the inter-strip instances by
\begin{displaymath}
    \sum_{(P',D')\in\mathcal{M}}O(\log(s))\cdot\vsep(P',D')\leq O(\log(s))\cdot \vsep(P,D)\leq O(\log(s))\cdot\opt(P,D)\,.
  \end{displaymath}

To bound the cost of points added in~\cref{alg:projection} via projection, note that there are $O(\log n/\log s)$ levels of recursion. Since each point ends up in precisely one sub-instance upon the recursive calls in~\cref{alg:intra-strip-recursion} each original input point contributes at most two points to the projection computed in ~\cref{alg:projection} at each recursion level. Hence, the overall cost of this step is $O(\log n/\log s)\cdot |P|=O(\log n/\log s)\cdot\opt(P,D)$.

Adding these two cost components gives an overall cost of $O(\log s+\log n/\log s)\cdot\opt(P,D)$. The theorem follows by setting $s=2^{\sqrt{\log n}}$.
\end{proof}

\section{Towards Better Approximation}
\label{sec:complete-bipartite}
In this section, we state some structural properties and show that these would imply a better approximation ratio of $O(\log \log n)$.
Furthermore, we prove that those properties hold for a special class of demands.

\subsection{Structural Properties for $O(\log \log n)$-approximation}

Let $f$ be a lower bound function on $\opt$, that is $f(P,D) \leq O(\opt(P,D))$ for any input~$(P,D)$.
We say that $f$ satisfies {\em subadditivity} if for any 
strip partitioning $\sset$, we have
\[ f(\pi_{\sset}(P,D)) + \sum_{S \in \sset} f(P \cap S, D \cap (S \times S)) \leq f(P,D). \]
For any lower bound function $f$, an \textit{$f$-sparsification algorithm} is an efficient algorithm that, on input $(P,D)$, produces an instance $(\tilde{P},\tilde{D})$ such that (i) any feasible solution $Q$ for $(\tilde{P},\tilde{D})$ is feasible for $(P,D)$ as well, and (ii)  $|\tilde{P}|=O(f(\tilde{P},\tilde{D}))$.
The following theorem is given implicitly in~\cite{chalermsook2019pinning}. The proof is a verbatim adaptation of the original proof and therefore omitted.

\begin{theorem}
Let $f$ be a lower bound function. Assume that $f$ is subadditive and that there exists an $f$-sparsification algorithm. Then there exists an efficient algorithm for \problemname~producing a solution of cost at most $O(\log \log n) f(P,D) \leq O(\log \log n) \opt(P,D)$.
\label{thm: log log n}
\end{theorem}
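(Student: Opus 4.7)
\textbf{Proof plan for \cref{thm: log log n}.} My plan is to adapt the geometric framework of Chalermsook et al.~\cite{chalermsook2019pinning} verbatim: that paper establishes, in the context of the binary search tree problem, exactly this style of meta-theorem---a subadditive lower bound together with a sparsification property yields an $O(\log\log n)$-approximation. The plan is to identify the black boxes used there with the two hypotheses of our theorem and then invoke the same recursive construction.

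More concretely, I would first apply the $f$-sparsification algorithm once to $(P,D)$ to obtain an equivalent instance $(\tilde P,\tilde D)$ with $|\tilde P|=O(f(\tilde P,\tilde D))$; by property~(i) of sparsification it suffices to approximately solve $(\tilde P,\tilde D)$, and since $f(\tilde P,\tilde D)\leq O(\opt(\tilde P,\tilde D))\leq O(\opt(P,D))$ the lower bound of the sparsified instance is within a constant factor of $\opt(P,D)$. I would then run a recursive divide-and-conquer closely paralleling \cref{alg:generic-algorithm}: at each level partition into $s$ vertical strips, solve the inter-strip instance with the $O(\log s)$-approximation of \cref{thm:xcoord_approx}, and recurse on each intra-strip subinstance, re-applying the $f$-sparsification before every recursive call. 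This is the direct analogue of the geometric recursion used in \cite{chalermsook2019pinning}.

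For the cost accounting, I would follow the blueprint of \cref{thm:general_approx} but with the stronger ingredient of sparsification available at every level. By a nested application of subadditivity (an iterated form of \cref{cor:strip-separability}), the total inter-strip cost summed across all recursive calls telescopes to $O(\log s)\cdot f(P,D)$. Because sparsification is re-applied at each level, the point mass per level is only $O(f)$ (rather than $|P|$ as in the weaker $O(\sqrt{\log n})$ analysis), so the projection cost per level is likewise $O(f)$ and telescopes via subadditivity. The improvement over $O(\sqrt{\log n})$ arises because the recursion no longer needs to continue until each strip is a single column: sparsification controls the \emph{point mass} in each subinstance, not merely the column count, and combined with the right choice of $s$ this lets the recursion terminate after only $O(\log\log n)$ effective levels, yielding a total cost of $O(\log\log n)\cdot f(P,D)\leq O(\log\log n)\cdot\opt(P,D)$.

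The main obstacle, and the part of \cite{chalermsook2019pinning} that I expect to simply transplant rather than reprove, is making this depth argument precise: one has to show that each application of sparsification shrinks the subinstance quickly enough so that the recursion tree has depth $O(\log\log n)$, while simultaneously checking that the per-level costs truly telescope through subadditivity rather than accumulating a $\log$-factor. Since the argument in \cite{chalermsook2019pinning} uses subadditivity and sparsification strictly as black boxes---properties that we have explicitly assumed in the hypothesis---the analysis transfers with only notational changes, which is why the authors declare the proof to be a verbatim adaptation.
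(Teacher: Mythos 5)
There is a genuine gap: you propose to solve the inter-strip instance with the $O(\log s)$-approximation of \cref{thm:xcoord_approx} and then recurse only on the intra-strip instances, hoping that re-sparsification plus ``the right choice of $s$'' yields depth $O(\log\log n)$. This does not work, and no choice of $s$ fixes it. If $s$ is fixed, the recursion depth is $\Theta(\log n/\log s)$ and the inter-strip contribution (telescoped through subadditivity) is $\Theta(\log s)\cdot f(P,D)$, so the total is $\Theta(\log s+\log n/\log s)\cdot f(P,D)$, which is minimized at $\Theta(\sqrt{\log n})$ regardless of sparsification. If instead $s$ is taken to be a polynomial in the current column count $m$ (say $s=\sqrt{m}$), the inter-strip term at the top level alone already costs $\Theta(\log s)\cdot f=\Theta(\log m)\cdot f=\Theta(\log n)\cdot f$, which is worse. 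Sparsification does not make the recursion tree shorter for a fixed $s$: it merely bounds the point count by $O(f)$, and $f$ can be $\Theta(n)$.

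The point of sparsification, which your plan misses, is that it makes the inter-strip instance itself recursible. Without sparsification the paper explicitly observes that ``we cannot use the induction hypothesis for the inter-strip instance since $|P'|=|P|$,'' which is exactly why \cref{thm:general_approx} falls back on the $O(\log s)$-approximation for inter-strip and only achieves $O(\sqrt{\log n})$. With an $f$-sparsification algorithm, the inter-strip instance---which lives on $s-1$ columns---can be sparsified to $O(f_{\text{inter}})$ points, so one \emph{recurses} on it with the same algorithm rather than invoking the $O(\log s)$-approximation. Taking $s\approx\sqrt{m}$ (where $m$ is the column count after sparsifying the current instance), both the inter-strip instance and each intra-strip instance have $O(\sqrt{m})$ columns. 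Writing $c(m)$ for the algorithm's cost-to-$f$ ratio on an $m$-column instance, the projection step costs $O(m)=O(f)$ points and subadditivity yields the recurrence $c(m)\leq c(\sqrt{m})+O(1)$, hence $c(m)=O(\log\log m)=O(\log\log n)$. In short, the $O(\log\log n)$ bound arises from recursing on the inter-strip instance with a geometrically-shrinking column count, not from solving it with the $s$-thin approximation; your proof plan, if carried out as stated, would only reproduce the $O(\sqrt{\log n})$ bound.
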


We do not know how to prove the existence of sparsification algorithms for general instances and leave this as an interesting open problem.
However, 
we will show that such an algorithm exists when the demand graph is a complete $k$-partite 
graph.

The existence of a sparsification algorithm can be seen as a stronger property than the previously considered existence of an $O(\log s)$-approximation algorithm when input points have at most~$s$~distinct $x$-coordinates.

\begin{lemma}
If there exists an $f$-sparsification algorithm for an instance~$(P,D)$ with at most~$s$~distinct $x$-coordinates, then there exists an algorithm that returns a solution with cost at most $O(\log s) f(P,D)$.
\end{lemma}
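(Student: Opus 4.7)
The plan is to combine the assumed $f$-sparsification with the natural \emph{vertical} divide-and-conquer algorithm for $s$-thin instances mentioned at the beginning of \cref{sec:log_xcoords_approx} (the one whose analysis charges cost against $|P|$ rather than $\is(P,D)$). First, I would apply the $f$-sparsification algorithm to the input $(P,D)$ to obtain a reduced instance $(\tilde P,\tilde D)$ satisfying $|\tilde P|=O(f(\tilde P,\tilde D))$ and such that every feasible solution for $(\tilde P,\tilde D)$ is also feasible for $(P,D)$. I would assume (as is the case for the concrete sparsification constructions later in the paper) that the sparsification only introduces points on existing $x$-coordinates, so that $\tilde P$ still lies on at most $s$ distinct columns.

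Second, I would run the vertical divide-and-conquer algorithm on $(\tilde P,\tilde D)$: recursively split the $x$-groups of $\tilde P$ into halves, solve each half, and at the split column add at most one auxiliary point per row that is needed to connect an inter-strip demand. Since each input point of $\tilde P$ can be charged for at most one auxiliary point per recursion level and the recursion has depth $O(\log s)$, the total cost is $O(\log s)\cdot |\tilde P|$. By property~(i) of sparsification, the resulting point set is a feasible solution to $(P,D)$.

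Finally, I would assemble the bound
\[
\text{cost}\;\le\; O(\log s)\cdot |\tilde P|\;=\;O(\log s)\cdot f(\tilde P,\tilde D)\;\le\;O(\log s)\cdot f(P,D),
\]
where the first equality is the sparsification guarantee~(ii). The main obstacle is the last inequality $f(\tilde P,\tilde D)\le O(f(P,D))$, which is not explicitly part of the sparsification definition but is the only non-routine ingredient. I expect to discharge it by using that $f$ is a structural lower bound (such as $\vsep$): any certificate witnessing $f(\tilde P,\tilde D)$, e.g.\ a vertically separable family of demand rectangles in $\tilde D$, can be lifted to a certificate in $(P,D)$ of at least the same size via the correspondence between $\tilde D$ and $D$ underlying the sparsification, giving $f(\tilde P,\tilde D)\le f(P,D)$. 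If one only assumes $f\le O(\opt)$, the same conclusion can be reached by observing that the first part of the argument already yields a feasible solution of cost $O(\log s)\cdot f(\tilde P,\tilde D)$, which is a fortiori of cost $O(\log s)\cdot\opt(P,D)$, matching the desired approximation guarantee.
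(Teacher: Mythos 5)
Your proposal matches the paper's proof in essence: apply the sparsification once, then run the vertical halving divide-and-conquer over the $x$-groups, charging the at most $|\tilde P|$ points added per recursion level across the $O(\log s)$ levels; the paper's own write-up silently passes from the guarantee $|\tilde P| = O(f(\tilde P,\tilde D))$ to the bound $O(f(P,D))$, so your explicit treatment of that step via certificate lifting (which works for $\vsep$ and $\is$ since $\tilde P\subseteq P$ and $\tilde D\subseteq D$ in the concrete constructions) is the right and needed patch. One caveat: your fallback argument only yields $O(\log s)\cdot\opt(P,D)$ rather than the stated $O(\log s)\cdot f(P,D)$, and the bound against $f$ itself is what the surrounding subadditivity machinery requires, so the certificate-lifting route is the one to keep.
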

\begin{proof}
The algorithm is as follows. First, apply the $f$-sparsification algorithm on $(P,D)$ to obtain~$(P',D')$. Specifically, we have that $|P'| \leq O(f(P,D))$.
We say that an $x$-coordinate is active if there is an input point on it.
Draw a vertical partitioning line $\ell$ in the middle such that there are~$\lceil s/2 \rceil$ active $x$-coordinates to the left of $\ell$ and the rest to the right.
Let~$\sset$ consist of the two strips~$S_L, S_R$ 
resulting from drawing the line $\ell$. That is the strips $S_L$ and $S_R$ are the regions on the left and right of line $\ell$ respectively.
For each point $p \in P'$, we add a point $p'$ to the solution that lies on $\ell$ and is horizontally aligned with $p$. Notice that this satisfies all demands crossing~$\ell$.
Next, we recurse inside $(P' \cap S_L, D' \cap (S_L \times S_L))$ and $(P' \cap S_R, D' \cap (S_R \times S_R))$. It is easy to see that this algorithm produces a feasible solution.

To analyze the cost, notice that there are $O(\log s)$ levels of recursions. The input instance is called \emph{level-$0$ instance} and creates $2$ sub-instances (called \emph{level-$1$ instances}) and so on until the last level instances.
There are at most $2^j$ level-$j$ instances. Let $\iset_j$ be the collection of all level-$j$ instances. Note that the instances in $\iset_j$ lie on $2^j$ different strips which are pairwise disjoint.
Additionally, each~$(\hat{P},\hat{D}) \in \iset_j$ creates $O(f(\hat{P},\hat{D}))$ points before making recursive calls, so the total cost of level-$j$ recursions is $O(\sum_{(\hat{P}, \hat{D}) \in \iset_j} f(\hat{P},\hat{D}))$. Since these instances $(\hat{P}, \hat{D}) \in \iset_j$ lie on disjoint strips, we have that $O(\sum_{(\hat{P}, \hat{D}) \in \iset_j} f(\hat{P},\hat{D})) \leq O(f(P,D))$.
\end{proof}

The algorithm we use to obtain an $O(\sqrt{\log n})$-approximation in Section~\ref{sec:sublogarithmic_general_approx} can be stated in its generality as follows.

\begin{theorem}
Let $f$ be a lower bound function. Assume that $f$ is subadditive and that there exists an algorithm for \problemname~that costs at most $O(\log s) f(P,D)$ for any $s$-thin instance $(P,D)$.
Then, there exists an efficient $O(\sqrt{\log n})$ approximation algorithm for \problemname.
\end{theorem}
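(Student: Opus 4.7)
The plan is to essentially replay the argument from \cref{sec:sublogarithmic_general_approx:analysis} that establishes \cref{thm:general_approx}, with the bound function $\vsep$ replaced by the abstract lower bound $f$. Since we are given (a) subadditivity of $f$ with respect to strip subdivisions and (b) an $O(\log s) \cdot f(\cdot,\cdot)$-cost algorithm for $s$-thin instances, these are exactly the two ingredients used in the proof of \cref{thm:general_approx}.

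More concretely, I would run \textsc{VerticalManhattan}$(P,D,s)$ (\cref{alg:generic-algorithm}) with parameter $s$ to be fixed later, but replace the subroutine \textsc{HorizontalManhattan} in \cref{alg:inter-strip-connect} by the hypothesized algorithm for $s$-thin instances. Because the projected inter-strip instance $\pi_{\mathcal{S}}(P,D)$ has at most $s$ distinct $x$-coordinates, this subroutine outputs a solution to the inter-strip instance of cost at most $O(\log s) \cdot f(\pi_{\mathcal{S}}(P,D))$. The feasibility of the overall solution is then identical to the proof of \cref{thm:general_approx}: \cref{lem:boundary-instances} guarantees that combining the inter-strip solution with the projections $\pi_{\mathcal{S}}(P)$ handles all demands crossing strip boundaries, and the recursive calls handle the intra-strip demands.

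For the cost analysis, let $\mathcal{M}$ denote the collection of all inter-strip instances created across all recursive calls. The total cost contributed by the \textsc{HorizontalManhattan}-style subroutine summed over $\mathcal{M}$ is at most
\begin{equation*}
    \sum_{(P',D') \in \mathcal{M}} O(\log s) \cdot f(P',D')
    \;\leq\; O(\log s) \cdot f(P,D)
    \;\leq\; O(\log s) \cdot \opt(P,D),
\end{equation*}
where the first inequality is the recursive application of subadditivity of $f$ (exactly as in \cref{cor:strip-separability}, which only used this abstract subadditivity). The projection cost in \cref{alg:projection}, accumulated over the $O(\log n / \log s)$ recursion levels and using the fact that each input point contributes at most two projections per level, is $O(\log n / \log s) \cdot |P| \leq O(\log n / \log s) \cdot \opt(P,D)$. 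Summing, the total cost is $O\bigl(\log s + \log n / \log s\bigr) \cdot \opt(P,D)$, which is $O(\sqrt{\log n}) \cdot \opt(P,D)$ upon choosing $s = 2^{\sqrt{\log n}}$.

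There is really no hard step here: the work was done in \cref{lem:strip-separability,cor:strip-separability} and in the proof of \cref{thm:general_approx}, both of which only used $\vsep$ through its subadditivity property and only used \textsc{HorizontalManhattan} through its $O(\log s) \cdot \vsep$ guarantee on $s$-thin instances. The only thing to be careful about is that the assumed $s$-thin algorithm must bound its cost against $f$ of the same instance it is run on (the inter-strip instance), which is exactly what is postulated. With that observation in place, the proof is a verbatim abstraction of the one for \cref{thm:general_approx}.
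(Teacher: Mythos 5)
Your proposal is correct and matches the paper's intent exactly: the paper gives no separate proof of this theorem, stating only that it is the argument of Section~\ref{sec:sublogarithmic_general_approx} (and Appendix~\ref{sec:sublogarithmic_general_approx:analysis}) "stated in its generality," and that argument indeed uses $\vsep$ only through its subadditivity and the $O(\log s)\cdot\vsep$ guarantee on $s$-thin inter-strip instances, which you replace verbatim by the abstract $f$. The only inherited caveat (present in the paper's own proof of \cref{thm:general_approx} as well) is the step bounding the projection cost by $O(\log n/\log s)\cdot|P|\leq O(\log n/\log s)\cdot\opt(P,D)$, which relies on $|P|=O(\opt(P,D))$ after discarding demand-free points.
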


\subsection{Complete k-partite Demands}

In this section, we consider the special case of instances $(P,D)$ where the demands $D$ form a complete k-partite graph. This means, we can partition $P$ into disjoint subsets $S_1,S_2, ..., S_k$ such that $D=S_i\times S_j$ for $i,j \in \{1,...,k\}$ and $i \neq j$. Again, we allow that points share $x$-coordinates but we assume that all points have distinct $y$-coordinates.

\begin{lemma}\label{lem:sparsify-complete-bipartite}
  Let $(P,D)$ be an instance with complete $k$-partite demands and $k$-partition $S_1 \dot{\cup} S_2 \dot{\cup},..., \dot{\cup}S_k$. Then there exists an efficient algorithm that computes a complete $k$-partite instance $(\tilde{P},\tilde{D})$ such that
  \begin{enumerate}
      \item $\tilde{P}\subseteq P$ and $\tilde{D}$ is the complete k-partite demand set on k-partition $S_1\cap \tilde{P}, S_2\cap \tilde{P}, ... ,S_k \cap \tilde{P}$
      \item any feasible solution $Q$ to $(\tilde{P},\tilde{D})$ is feasible for $(P,D)$ as well, and
      \item $|\tilde{P}|=O(\is(\tilde{P},\tilde{D}))$.
  \end{enumerate}
\end{lemma}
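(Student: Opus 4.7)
The plan is to construct $\tilde P$ by iteratively pruning \emph{redundant} points from $P$. Call $p\in \hat P\cap S_i$ redundant with respect to a current subset $\hat P\subseteq P$ if there exist witnesses $p^-,p^+\in \hat P\cap S_i$ on the same column as $p$ with $y(p^-)<y(p)<y(p^+)$ such that no point of $\hat P\setminus S_i$ on a column different from $x(p)$ has its $y$-coordinate in the open slab $(y(p^-),y(p^+))$. A short case analysis---on whether the first edge of a Manhattan-path from $p^\pm$ to a point $q$ in another part is vertical along the shared column (hence automatically passes through $p$) or horizontal along row $y(p^\pm)$ (in which case we prepend the column edge $p\to p^\pm$, obtaining a path of exactly the right length because $p$ and $p^\pm$ share a column)---shows that removing $p$ preserves property (2): every feasible solution for the smaller instance automatically Manhattan-connects $p$ to every $q$ in another part. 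The slab condition is essential here, as it rules out values of $y(q)$ for which an $L$-shaped shortcut from $p^\pm$ to $q$ could bypass $p$'s column entirely.

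Starting from $\hat P\coloneqq P$, the algorithm removes redundant points until none remain; the terminal set $\tilde P$ is computable in polynomial time since redundancy is testable in polynomial time and each removal strictly decreases $|\hat P|$. Properties (1) and (2) are then clear by construction and by induction using the feasibility argument above. The crux is property (3), the bound $|\tilde P|=O(\is(\tilde P,\tilde D))$. After termination, every non-extremal point $p\in\tilde P\cap S_i$ on a column $c$---that is, one which is neither the topmost nor the bottommost among $\tilde P\cap S_i$ on $c$---admits a \emph{blocker}: a point $q\in\tilde P\setminus S_i$ on a column $c'\neq c$ with $y(q)\in(y(p^-),y(p^+))$, where $p^\pm$ are $p$'s immediate vertical neighbors in $\tilde P\cap S_i$ on $c$. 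I charge $p$ to the vertical boundary segment of the demand rectangle $R(p,q)$ that lies on column $c$. Extremal (topmost/bottommost) points per $(i,c)$ pair are charged separately via the fact that the number of distinct columns active for each part can be bounded by a boundary independent set of size $\is(\tilde P,\tilde D)$.

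The main obstacle is assembling these witnesses into a genuine boundary independent set of size $\Omega(|\tilde P|)$. Conflicts arise only among witnesses on the same column, where the sandwich intervals of consecutive points of a single part overlap only their immediate neighbors; a coloring argument on the induced interval graph---in the spirit of the proof of \cref{lem:inter-strip}---handles each part with $O(1)$ colors. The interleaving of different parts on a single column and the split between left- and right-sides are then absorbed into further constant factors by exploiting the complete $k$-partite structure: the freedom to choose blockers from any of the other $k-1$ parts allows us to reassign blockers so that witness segments contributed by different parts on the same column remain essentially disjoint. The delicate technical point I expect to be the hardest is making this reassignment argument global and ensuring that the final multiplicative constant in the bound is absolute rather than growing with $k$ or with the shape of~$\tilde P$.
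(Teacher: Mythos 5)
There is a genuine gap, and it lies in the pruning rule itself, not just in the admittedly incomplete charging argument for property (3). Your notion of redundancy only ever deletes a point $p\in S_i$ that is vertically sandwiched between two \emph{same-part, same-column} neighbours $p^-,p^+$ with a slab free of foreign points on other columns. The paper's notion is quite different: a demand $(s_i,s_j)$ is \emph{essential} if the vertical side of the whole rectangle $R(s_i,s_j)$ through $s_i$ (spanning from $y(s_i)$ all the way to $y(s_j)$) contains no other point of $P\setminus S_j$, and symmetrically for the side through $s_j$; a point is kept only if it participates in some essential demand. The reductions that make property (3) true are of this ``long-range shadowing'' type and are invisible to your rule. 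Concrete counterexample: take $k$ parts and two columns $c_1<c_2$; on $c_1$ place one point $u_i\in S_i$ at $y=i$ for each $i$, and on $c_2$ one point $v_i\in S_i$ at $y=k+i$. Every $(i,c)$ pair has a single point, so no point has two same-part, same-column neighbours and your algorithm removes nothing: $|\tilde P|=2k$. Yet every non-degenerate demand side on $c_1$ contains the interval $[k,k+1]$ and every side on $c_2$ likewise pairwise overlaps, so $\is(\tilde P,\tilde D)=O(1)$ (same-column demands are degenerate, trivially satisfied, and cannot be counted in $\is$, since otherwise $\is\le\vsep$ would already fail). Property (3) is violated by a factor of $k$. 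The paper's algorithm reduces this instance to the two points $u_k,v_1$, whose unique demand is the only essential one.

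This counterexample also pinpoints the two holes you flagged in your property-(3) sketch as fatal rather than technical: the claim that ``the number of distinct columns active for each part can be bounded by $\is(\tilde P,\tilde D)$'' is false (above it is $2k$ versus $O(1)$), and the cross-part interleaving on a single column cannot in general be repaired by reassigning blockers, because the points that must be discarded need not have any blocker structure at all. Your feasibility argument (property (2)) for the points you \emph{do} remove is fine --- prepending the vertical edge $p\to p^{\pm}$ to a Manhattan path from $p^{\pm}$ to $q$ is exactly the kind of prefix-extension the paper also uses --- but to fix the proof you need to replace the slab condition by the rectangle-side condition. The price is that feasibility becomes slightly subtler (the paper needs a secondary argument: if the left side of $R(s_i,s_j)$ is clean but the right side is blocked, the \emph{lowest} blocking point on the right side yields an essential demand for $s_i$, contradicting redundancy), and property (3) is then obtained by assigning to each surviving point the vertical side of one of its essential demands and showing the resulting interval graph has clique number at most $4$ --- a bound that holds uniformly in $k$ precisely because essentiality forbids two foreign points on a witness side.
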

\begin{proof}
  An \emph{essential demand} is a pair $(s_i,s_j)\in\tilde{D}$ where $s_i \in S_i$ and $s_j \in S_j$ with $x(s_i)\neq x(s_j)$, such that the vertical side of the rectangle $R(s_i,s_j)$ containing $s_i$ does not contain any other point from $P\setminus S_j$, and such that the vertical side of $R(s_i,s_j)$ containing $s_j$ does not contain any other point from $P\setminus S_i$.

  A point is called essential if there exists some essential demand pair involving this point. A point that is not essential is called \emph{redundant}. Removing a redundant point from the instance is safe in the sense that any feasible solution~$Q$ to the reduced instance is also feasible to the original solution. To see this, let~$s_i\in S_i$ be a redundant point and let~$Q$ be a feasible solution to the reduced instance with point set~$P\setminus\{s_i\}$. Let~$s_j\in S_j$ and assume w.l.o.g.\ that~$x(s_i)<x(s_j)$ and~$y(s_i)<y(s_j)$. Since~$s_i$ is redundant there exists a point~$s\neq s_i, s\in P\setminus S_j$ on~$\lambda(s_i,s_j)$ or a point~$s'\neq s_j, s'\in P\setminus S_i$ on~$\rho(s_i,s_j)$. Consider first the case that there exists a point~$s\neq s_i, s\in P \setminus S_j$ on~$\lambda(s_i,s_j)$. Since~$Q$ is feasible for~$P\setminus\{s_i\}$ there exists a Manhattan path in~$(P\setminus\{s_i\})\cup Q$ from~$s$ to~$s_j$. This path can be extended to a Manhattan path from~$s_i$ to~$s_j$ in~$P\cup Q$ by adding~$s_i$ as a prefix. This establishes feasibility of~$Q$ for~$P$. Now consider the second case that there exists a point~$s'\neq s_j, s'\in P \setminus S_i$ on~$\rho(s_i,s_j)$ but no point~$s\neq s_i, s\in P \setminus S_j$ on~$\lambda(s_i,s_j)$. Let~$s''\in P \setminus S_i$ be the lowest point on~$\rho(s_i,s_j)$. Then~$(s_i,s'')$ is an essential demand contradicting the assumption that~$s_i$ was redundant.

  Our algorithm iteratively removes redundant points until we arrive at a subset $\tilde{P}\subseteq P$ in which every point is essential. Thus for each point $p\in\tilde{P}$ there exists an essential demand pair. Assign the vertical side of the corresponding demand rectangle that contains $p$ as \emph{witness} to $p$.

  The intersection graph of the witnesses is an interval graph. We claim that it has clique size at most 4. Assume to the contrary that there exists some point $q$ (not necessarily a demand point) that is contained in 5 witnesses (each of which has an associated demand point). W.l.o.g.\ assume that at least 3 of the associated demand points lie below $q$. Then there exists a witness (with corresponding demand point $s_i$ and demand pair~$(s_i, s_j)$) that contains two points (possibly including~$s_i$) such that they belong to a different partition than~$s_j \in S_j$. This contradicts the fact that the demand pair~$(s_i, s_j)$ corresponding to this witness is essential. This implies that there exists an independent set of witnesses of size at least $|\tilde{P}|/4$ and thus a boundary independent set of size at least $|\tilde{P}|/8$.
\end{proof}

Combining this lemma with Theorem~\ref{thm: log log n} gives the following result.
\begin{corollary}
There is an $O(\log\log n)$-approximation algorithm for \MGMC/ with complete k-partite demands.
\end{corollary}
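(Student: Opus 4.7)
The plan is to invoke \cref{thm: log log n} with the lower bound function $f = \vs$. This reduces the claim to verifying the two hypotheses of that theorem for the complete $k$-partite setting: (a) that $\vs$ is subadditive with respect to strip subdivisions, and (b) that there is a $\vs$-sparsification algorithm on complete $k$-partite instances.

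Hypothesis (a) is immediate from \cref{lem:strip-separability}, which was already proved in full generality (no restriction on the demand graph). For hypothesis (b), I would appeal to \cref{lem:sparsify-complete-bipartite}. That lemma produces, from an instance $(P,D)$ with complete $k$-partite demands, a reduced instance $(\tilde{P},\tilde{D})$ satisfying: (i) $\tilde{D}$ is still complete $k$-partite (on the restricted parts $S_i\cap\tilde{P}$), (ii) any feasible solution to $(\tilde{P},\tilde{D})$ is feasible for $(P,D)$, and (iii) $|\tilde{P}|=O(\is(\tilde{P},\tilde{D}))$. Combining (iii) with the inequality $\is\leq \vs$ from \cref{lem:bound-indep-sets} upgrades the size bound to $|\tilde{P}|=O(\vs(\tilde{P},\tilde{D}))$, which is precisely the property required of a $\vs$-sparsification algorithm.

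With both hypotheses in place, \cref{thm: log log n} produces a solution of cost $O(\log\log n)\cdot\vs(P,D)$, and since $\vs(P,D)\leq 2\opt(P,D)$ by \cref{lem:opt_vs_ir}, this yields the claimed $O(\log\log n)$-approximation.

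The main subtlety to check is that the black-box framework of \cref{thm: log log n} remains applicable throughout its recursion: one must ensure that the subinstances it generates are of a form on which the sparsification step can still be invoked. For strip-based recursion this is automatic, because property (i) of \cref{lem:sparsify-complete-bipartite} is preserved under restriction to a vertical strip — the intra-strip demands between the point sets $S_i\cap P\cap S$ remain complete $k$-partite. This closure under the recursion, together with the black-box nature of the hypotheses in \cref{thm: log log n}, is really the only point that requires verification; once it is in hand, the corollary follows directly.
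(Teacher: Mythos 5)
Your proposal is correct and follows exactly the route the paper intends, only spelled out more explicitly than the paper's one-line "combine Lemma~\ref{lem:sparsify-complete-bipartite} with Theorem~\ref{thm: log log n}." You correctly identify $f=\vs$ as the lower-bound function (so that subadditivity comes from Lemma~\ref{lem:strip-separability}), upgrade the $\is$-bound in Lemma~\ref{lem:sparsify-complete-bipartite} to a $\vs$-bound via $\is\leq\vs$ (Lemma~\ref{lem:bound-indep-sets}), and make the useful observation — left implicit in the paper — that restricting a complete $k$-partite demand graph to a vertical strip again yields a complete $k$-partite demand graph, so the sparsification step can be re-applied throughout the recursion.
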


\section{Geometric Demands}%
\label{sec:disc_demands}

In this section, we consider restricted demands that are defined implicitly by geometry. We present improved approximation algorithms for these instances.
First, we start with some definition.
\begin{definition}
Let $(P,D)$ be an input to \problemname.
The input demand $D$ is {\em uniform} in $P$ if, for every pair $p,q \in P$, we have a demand $(p,q) \in D$.
\end{definition}

The following theorem was proved in Harmon's thesis~\cite{harmon2006new}.
\begin{theorem}
\label{thm: uniform 2apx}
There is a Greedy algorithm that, on input $(P,D)$ with uniform demand, produces a feasible solution of cost at most $2 \cdot \opt(P,D)$.
\end{theorem}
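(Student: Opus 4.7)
The plan is to present the natural sweep-line greedy from Harmon's thesis and bound its cost by the independent rectangle number via a charging argument, then invoke \cref{lem:opt_vs_ir} on the two monotone halves of the uniform demand set to obtain the factor~$2$. Concretely, I split the uniform demand set into its two monotone halves $D^\uparrow=\{(p,q)\in D:\,x(p)<x(q),\,y(p)<y(q)\}$ and $D^\downarrow=\{(p,q)\in D:\,x(p)<x(q),\,y(p)>y(q)\}$, and run the greedy separately on $(P,D^\uparrow)$ and $(P,D^\downarrow)$; the union of the two outputs is clearly feasible for $(P,D)$, so it suffices to show that on each monotone half the greedy has cost at most $\opt(P,D)$.

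The greedy on, say, $(P,D^\uparrow)$ sweeps a horizontal line over $P$ in order of decreasing $y$-coordinate. When the sweep line reaches an input point $p$, for each point $q$ already above the line such that $(p,q)$ is a demand and $R(p,q)$ is not yet Manhattan-satisfied in the current set $P\cup Q$, the algorithm inserts the single auxiliary point $(x(q),y(p))$, which by itself completes a shortest rectilinear path $p\to(x(q),y(p))\to q$. In this way, each inserted auxiliary point is uniquely witnessed by the demand rectangle $R(p,q)$ whose first satisfaction triggered its insertion, and the witness always lies in a corner of that rectangle.

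The main step is to prove that the multiset of witness rectangles is an independent set, which then bounds the number of greedy insertions by $\ir(P,D^\uparrow)$. Suppose for a contradiction that two witnesses $R(p_1,q_1)$ and $R(p_2,q_2)$ conflict, with $y(p_1)<y(p_2)$ so that $(p_1,q_1)$ was processed first. A case analysis on which corner lies in which interior shows that in every configuration the greedy, at the moment it handled $(p_1,q_1)$, had already created an auxiliary point aligned with $p_2$ (from processing $(p_2,q_2)$ and the intermediate demands forced by uniformity). Chasing the resulting rectilinear paths in $P\cup Q$ at that moment yields a Manhattan path from $p_2$ to $q_1$ (or symmetrically from $p_1$ to $q_2$) that, together with segments on the sweep line, already Manhattan-connects $p_1$ and $q_1$, contradicting the fact that $(p_1,q_1)$ needed a witness. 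I expect this corner-by-corner interference argument to be the main obstacle; it is exactly where uniformity of demands is used, since it guarantees that the auxiliary points produced by earlier sweep steps can be combined with \emph{every} pair of previously processed points.

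Once witness independence is established, the greedy cost on $(P,D^\uparrow)$ is at most $\ir(P,D^\uparrow)\leq\vs(P,D^\uparrow)\leq\opt(P,D^\uparrow)\leq\opt(P,D)$ by the monotone form of \cref{lem:opt_vs_ir}, and likewise for $(P,D^\downarrow)$. Summing the two gives a total cost of at most $2\cdot\opt(P,D)$, proving the theorem.
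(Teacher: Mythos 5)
The paper does not prove this theorem in-house; it cites Harmon's thesis and gives only a one-line sketch (a natural sweep-greedy costs at most the independent rectangle number, and $\ir \leq 2\opt$ via \cref{lem:opt_vs_ir}). Your high-level plan — split $D$ into the two monotone halves $D^{\uparrow}, D^{\downarrow}$, run a row-by-row greedy on each, charge each inserted point to a witness rectangle, argue the witnesses are non-conflicting, and then chain $\ir \leq \vs \leq \opt$ — is exactly this outline, and the final arithmetic is correct.

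The difficulty is that the one step that actually \emph{is} Harmon's theorem, namely that the witness rectangles form an independent set, is only asserted, not proved. You yourself flag it as "the main obstacle," but then leave it as a promised case analysis. Three concrete issues: (i)~the order in which the candidate points $q$ are processed within a row is unspecified, yet the witness-independence argument is order-sensitive — if $q_1$ is handled before $p_2$ on row $y(p_1)$, the rectangle $R(p_1,q_1)$ can become a witness even when it conflicts with $R(p_2,q_2)$; fixing an increasing-$x$ order (so that the "nested" point is handled first) is essential, not cosmetic. (ii)~The bookkeeping around the sweep direction is inverted: with a decreasing-$y$ sweep, $y(p_1)<y(p_2)$ means $(p_2,q_2)$ was processed \emph{earlier}, not $(p_1,q_1)$. (iii)~The path-chasing claim — that after adding $(x(p_2), y(p_1))$ the earlier-satisfied demand $(p_2,q_1)$ together with uniformity yields a length-exactly-$\lVert p_1-q_1\rVert_1$ path from $p_1$ to $q_1$ — is the crux, and requires verifying that $p_2$ lies monotonically between $p_1$ and $q_1$ so the concatenated path has no detour; none of this is carried out. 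Until the independence argument is written down in full, the proposal does not yet constitute a proof of the theorem, only a plausible route to one.
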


\begin{definition}
Let $(P,D)$ be an input to \problemname.
The input demand $D$ is called {\em disk demand} if, for each point $p \in P$, there exists a radius $r_p$ such that, $(p,q) \in D$ if and only if~$||p-q||_1 \leq r_p + r_q$.
\end{definition}

In other words, the disk demands are the demands that are defined by an intersection graph of disks.
A special case that is of interest to us is the following:
\begin{definition}
We say that the input demand~$D$ is a \emph{unit-disk demand}, if there exists a real number~$r$ such that~$(p,q) \in D$ if and only if~$|p-q|_1 \leq r$.
\end{definition}

We will use the following lemma in our algorithms to bound the cost.
\begin{lemma}\label{lemma:components}
For any input such that the demand graph~$(P, D)$ is connected,~$\opt(P, D) \geq |P|-1$.
\end{lemma}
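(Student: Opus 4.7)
The plan is to translate the problem into a bipartite graph question. Let $Q$ be any feasible solution and consider the Manhattan graph $H = G_{P \cup Q}$. Since every demand pair $(p,q) \in D$ is Manhattan-connected in $P \cup Q$, the points $p$ and $q$ lie in the same connected component of $H$. Combined with the assumption that the demand graph $(P,D)$ is connected, all $|P|$ input points lie in a single connected component $C$ of $H$.

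Next I would pass to a bipartite encoding of $C$. Let $R^*$ be the set of distinct $y$-coordinates (rows) and $C^*$ the set of distinct $x$-coordinates (columns) used by points of $C$, and form the bipartite graph $B$ on $R^* \cup C^*$ in which each point of $C$ contributes one edge joining its row to its column. Two points of $C$ are adjacent in $H$ exactly when they share a row or a column, i.e., exactly when the corresponding edges of $B$ share an endpoint. Standard line-graph reasoning then gives that $C$ is connected in $H$ if and only if $B$ is connected.

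Now I would count. Since the points of $P$ lie in pairwise distinct rows and pairwise distinct columns, and all of $P$ is contained in $C$, we have $|R^*| \geq |P|$ and $|C^*| \geq |P|$, so $B$ has at least $2|P|$ vertices. Being connected, $B$ must therefore have at least $2|P|-1$ edges. But edges of $B$ are in bijection with points of $C$, so
\[
|P| + |Q| \;\geq\; |C| \;=\; |E(B)| \;\geq\; 2|P| - 1,
\]
which rearranges to $|Q| \geq |P| - 1$. Taking $Q$ to be an optimal solution yields $\opt(P,D) \geq |P| - 1$.

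The only part requiring any care is the equivalence between connectivity of the component $C$ in the Manhattan graph and connectivity of $B$; both directions are immediate once one recalls that edges in $H$ join \emph{all} pairs of points sharing a row or column, so the line-graph correspondence is exact. Everything else is a clean counting step, so I do not foresee a major obstacle.
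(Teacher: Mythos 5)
Your proof is correct, but it takes a different route from the paper's. The paper argues by deletion: in $G_{P\cup X}$ (with $X$ an optimal solution) all of $P$ lies in one connected component because the demand graph is connected; removing the points of $X$ one at a time can increase the number of connected components by at most one per removal; and the end result $G_P$ has $|P|$ singleton components since no two input points share a row or column. Hence $|X|\geq |P|-1$. You instead encode the component $C\supseteq P$ of $G_{P\cup Q}$ as the edge set of a row--column incidence bipartite graph $B$, observe that $G_{P\cup Q}[C]$ is the line graph of $B$ (so $B$ is connected), and use the fact that a connected graph on $|R^*|+|C^*|\geq 2|P|$ vertices has at least $2|P|-1$ edges, together with the injection from points of $C$ to edges of $B$. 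Both arguments are sound and both ultimately use the same two hypotheses (connectivity of the demand graph, and distinctness of rows and columns of $P$). The paper's deletion argument is slightly more flexible in that it only needs $G_P$ to be edgeless; your incidence-graph argument yields the marginally finer conclusion $|P\cup Q|\geq |R^*|+|C^*|-1$, i.e., the solution must populate a connected row--column structure, which is a reusable structural fact. Two minor points of care in your write-up: the map from points to edges of $B$ is injective precisely because two points sharing both a row and a column would coincide, and the line-graph equivalence requires $B$ to have no isolated vertices, which holds since every vertex of $B$ arises from a point of $C$; both are satisfied here, so the proof goes through.
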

\begin{proof}
Let $X$ be the set of points added by an optimal solution. Notice that the Manhattan graph $G_P$ has $|P|$ singleton components. However, the Manhattan graph $G_{X \cup P}$ contains a single component.
Removing one point from any Manhattan graph can only break one component into at most two smaller components; more formally, for any point set $Y \subseteq {\mathbb R}^2$ and $p \in Y$, we have that the number of components in $G_{Y- p}$ and that in $G_{Y}$ differs by at most one.
Therefore, we must have that $|X| \geq |P|-1$.
\end{proof}

\subsection{Algorithms for Unit-Disk Demand}

\newcommand{\grid}{{\mathcal G}}
\newcommand{\gcell}{{\mathcal C}}

Here we present an $O(1)$-approximation algorithm. We assume without loss of generality that $D$ is connected; if not, we could apply the same arguments to each connected component separately.

The algorithm has two steps. First, we draw an arbitrary grid $\grid$ where the space between consecutive horizontal and vertical grid lines is exactly $r/2$; so each grid-cell is an $(r/2)$-by-$(r/2)$ cell. Assume that no input points lie on any cell boundaries.
Let $\gcell$ be a set of grid cells $C$ such that $C$ contains some input point, that is, $C \cap P \neq \emptyset$.
For each grid cell $C \in \gcell$, define the points~$P_C = P \cap C$ inside the cell and~$D_C$ be the induced demands on the points~$P_C$. Observe that for any two points~$p,q$ in the same cell~$C$, we have~$||p-q||_1 \leq r$, so  a demand~$(p,q) \in D_C$. This means that~$(P_C,D_C)$ is in fact an instance with uniform demand.
We make the following simple observations.
\begin{observation}
$\opt(P,D) \geq \sum_{C \in \gcell} \opt(P_C,D_C)$.
\end{observation}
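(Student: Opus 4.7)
The plan is to prove the inequality by restricting an optimal solution $Q^*$ for $(P,D)$ to each grid cell and showing that what remains is still a feasible solution for the corresponding cell sub-instance. First, I would fix an optimal $Q^* \subseteq \R^2$ with $|Q^*| = \opt(P,D)$, and declare the cells of $\grid$ to be half-open (say, closed on the left and bottom, open on the right and top). This turns $\grid$ into a genuine partition of $\R^2$, so every point of $Q^*$ belongs to a unique cell. For each $C \in \gcell$, set $Q^*_C := Q^* \cap C$; the collection $\{Q^*_C\}_{C \in \gcell}$ consists of pairwise disjoint subsets of $Q^*$.

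Next, I would show that $Q^*_C \cup P_C$ witnesses $M$-connectedness for every demand in $D_C$. The key geometric fact is that any Manhattan path between $p$ and $q$ — a rectilinear path of length exactly $||p-q||_1$ — must lie entirely inside the bounding rectangle $R(p,q)$, since any excursion beyond $R(p,q)$ would strictly increase its length. For a demand $(p,q) \in D_C$ we have $p,q \in C$, and because $C$ is an axis-aligned rectangle, $R(p,q) \subseteq C$. Hence the $M$-connecting path guaranteed by the feasibility of $Q^*$ lies inside $C$, and all its intermediate points belong to $(P \cup Q^*) \cap C = P_C \cup Q^*_C$. This establishes that $Q^*_C$ is a feasible solution for $(P_C, D_C)$, so $\opt(P_C, D_C) \leq |Q^*_C|$.

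Summing this inequality over $C \in \gcell$ and using the pairwise disjointness of the $Q^*_C$ gives
\[
\sum_{C \in \gcell} \opt(P_C, D_C) \;\leq\; \sum_{C \in \gcell} |Q^*_C| \;\leq\; |Q^*| \;=\; \opt(P,D),
\]
as claimed. I do not anticipate a serious obstacle: the proof is essentially a locality argument powered by the fact that Manhattan paths stay inside their bounding rectangles. The only minor bookkeeping issue — solution points lying exactly on shared cell boundaries — is cleanly handled by the half-open convention above (or, equivalently, by an infinitesimal perturbation of the grid, which is permissible because no input point lies on a cell boundary by assumption).
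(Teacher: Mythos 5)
Your proof is correct, and it supplies the natural argument behind what the paper states (without proof) as a ``simple observation.'' The one point worth emphasizing is the locality fact you identify: any rectilinear path of total length exactly $||p-q||_1$ must be monotone in both coordinates (any excursion in the wrong direction would have to be undone, increasing the length), hence must stay inside $R(p,q)$. Since $p,q\in P_C$ lie strictly in the interior of $C$ (the paper assumes no input points on cell boundaries), $R(p,q)$ lies strictly inside $C$, so all intermediate points of the connecting path automatically fall in $P_C\cup Q^*_C$; the half-open convention is only needed to make the $Q^*_C$ genuinely disjoint. One small remark: you should also note that $P_C\subseteq P$ inherits the distinct-rows-and-columns property, so $(P_C,D_C)$ is a legitimate \problemname{} instance and $\opt(P_C,D_C)$ is well-defined. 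With that, $\opt(P_C,D_C)\leq|Q^*_C|$ and the disjoint sum gives the claim exactly as you wrote.
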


Next, we define our solution points which consist of the {\em inner points}~$S_1$ connecting the demands~$D_C$ in the same cell~$C$, and the {\em outer points}~$S_2$ connecting the demands~$(p,p') \in D$,~$p \in C$ and~$p' \in C'$ where~$C$ and~$C'$ are two different grid cells.
The inner points~$S_1$ are defined as follows. For each cell~$C \in \gcell$, we take the solution~$S_1(C)$ that results from Theorem~\ref{thm: uniform 2apx}, and then~$S_1 = \bigcup_{C \in \gcell} S_1(C)$.
The outer points~$S_2$ is also a union of~$S_2(C)$.
For each grid cell~$C \in \gcell$, for each point~$p \in C$ that has some demand with some other point in different cell, if a vertical grid line~$L$ (a line drawn at the~$x$-coordinate~$x(L)$) is within distance~$r$ from~$p$, we create a point~$(x(L), y(p))$ on line~$L$ (in other words, this is a projection of point~$p$ onto line~$L$.)
We do the same for each point~$p \in P_C$ and each horizontal grid line~$L$: If~$L$ is within distance~$r$ from~$p$, create a projected point~$(x(p), y(L))$. The set~$S_2(C)$ contains, for each~$p \in P_C$, all such projected points from~$p$.

\begin{lemma}[Feasibility]
The solution $S_1 \cup S_2$ is feasible for $(P,D)$.
\end{lemma}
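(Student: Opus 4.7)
The plan is to verify, demand by demand, that $(p,q) \in D$ is Manhattan-connected in $P \cup S_1 \cup S_2$, splitting into two cases according to whether $p$ and $q$ lie in a common cell of $\grid$.

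First, for the intra-cell case, suppose $p, q \in P_C$ for some $C \in \gcell$. Any two points in a single $(r/2)\times(r/2)$ cell have $\ell_1$-distance at most $r$, so the restricted demand set $D_C$ is in fact the uniform demand set on $P_C$, and the subsolution $S_1(C) \subseteq S_1$ produced by Theorem~\ref{thm: uniform 2apx} is feasible for $(P_C, D_C)$. This directly gives a Manhattan path between $p$ and $q$ in $P_C \cup S_1(C)$.

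For the inter-cell case, suppose $p \in C$, $q \in C'$ with $C \neq C'$. Since the cells differ, at least one grid line strictly separates $p$ and $q$ in a coordinate; without loss of generality there is a vertical grid line $L_v$ with $x$-coordinate strictly between $x(p)$ and $x(q)$ (the symmetric argument uses a horizontal separator). The key estimate is
\[
    |x(p) - x(L_v)| \;\leq\; |x(p) - x(q)| \;\leq\; \|p-q\|_1 \;\leq\; r,
\]
and symmetrically for $q$. Since $p$ participates in the inter-cell demand $(p,q)$, the construction of $S_2(C)$ includes the projection $p' := (x(L_v), y(p))$ onto $L_v$, and analogously $q' := (x(L_v), y(q)) \in S_2(C')$. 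The four points $p, p', q', q$ form a path in $P \cup S_2$ whose consecutive pairs are horizontally, vertically, then horizontally aligned, and whose total length equals
\[
    |x(p) - x(L_v)| + |y(p) - y(q)| + |x(L_v) - x(q)| = |x(p) - x(q)| + |y(p) - y(q)| = \|p-q\|_1,
\]
where the first equality uses that $L_v$ lies strictly between $x(p)$ and $x(q)$. This certifies Manhattan-connectedness.

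There is no real obstacle; the only subtlety to point out explicitly is that different cells always differ in at least one axis, which guarantees the existence of the separating grid line $L_v$ (or its horizontal counterpart), and that the choice of grid spacing $r/2$ is used only to ensure that the separating line is within $\ell_1$-distance $r$ of both endpoints so that both projections were in fact inserted into $S_2$. Together with the intra-cell case, this covers all demands and establishes feasibility.
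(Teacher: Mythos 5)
Your proof is correct and follows the same route as the paper's: intra-cell demands are handled by the greedy uniform-demand solutions $S_1(C)$, and inter-cell demands by projecting both endpoints onto a separating grid line $L$, which lies within $\ell_1$-distance $r$ of both endpoints so that both projections were indeed added to $S_2$. You simply spell out the length calculation and the distance bound that the paper leaves implicit; there is no substantive difference.
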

\begin{proof}
Consider any point $(p,q) \in D$. If $p$ and $q$ are in the same cell, we are done since $S_1$ would connect them.
If $p \in C$ and $q \in C'$ for $C,C' \in \gcell$, then consider any grid line $L$ that separates them, so we have that the distance between $L$ and $p$ is at most $r$ and so is the distance between $L$ and $q$.
Let $p'$ and $q'$ be projections of $p$ and $q$ on $L$. We have a Manhattan path $p \rightarrow p' \rightarrow q' \rightarrow q$.
\end{proof}

\begin{lemma}[Cost]
$|S_1| + |S_2| \leq O(\opt(P,D))$
\end{lemma}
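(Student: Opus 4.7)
The plan is to bound the two contributions $\abs{S_1}$ and $\abs{S_2}$ separately. Both will be shown to be $O(\opt(P,D))$.

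For $\abs{S_1}$, I would invoke the per-cell decomposition directly. By Theorem~\ref{thm: uniform 2apx} applied to each uniform sub-instance $(P_C, D_C)$, we have $\abs{S_1(C)} \le 2\cdot \opt(P_C,D_C)$. Summing over all cells $C \in \gcell$ and using the observation that $\opt(P,D) \ge \sum_{C \in \gcell} \opt(P_C,D_C)$ (which holds because points inside different cells are vertex-disjoint, so restricting any feasible solution to each cell yields a feasible solution to that cell's uniform sub-instance) gives $\abs{S_1} \le 2\opt(P,D)$.

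For $\abs{S_2}$, the key geometric fact is that since the grid spacing is $r/2$, any point~$p$ in the plane lies within horizontal distance $r$ of $O(1)$ vertical grid lines and within vertical distance $r$ of $O(1)$ horizontal grid lines. Hence each $p\in P_C$ that takes part in some demand contributes at most a constant number of projected points to $S_2(C)$. Letting $P^\star \subseteq P$ denote the set of points that participate in at least one demand of $D$, we obtain $\abs{S_2} \le O(\abs{P^\star})$.

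It then remains to show $\abs{P^\star} \le O(\opt(P,D))$, and this is where I would use Lemma~\ref{lemma:components}. The main obstacle is that the lemma requires a connected demand graph, so I would first decompose $(P^\star,D)$ into its connected components. For each component with point set $P_i$, we have $\abs{P_i}\ge 2$ (otherwise the point would not be in $P^\star$), and applying Lemma~\ref{lemma:components} to the induced sub-instance yields $\opt(P_i, D\cap (P_i\times P_i)) \ge \abs{P_i}-1 \ge \abs{P_i}/2$. Since an optimal solution to $(P,D)$ restricts to a feasible solution in each component's bounding region (the components are demand-disjoint, and any Manhattan path for an intra-component demand uses only auxiliary points usable as a solution for that component in isolation), we get
\begin{equation*}
\opt(P,D) \ge \sum_i \opt(P_i, D\cap (P_i\times P_i)) \ge \tfrac{1}{2}\sum_i \abs{P_i} = \tfrac{1}{2}\abs{P^\star}.
\end{equation*}
Combining, $\abs{S_1}+\abs{S_2} \le 2\opt(P,D) + O(\abs{P^\star}) \le O(\opt(P,D))$, as required. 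The main subtlety to verify carefully is the decomposition argument in the last step, namely that an optimal solution's contribution can indeed be split across connected components of the demand graph without double-counting.
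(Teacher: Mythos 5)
Your bounds on $\abs{S_1}$ and on $\abs{S_2}$ are exactly the paper's: the paper likewise gets $\abs{S_1}\le 2\sum_{C}\opt(P_C,D_C)\le 2\opt(P,D)$ from Theorem~\ref{thm: uniform 2apx}, and bounds $\abs{S_2}$ by a constant (eight) times the number of input points. The divergence, and the one genuine gap, is your final step. You reduce everything to the inequality $\opt(P,D)\ge\sum_i\opt\bigl(P_i,\,D\cap(P_i\times P_i)\bigr)$ over the connected components $P_i$ of the demand graph, and justify it by saying the optimal solution ``restricts'' to each component without double-counting. That restriction argument does not go through as stated: a single auxiliary point of the global optimum can lie on Manhattan paths serving demands from several different components, and, conversely, input points belonging to \emph{other} components can act as free Steiner points for component $i$. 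So the point sets ``used'' by the various components are neither disjoint nor individually comparable to $\opt(P_i,D_i)$, and the displayed inequality is precisely the unproven claim rather than a routine verification. (To be fair, the paper sidesteps the same issue by simply declaring at the start of the subsection that $D$ may be assumed connected, so that Lemma~\ref{lemma:components} applies directly and $\abs{P}\le\opt(P,D)+1$.)

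The clean repair is not to split $\opt$ across components but to generalize the component-counting proof of Lemma~\ref{lemma:components}. Let $X$ be any feasible solution, let $P^\star$ be the demand-participating points, and let $k$ be the number of components of the demand graph. Every demand component must lie inside a single connected component of $G_{P\cup X}$, so the components of $G_{P\cup X}$ number at most $k+\abs{P\setminus P^\star}+\abs{X}$ (at most $k$ meeting $P^\star$, at most $\abs{P\setminus P^\star}$ meeting only isolated input points, at most $\abs{X}$ containing no input point). Since deleting one point from a Manhattan graph increases the number of components by at most one, deleting the points of $X$ one by one gives $\abs{P}\le k+\abs{P\setminus P^\star}+2\abs{X}$, hence $\abs{P^\star}\le k+2\abs{X}\le\abs{P^\star}/2+2\abs{X}$ because every demand component has at least two points. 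This yields $\abs{P^\star}\le 4\abs{X}=O(\opt(P,D))$ directly, which is all your argument needs.
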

\begin{proof}
From Theorem~\ref{thm: uniform 2apx}, we have that $|S_1| \leq 2 \sum_{C \in \gcell} \opt(P_C,D_C) \leq 2 \opt(P,D)$.
Next, for points in $S_2$, notice that each point $p \in P$ creates at most $8$ projected points, so we have:
\[|S_2| = \sum_{C \in \gcell} |S_2(C)| \leq 8 \sum_{C \in \gcell} |P_C| \leq 8 |P| \leq 8 \opt(P,D) + 8 \]
The last inequality follows from Lemma~\ref{lemma:components}.
\end{proof}

\subsection{General Disks}

Now we show that any instance with disc demand admits an $O(\log \Delta)$-approximation where  each~$r_p \in [1,\Delta]$.
As in the proof of the unit disk case, we may assume that $G$ is connected and every input-point in $P$ is part of some demand.

Let $h,k$ be minimal and maximal integers respectively such that $2^h \leq r_p \leq 2^{k-1}$ for all~\mbox{$p\in P$}.
Fix grids $\Gamma_{h} \subseteq \Gamma_{h+1} \subseteq \ldots \subseteq\Gamma_{k}$ of sizes $2^{h-1},2^{h},\ldots,2^{k-1}$ respectively. For every input point, we place 8 points by projecting a point to the two closest grid lines in each direction. We do this for each of the grids adding in total $8n(k-h+1) = O(n \log \Delta)$ many points.

Additionally, we satisfy the demands within a cell of $\Gamma_{h}$ as before by applying the greedy algorithm for each cell separately, using at most $2\cdot\opt$ points. The statement follows, as $\opt \geq \frac{n}{2}$ by Lemma~\ref{lemma:components}.

\subsection{Demands with Two Radii}

In this section, we study the disk demands where there are only two different radii $r_A < r_B$.

\begin{definition}
Let $(P,D)$ be an input to \problemname. The input demand~$D$ is called {\em two-disk demand} if the point set $P$ is partitioned into two subsets $A \cup B$, such that every point $p \in A$ has~$r_p = r_A$; otherwise, $r_p = r_B$ for each $p \in B$.
\end{definition}

Another demand we are interested in is the following:
\begin{definition}
Let $(P,D)$ be an input to \problemname. The input demand~$D$ is called {\em complete bipartite demand} if $P$ is partitioned into two subsets $A \cup B$ such that $(p,q) \in D$ if and only if $p \in A$ and $q \in B$.
\end{definition}

\begin{lemma}
	If there is an~$\alpha$-approximation for the complete bipartite case, then there is an~$O(\alpha)$-approximation for two-disk demands. This implies an $O(\log \log n)$-approximation for two-disk demand.
\end{lemma}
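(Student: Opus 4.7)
The plan is to split the demand set $D$ into the three natural parts $D_{AA}, D_{BB}, D_{AB}$ according to whether each endpoint of a demand lies in $A$, in $B$, or one in each, and attack each part with a tailored algorithm. The restrictions $(A, D_{AA})$ and $(B, D_{BB})$ are unit-disk demands with radii $2r_A$ and $2r_B$ respectively, so the unit-disk algorithm of Section~\ref{sec:disc_demands} produces sub-solutions of sizes $O(\opt(A, D_{AA}))$ and $O(\opt(B, D_{BB}))$. Each of these values is $O(\opt(P,D))$: any global solution $Q^*$, augmented with the points of $B$ (respectively $A$) treated as additional solution points, is feasible for the restricted instance, and Lemma~\ref{lemma:components} gives $|P|\le 2\opt(P,D)$ after dropping singleton components of the demand graph, which is safe.

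To handle $D_{AB}$, I would overlay an axis-aligned grid of side length $s = (r_A+r_B)/2$, slightly perturbed so that no input point lies on a grid line. Inside any single cell $C$, all pairs $p\in A\cap C$ and $q\in B\cap C$ satisfy $\|p-q\|_1\le 2s = r_A+r_B$, so the demands of $D_{AB}$ restricted to $C$ form precisely a complete bipartite instance on $(A\cap C, B\cap C)$; feed each such cell-wise sub-instance to the assumed $\alpha$-approximation. Cross-cell demands are handled by projection: for every input point, add its horizontal and vertical projections onto the $O(1)$ grid lines within perpendicular distance $2s$. A cross-cell demand $(p,q)\in D_{AB}$ is then satisfied through any grid line $L$ that strictly separates the cells of $p$ and $q$ in a coordinate where they differ, yielding a Manhattan path $p\to p'\to q'\to q$ of length exactly $\|p-q\|_1$, since $L$ lies within perpendicular distance $\|p-q\|_1\le 2s$ of both points so the required projections $p',q'\in L$ are present. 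This projection phase adds at most $O(|P|)=O(\opt)$ points.

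The crux of the analysis is the subadditivity claim $\sum_C \opt_C\le\opt(P,D)$, where $\opt_C$ is the optimum of the bipartite sub-instance inside cell $C$. This follows from the elementary fact that any Manhattan path between two points stays inside their bounding rectangle; hence for any global optimum $Q^*$ the restriction $Q^*\cap C$ is already feasible for the sub-instance in $C$, and summing over the disjoint cells yields the bound. Combining the three phases gives $O(\alpha)\cdot\opt(P,D)$ solution points overall, which proves the reduction; plugging in $\alpha=O(\log\log n)$ from Theorem~\ref{thm:complete-bipartite} at $k=2$ then gives the claimed $O(\log\log n)$-approximation for two-disk demands. I expect the subadditivity step, together with verifying that the projected points really do discharge every cross-cell demand at the right distance, to be the main bookkeeping obstacle; the individual ingredients (unit-disk approximation, grid projection, Lemma~\ref{lemma:components}) are then all standard.
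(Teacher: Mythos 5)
Your proof is correct and follows essentially the same route as the paper: split $D$ into the $AA$, $BB$, and $AB$ parts, dispatch the first two with the unit-disk algorithm, and reduce the $AB$ part to cell-wise complete-bipartite instances via a grid plus $O(1)$ boundary projections, using Lemma~\ref{lemma:components} and the disjointness of cells to charge everything to $\opt(P,D)$. The only cosmetic difference is that you overlay a fresh grid of side $(r_A+r_B)/2$ rather than reusing the $r_B/2$ grid from the $B$-instance, and you spell out the cell-wise subadditivity bound explicitly, which the paper leaves implicit.
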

\begin{proof}
Let $(P,D)$ be an input such that $D$ is a two-disk demand without isolated vertices, i.e., every point in~$P$ is part of at least one demand.
Define~$D_A = D\cap (A\times A)$ and~$D_B = D\cap (B\times B)$.
Using the algorithm for unit-disk demands, we have an algorithm satisfying~$D_A$ and~$D_B$ using at most~$O(\opt(A,D_A))$ and~$O(\opt(B,D_B))$ points respectively, which is together $O(\opt(P,D))$.

Consider the grid $\grid$ of size~$r_B/2$ used to satisfy~$D_B$. For each point~$p$ in~$A$ consider the cell~$C \in \gcell$ that contains that point. As before,~$(p,q)\in D$ only for those points $q \in C'$ such that $C$ and $C'$ are in close proximity (that is, constant grid cells away from each other.)
By projecting~$p$ in each direction to the two closest grid lines, we use $O(\opt(P,D))$ points and satisfy all demand pairs between $A$ and $B$ for which both points lie in different grid cells.
Within a cell, the remaining demands are complete bipartite.
Hence, if we have an~$O(\alpha)$-approximation for the complete bipartite case we will have a~$O(\alpha)$-approximation for this case.
\end{proof}

\end{document}